\newtheorem{thm}{Theorem}
\newtheorem{example}[thm]{Example}
\newtheorem{lemma}[thm]{Lemma}
\newtheorem{remark}[thm]{Remark}
\newtheorem{proposition}[thm]{Proposition}
\newtheorem{corollary}[thm]{Corollary}
\newtheorem{assumption}[thm]{Assumption}
\newcommand{\R}{\mathbb{R}}
\newcommand{\eqdef}{:=}
\newcommand{\coloneq}{:=}
\newcommand{\OO}{\mathcal{O}}
\newcommand{\ind}[1]{\mathds{1}\{#1\}}
\newcommand{\Ltwo}{\mathcal{L}_2}
\newcommand{\E}{\mathbb{E}} % Expectation
\newcommand{\Exp}{\E} % Expectation (bis)
\newcommand{\pr}{\mathbb{P}}
\newcommand{\var}{\operatorname{var}}
\newcommand{\cov}{\operatorname{cov}}
\newcommand{\cvprob}{\stackrel{P}{\rightarrow}}
\DeclareMathOperator{\wnv}{\textrm{WNV}}
\DeclareMathOperator{\mse}{\textrm{MSE}}
\newcommand{\real}{\mathbb{R}}
\newcommand{\N}{\mathcal{N}}
\newcommand{\rmd}{\mathrm{d}}
\newcommand{\sumest}{\hat{f}} % sum estimator (generic)
\newcommand{\sumestS}{\sumest^{\mathrm{S}}} % sum est (simple)
\newcommand{\sumestC}{\sumest^{\mathrm{C}}} % sum est (cycle)
\newcommand{\Urk}{U_{r, k}}
\newcommand{\Urll}{U_{r, l}}
\newcommand{\URk}{U_{R,k}}
\newcommand{\URll}{U_{R,l}}
\newcommand{\UrkS}{\Urk^{\mathrm{S}}}
\newcommand{\UrlS}{\Urll^{\mathrm{S}}}
\newcommand{\UrkC}{\Urk^{\mathrm{C}}}
\newcommand{\UrlC}{\Urll^{\mathrm{C}}}
\newcommand{\URkS}{\URk^{\mathrm{S}}}
\newcommand{\Geom}{\mathrm{Geometric}}
\newcommand{\Geomp}{\mathrm{Geometric}(p)}
\newcommand{\nablasumest}{\widehat{\nabla_\theta f}}
\newcommand{\nablasumestS}{\nablasumest^{\mathrm{S}}}
\newcommand{\nablasumestC}{\nablasumest^{\mathrm{C}}}
\newcommand{\Wrk}{W_{r, k}}
\newcommand{\Wrl}{W_{r, l}}
\newcommand{\WRk}{W_{R,k}}
\newcommand{\WrkS}{\Wrk^{\mathrm{S}}}
\newcommand{\WrlS}{\Wrl^{\mathrm{S}}}
\newcommand{\WrkC}{\Wrk^{\mathrm{C}}}
\newcommand{\WrlC}{\Wrl^{\mathrm{C}}}
\newcommand{\ess}{\textrm{ESS}}
\title[Unbiased Monte Carlo estimation of smooth functions]{Towards a turnkey
approach to unbiased Monte Carlo estimation of smooth functions of expectations}
\author{Nicolas Chopin}
\address{ENSAE, Institut Polytechnique de Paris, France}
\email{nicolas.chopin@ensae.fr}
\author{Francesca R. Crucinio} 
\address{ESOMAS, Università di Torino, Italy}
\email{francescaromana.crucinio@unito.it}
\author{Sumeetpal S. Singh}
\address{NIASRA and School of Mathematics and Applied Statistics, University of Wollongong, Australia}
\email{sumeetpals@uow.edu.au}
\begin{document}

\maketitle

\begin{abstract}
    Given a smooth function $f$, we develop a general approach to turn Monte
    Carlo samples with expectation $m$ into an unbiased estimate of $f(m)$.
    Specifically, we develop estimators that are based on randomly truncating
    the Taylor series expansion of $f$ and estimating the coefficients of the
    truncated series. We derive their properties and propose a strategy to set
    their tuning parameters -- which depend on $m$ -- automatically, with a
    view to make the whole approach simple to use. We develop our methods for
    the specific functions $f(x)=\log x$ and $f(x)=1/x$, as they arise in
    several statistical applications such as maximum likelihood estimation of
    latent variable models and Bayesian inference for un-normalised models.
    Detailed numerical studies are performed for a range of applications to
    determine how competitive and reliable the proposed approach is.
\end{abstract}

\section{Introduction}

\subsection{General setting}

We consider the problem of unbiasedly estimating $f(m)$, for a smooth function
$f$, using independent and identically distributed (IID) random variables $X_1,
	X_2, \ldots$ with mean
$m\in\R$.
In particular, using the Taylor expansion of $f$ around a certain $x_0\neq 0$,
\begin{equation}\label{eq:taylor}
	f(m) = \sum_{k=0}^{\infty} \frac{f^{(k)}(x_0)}{k!} \left( m - x_0 \right)^k
	= \sum_{k=0}^{\infty} \gamma_k \left( \frac{m}{x_0} - 1 \right)^k,\quad
	\gamma_k \eqdef  \frac{f^{(k)}(x_0)}{k!} x_0^k,
\end{equation}
we will employ the so-called `sum' estimator
\citep{mcleish, glynn2014exact,rhee2015unbiased}
\begin{equation} \label{eq:def_sumest}
	\sumest \eqdef
	\sum_{k=0}^R \frac{\gamma_k U_{R,k}}{\pr(R\geq k)} =
	\sum_{k=0}^\infty \frac{\ind{R\geq k}}{\pr(R\geq k)}\gamma_k \URk,
\end{equation}
where $R$ is an integer-valued random variable, 
\[\URk=\sum_{r=k}^\infty \Urk \ind{R=r},\]
and $\Urk$ is an unbiased estimate of $(m/x_0 - 1)^k$, as clarified in the following
assumption.

\begin{assumption} \label{ass:condition_urk}
Let $R$ be a random variable taking values in $\{0, 1, \ldots\}$  
such that $\pr(R\geq k)>0$ for all $k$. Let $\{\Urk: k,r\in \mathbb{N}; k \leq r\}$ be a collection of random variables,
which are independent of $R$, and 
\begin{equation}\label{eq:condition_urk}
	\E\left[ \Urk \right] = \left( \frac{m}{x_0} - 1\right)^k.%, \quad 0 < k \leq r.
\end{equation}
Let $U_{0,0}=U_{1,0}= \cdots = 1.$ Furthermore, we assume that $a_k\eqdef
\sup_{r\geq k}\E[ | \Urk | ] < \infty$ for $k >0$, and
$\sum_{k=0}^{\infty}|\gamma_{k}|a_{k}<\infty$.
\end{assumption}

These conditions ensure that $\sumest$ is an unbiased estimator, 
see the following lemma and Appendix~\ref{app:integrability} for a proof. 
\begin{lemma}
Under Assumption~\ref{ass:condition_urk}, $\sumest$ is integrable and $\E\left[\sumest
\right]=f(m)$ in \eqref{eq:taylor}.
\label{lem:integrablefhat}
\end{lemma}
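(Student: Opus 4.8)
The plan is to establish absolute integrability of the series defining $\sumest$ via a Tonelli argument, and then use Fubini to pass the expectation through the sum term by term, evaluating each term with the unbiasedness property \eqref{eq:condition_urk}.

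First I would rewrite the generic summand. Since $\URk=\sum_{r=k}^\infty\Urk\ind{R=r}$ and the events $\{R=r\}$ are disjoint, on $\{R\geq k\}$ one has $\ind{R\geq k}\URk=\sum_{r\geq k}\Urk\ind{R=r}$, hence $|\ind{R\geq k}\URk|=\sum_{r\geq k}|\Urk|\ind{R=r}$. Using that $\{\Urk\}$ is independent of $R$, that $\sup_{r\geq k}\E[|\Urk|]=a_k$, and that $\sum_{r\geq k}\pr(R=r)=\pr(R\geq k)$, I get for $k\geq 1$
\[
\E\!\left[\,\left|\frac{\ind{R\geq k}}{\pr(R\geq k)}\,\gamma_k\,\URk\right|\,\right]
=\frac{|\gamma_k|}{\pr(R\geq k)}\sum_{r\geq k}\pr(R=r)\,\E[|\Urk|]
\leq |\gamma_k|\,a_k .
\]
For $k=0$ the summand equals $\gamma_0$ deterministically, because $\pr(R\geq 0)=1$ and $U_{\cdot,0}\equiv 1$, so its absolute expectation is just $|\gamma_0|<\infty$. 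Summing over $k$ and invoking the hypothesis $\sum_{k}|\gamma_k|a_k<\infty$, Tonelli yields $\E\big[\sum_{k\geq 0}|\,\cdot\,|\big]<\infty$; thus the series converges absolutely almost surely, $|\sumest|\leq\sum_{k\geq0}|\,\cdot\,|$ gives $\sumest\in L^1$, and Fubini licenses $\E[\sumest]=\sum_{k\geq 0}\E[\ind{R\geq k}\gamma_k\URk/\pr(R\geq k)]$. Repeating the displayed computation with $\E[\Urk]=(m/x_0-1)^k$ in place of the bound gives $\E[\ind{R\geq k}\gamma_k\URk/\pr(R\geq k)]=\gamma_k(m/x_0-1)^k$ for $k\geq1$ (and $\gamma_0$ for $k=0$), so summing over $k$ and applying \eqref{eq:taylor} produces $\E[\sumest]=f(m)$.

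There is no deep obstacle here: the whole content is the Fubini/Tonelli interchange, and the bound $|\gamma_k|a_k$ on the $L^1$ norm of the $k$-th term is exactly what Assumption~\ref{ass:condition_urk} is designed to furnish. The only points that need care are (i) correctly combining the independence of $\{\Urk\}$ from $R$ with the disjointness of $\{R=r\}$ to reduce $\E[|\URk|\ind{R\geq k}]$ to $\sum_{r\geq k}\pr(R=r)\,\E[|\Urk|]$, and (ii) handling the $k=0$ term separately, since $a_0$ is not defined and $U_{\cdot,0}$ is the constant $1$.
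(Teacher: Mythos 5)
Your proof is correct and follows essentially the same route as the paper's: bound the $L^1$ norm of the $k$-th summand by $|\gamma_k|a_k$ using independence of $\{\Urk\}$ from $R$ together with the disjointness of the events $\{R=r\}$, invoke Tonelli to get absolute summability and hence integrability of $\sumest$, then Fubini to pass $\E$ through the sum and evaluate term by term via \eqref{eq:condition_urk}. The only cosmetic difference is that you explicitly single out the $k=0$ term (since $a_0$ is not defined in Assumption~\ref{ass:condition_urk}), a point the paper glosses over by implicitly treating $a_0=1$; this is a small but welcome tightening, not a change of method.
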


A candidate for $\Urk$  is the following construction of $\Urk$ using
$X_1,\ldots, X_k$ only,
\begin{equation}\Urk = \prod_{i=1}^k (X_i/x_0 - 1), \quad 0 < k \leq r.
	\label{eq:simple}
\end{equation} Thus, $\Urk $ unbiasedly estimates $(m/x_0 - 1)^k$ using the
first $k$ univariates only. In Section \ref{subsec:cycle}, we propose a
lower-variance estimator based on the whole sample $X_1,\ldots, X_r$. Including
the factor $x_0^k$ in the definition of $\gamma_k$ above will give simpler
expressions for functions of interests, such as $f(m)=\log m$ and $f(m)=1/m$. We
shall also optimise this expansion point and henceforth, assume $x_0\neq 0$.

\subsection{Motivation and contributions}\label{subsec:motiv}

We are particularly interested in the functions $f(x) = \log x$ and $f(x) =
1/x$. The former arises in applications such as maximum likelihood estimation
where unbiased estimates of the likelihood are available, for example using
importance sampling. Using the sum estimator framework, these can then be
converted into unbiased estimates of the log-likelihood and its gradient. The
function $f(x)=1/x$ arises in Bayesian and frequentist estimation of models
with un-normalised likelihoods, that is, when the likelihood involves an
intractable normalising constant.

Various unbiased estimators have already been proposed in the literature for
such functions; see Section \ref{sub:related} for a review. However, in our
experience, their performance tends to be very sensitive to `tuning' parameters
such as the expansion point $x_0$ and the distribution of $R$. This is
problematic for users, in particular in situations where one needs to routinely
compute unbiased estimates of $f(m)$ at many different $m$ values. This is the
case, for instance, in stochastic gradient methods, where at each iteration one
needs to estimate the gradient of a different term; see Section
\ref{sec:gradients} for details.

Thus, our motivation is to propose a general `turnkey' approach to unbiased
estimation which is both robust and easy to automate. Specifically, our contributions are as follows. We derive (Section~\ref{sec:general_prop}) some results on the general properties of the randomly truncated Taylor-series based estimator, and use them to inform the choice of the
truncation random variable $R$. We also discuss in detail the properties of the
standard unbiased estimate \eqref{eq:simple} of $(m/x_0 - 1)^k$ and propose an
alternative with potentially much smaller variance. In particular, our new estimate uses the
\emph{whole} collection of samples $X_1,\ldots, X_r$ to construct the unbiased
estimate $\Urk$ of $(m/x_0 - 1)^k$ for $0<k\leq r$. We show the overall variance
of the sum estimator will then decrease at a rate $\OO \left((\log \E[R] )/ \E[R]\right)$. In contrast, the
variance of the sum estimate with the standard choice \eqref{eq:simple} for the
coefficient estimate does not diminish to zero as $\E[R]$ increases. This difference can be significant for some applications.

The second part of our study (Section~\ref{sec:special_funcs}) focuses on the
specific functions $f(x)=\log x$ and $f(x)=1/x$. For these functions, we discuss
how to choose tuning parameters, $x_0$ and the distribution of $R$, to ensure
that the resulting estimator has finite -- and hopefully small -- variance.

Section~\ref{sec:numerical} presents numerical experiments that showcase the
proposed approach in applications such as maximum likelihood estimation in
latent variable models and Bayesian inference in un-normalised models.
Section~\ref{sec:conclusion} concludes with a discussion of the merits and scope
of the proposed approach. All the proofs appear in the Appendix.

\subsection{Related work\label{sub:related}}

Sum estimators and related debiasing techniques have gained traction since their
introduction in \cite{mcleish, glynn2014exact,rhee2015unbiased}; for example, see
\cite{jacob2015nonnegative},
\cite{lyne2015russian},
\cite{Agapiou2018} and
\cite{vihola2018unbiased}.

Some of these papers also consider single term estimators,
which rely on a single randomly selected term to unbiasedly estimate the
infinite series. In our particular context, single term estimators are less
appealing since a single term of a Taylor expansion of $f(m)$ cannot give a
good approximation. Therefore, we do not consider them further. We refer to,
for example, \cite{vihola2018unbiased} for a more thorough discussion on the
connection between single term estimators and sum estimators, and furthermore,
their connection to multi-level Monte Carlo \citep{GilesMLMC} in the context of
stochastic differential equations.

The idea of applying a debiasing technique to a Taylor expansion appeared before
in \cite{blanchet2015unbiasedtaylor}, but without much discussion on how to make
the corresponding estimators applicable in practical scenarios. Our work helps fill this gap.

Debiasing techniques specific to the log function have been developed in
\cite{Luo2020SUMO} and \cite{shi2021multilevel}. However, as is shown in our
numerical experiments, they generate estimates that may either have an infinite
variance, or they have a CPU cost with infinite variance.
In contrast, the CPU cost of our methods have a finite variance, and if properly
tuned, the variance of the estimates are also finite.

The function $f(x)=\exp(x)$, which also has  many important applications, is outside
of the scope of this paper for reasons we explain later. We refer the reader to \cite{papaspiliopoulos2011monte} for a thorough treatment of debiasing techniques for this function.

\subsection{Notations and standing assumptions}

The random variables $X_1,X_2, \ldots$ are independent and identically
distributed (IID) with expected value $m$ and variance $\sigma^2$.

The expansion point $x_0\in \R\setminus\{0\}$ in~\eqref{eq:taylor} is fixed initially, 
and thus so are the $\gamma_k$'s, but we discuss how to choose $x_0$, for a
particular $m$, later on in Section~\ref{sec:special_funcs}.

Our results will be articulated when $R$ has the geometric distribution.
We denote by $\Geom(p)$ the geometric distribution such that $\pr(R=k) =
	(1-p)^k p $, $k=0, 1, \ldots$. Note that $\E[R] = (1-p)/p$ and $\pr(R\geq k) =(1-p)^k$.

We say that an estimator has $\OO(R^k)$ complexity when its computation involves
$\OO(r^k)$ operations, conditional on $R=r$. The estimators considered in
this paper have all complexity at least $\OO(R)$, since they require generating $R$
random variables. 

The notation $\sumest\in\Ltwo$ means $\E[\sumest^2]<\infty$, which implies that $\sumest$ has finite variance and is integrable,
$\E[|\sumest|]<\infty$.

\section{General properties of Taylor-based sum estimators}
\label{sec:general_prop}

\subsection{Variance of the general sum estimator}

In this section, we consider the general estimator~\eqref{eq:def_sumest},
without considering a particular choice for the $U_{r,k}$'s, beyond
Assumption \ref{ass:condition_urk}. Using this assumption,
\begin{equation} \E\left[ \sumest \middle| R=r\right]
	= \sum_{k=0}^\infty \gamma_k\frac{\ind{r \geq k}}{\pr(R\geq k)} \E[\Urk | R=r]
	=\sum_{k=0}^\infty \gamma_k\frac{\ind{r \geq k}}{\pr(R\geq k)} \left(
	\frac{m}{x_0} -1 \right)^k.
\label{eq:exp_fhat_given_r}
\end{equation}
%and therefore $\sumest$ is an unbiased estimate of $f(m) = \sum_k \gamma_k (m/x_0 -1)^k$ 
See Appendix~\ref{app:exp_fhat_given_r} for the verification.

One may decompose its variance, assuming it is finite, through the law of total variance,
\begin{equation}\label{eq:decomp_var}
	\var[\sumest] = \var\left[ \E[\sumest|R ] \right]
	+ \E\left[ \var[\sumest | R ] \right]
\end{equation}
where the first term can be expressed as
\begin{multline}\label{eq:varE}
	\var\left[ \E[\sumest|R ] \right] =
	\sum_{k=0}^\infty \gamma_k^2  \left(  \frac{m}{x_0} - 1 \right)^{2k}
	\left(\frac{1}{\pr(R\geq k)} - 1 \right) \\
	+ 2 \sum_{k=0}^\infty\sum_{l=k+1}^\infty \gamma_k \gamma_l \left(\frac{m}{x_0} - 1\right)^{k+l}
	\left(\frac{1}{\pr(R\geq k)} - 1 \right).
\end{multline}
See Appendix~\ref{app:varE} for the verification.

The term $\var\left[ \E[\sumest|R ] \right]$ has a simple interpretation, namely, it is the variance of $\sumest$ in the ideal case where the $X_i$'s and thus the $U_{r,k}$'s have zero variance, that is,
$U_{r,k}=(m/x_0 - 1)^k$. Hence, this variance component is only due to the random truncation of the Taylor expansion.

To further analyse $\var\left[ \E[\sumest|R ] \right]$, we must make some assumptions on the $\gamma_k$'s.

\begin{assumption} \label{hyp:f_and_x0}
	For the function $f:\mathbb{R}\rightarrow\mathbb{R}$, argument $m$ and expansion point $x_0$, the series expansion in \eqref{eq:taylor} holds. Furthermore,
  (i) for some $c>0$, $|\gamma_k|\leq c$  for all $k=0, 1, \ldots$;
  (ii) $|\gamma_k|\exp(\alpha k)\to \infty$ for all $\alpha>0$;
  (iii) $\beta_0\eqdef|m/x_0 - 1| < 1$.
\end{assumption}
Note that the series~\eqref{eq:taylor} cannot converge if we impose (i) and
(ii) without (iii).

\begin{proposition}
\label{prop:var_exp}
Under Assumptions~\ref{ass:condition_urk} and~\ref{hyp:f_and_x0}, if
$R\sim\Geomp$ with $p\in(0, 1-\beta_0^2)$, then
\begin{equation} \label{eq:hyp_varE}
	\left(f(m)-\gamma_0\right)^2\left ( \frac{p}{1-p}  \right) \leq
	\var\left[ \E[\sumest | R] \right]
	% \var\left[ \tilde{f} \right] \leq
	\leq \frac{c^2\beta_0^2}{(1-\beta_0)^2} \times\frac{p}{1-p-\beta_0^2}.
\end{equation}
\end{proposition}
See Appendix~\ref{app:proof_prop_var_exp} for a proof. Assumption~\ref{ass:condition_urk} is needed to ensure integrability and thus that $\E[\sumest | R]$ is well defined. The condition $\beta_0<1$ of Assumption~\ref{hyp:f_and_x0} and corresponding restriction on $p$ ensures the variance is finite.  

The result above tells us that, under Assumption~\ref{hyp:f_and_x0}, one can take
$R\sim\Geom(p)$ and  $\E[\sumest | R]$  will have finite variance,
provided $p<1-\beta_0^2$.
In particular, the conditional variance behaves like $\OO\left( \{ \E[R]\}^{-1}
\right)$ since $\E[R]=1/p-1\rightarrow \infty$ as $p\rightarrow 0$. We thus
recover the standard Monte Carlo error rate, which is a variance that decreases
as the inverse of the sample size.

\begin{remark}
  We focus on the case $R\sim\Geomp$, since it is an easy-to-sample-from
  distribution; however, the upper bound established above holds for any $R$
  such that $\pr(R\geq k) \geq (1-p)^k$. Conversely, one can see by inspecting the
  proof that a distribution for $R$ with sub-geometric tails should typically
  lead to an infinite variance.
\end{remark}

Of course, what remains to be seen is to what extent we can recover this ideal
behaviour when we use the practical estimates, namely $U_{r,k}$'s, for $(m/x_0 -
1)^k$. This will be the topic of the next two sections.

\begin{remark}
	Both our functions of interest fulfil Assumption~\ref{hyp:f_and_x0}:
	$\gamma_k=(-1)^{k-1}/k$ for $f(x)=\log x$, and $\gamma_k=(-1)^{k-1}$ for $f(x)=1/x$. A function that does \emph{not} fulfil Assumption~\ref{hyp:f_and_x0} is
  $f(x)=\exp(x)$ as $\gamma_k$ decreases at a super-geometric rate. This makes
  it possible to consider for $R$ some distribution with tails that are lighter
  than geometric, for example, the Poisson distribution, which is why we do not
  considered it further, but again see \cite{papaspiliopoulos2011monte} for a
  thorough treatment of this function.
\end{remark}

Before moving on, we state the expression for the second term of
decomposition~\eqref{eq:decomp_var}, which we will study in the
following sections. See Appendix~\ref{app:derivation_variance} for the derivation.
\begin{multline}\label{eq:Evar}
	\E\left[ \var[\sumest|R ] \right] =
	\sum_{k=0}^\infty \frac{\gamma_k^2}{\pr(R\geq k)^2}
	\left\{ \sum_{r=k}^\infty \pr(R=r) \var(\Urk)  \right\}\\
	+ 2 \sum_{k=0}^{\infty}\sum_{l=k+1}^{\infty} \frac{\gamma_k
		\gamma_l}{\pr(R\geq k)\pr(R\geq l)}
	\left\{  \sum_{r=l}^{\infty}\pr(R=r)\cov\left( U_{r,k}, U_{r,l}  \right)\right\}.
\end{multline}
\subsection{Simple estimator}
\label{subsec:simple}

A standard choice \citep{papaspiliopoulos2011monte, blanchet2015unbiased} for
$U_{r,k}$ is $U_{r,k}=U_{r,k}^\mathrm{S}$, where
\begin{equation*}
	U_{r,k}^\mathrm{S}\eqdef \prod_{i=1}^k \left(\frac{X_i}{x_0} - 1\right)
\end{equation*}
which actually does not depend on $r$. We denote by $\sumestS$ the corresponding
sum estimator.

Computing $\sumestS$ requires generating $R$ random variables $X_i$,
but only the first $k$ are used to compute the $k-$th term.
It is easy to see that for any $0 \leq k\leq l$,
\begin{equation*}
	\cov\left(\UrkS, \UrlS\right)
	=\left(\frac{\sigma^2}{x_0^2}+\left(\frac{m}{x_0}-1\right)^2\right)^k
	\left(\frac{m}{x_0} - 1\right)^{l-k}
	-\left(\frac{m}{x_0}-1\right)^{k+l}.
	% \left( \frac{m}{x_0} -1 \right)^{l-k}
\end{equation*}

This leads to the following condition to ensure the finiteness of the variance
of $\sumestS$.

\begin{proposition} \label{prop:exp_var_convergence_simple}
    In addition to Assumption~\ref{hyp:f_and_x0}, 
    suppose $\beta<1$ where
	\begin{equation}\label{eq:hyp_simple}
		\beta^2 \eqdef \frac{\sigma^2}{x_0^2} + {\left( \frac{m}{x_0} -1 \right)^2}= \frac{\sigma^2}{x_0^2} + {\beta_0^2},
	\end{equation}
	and $R\sim\Geomp$ with $p\in(0, 1-\beta^2)$. Then, $\sumestS\in\Ltwo$ and
	\begin{equation}
		\E\left[\var(\sumestS | R)\right] \leq
		\frac{c^2(1+\beta)}{1-\beta}\times\frac{(1-p)}{1-p-\beta^2}. 
		\label{eq:prop_exp_var_convergence_simple}
	\end{equation}
\end{proposition}
See Appendix~\ref{app:proof_prop_exp_var_convergence_simple} for a proof.

Note that, given $m$ and $\sigma^2$, it is always possible to find $x_0$ such
that $\beta < 1$; in fact the minimiser of $\beta$ with respect to $x_0$
verifies this condition.

What is unsatisfactory about this result is that the right hand side does not
vanish as $p\rightarrow 0$. This is not because the bound is not tight, as shown
by the following example whose derivation is given in Appendix~\ref{app:counter}. Our numerical experiments suggest that this is true for other functions of
interest.

\begin{example}
\label{ex:simple_var}
Consider $f(x) =1/x$, and thus $\gamma_k = (-1)^{k-1}$, and assume that we want to estimate $f(m)$ for $m>0$. Then, we have
\begin{align*}
  \lim_{p\to0}\E\left[ \var[\sumestS|R ] \right] &
                                                   = \left(\frac{2}{m}-\frac{1}{x_0}\right)\left\{  \frac{1}{1-\beta^2}
	-\frac{1}{1-\beta_0^2}\right\}.
\end{align*}
%If $\sigma^2=0$, in which case $U_{r,k}=(m/x_0 - 1)^k$, we have $\beta^2 = \beta_0^2$ and $\lim_{p\to0}\E\left[ \var[\sumestS|R ] \right] = 0$. 
If $\sigma^2>0$, then $\beta^2>\beta_0^2$, and since we need to select $x_0>m/2$ to guarantee that the Taylor sum~\eqref{eq:taylor} converges, we have $\lim_{p\to0}\E\left[ \var[\sumestS|R ] \right]>0$, showing that the variance of the simple estimator does not converge to 0 in general.
\end{example}

\subsection{Cycling estimator}
\label{subsec:cycle}

One obvious defect of the simple estimator is that computing $\sumestS$
requires generating $R$ random variables, but $\URkS$ estimates $(m/x_0 - 1)^k$ using only the first
$k$ of these univariates. Utilising the the full sample set $\{X_1,\ldots,X_R\}$ to
estimate $(m/x_0 - 1)^k$ could give more precise estimates.

We propose the following estimate obtained by averaging over circular
shifts of the data, henceforth referred to as the `cycling' estimate,
\begin{equation}
\label{eq:urkc}
	\UrkC \eqdef \frac 1 r \left[ \prod_{i=1}^k \left( \frac{X_i}{x_0} - 1 \right)
		+ \prod_{i=2}^{k+1} \left( \frac{X_i}{x_0} - 1 \right)
		+ \ldots
		+ \left(\frac{X_r}{x_0} - 1\right)\prod_{i=1}^{k-1}\left( \frac{X_i}{x_0} - 1 \right)
		\right].
\end{equation}
Although it averages $r$ non-independent estimates of $(m/x_0 - 1)^k$, it should
have lower variance than the simple estimator $\UrkS$. We denote by $\sumestC$
the corresponding sum estimator.

The following proposition shows that the cycling estimate, indeed, typically has lower
variance than the simple estimate.

\begin{proposition}
	\label{prop:exp_var_convergence_cycle}
	Assume the same conditions as
	Proposition~\ref{prop:exp_var_convergence_simple}, namely,
	Assumption~\ref{hyp:f_and_x0}, condition~\eqref{eq:hyp_simple}, and
  $R\sim\Geomp$ with $p\in(0, 1-\beta^2)$. Then $\sumestC\in\Ltwo$ and
	\begin{align*}
		\Exp\left[\var(\sumestC|R)\right]\leq
		\frac{4c^2 p \log(1/p)}{(1-p-\beta^2)^2}\left[\beta^2+2\frac{\beta_0(1-p)}{(1-\beta_0)^2}\right].
	\end{align*}
\end{proposition}
See Appendix~\ref{app:proof_prop_exp_var_convergence_cycle} for a proof.

In other words, by taking $\E[R]\rightarrow +\infty$, the cycling estimate recovers, up to a log
factor, the standard Monte Carlo rate, i.e. the variance is $\OO\left((\log\{\E[R]\}) / \E[R]\right)$.
Thus, both terms of the variance in \eqref{eq:decomp_var} now decrease as the mean of $R$ increases.

A direct corollary, through Chebyshev's inequality, of the above result is that
the cycling estimator converges in probability.

\begin{corollary}
\label{cor:cycle}
	Under the same assumptions as Proposition~\ref{prop:exp_var_convergence_cycle},
	$\sumestC \cvprob f(m) $ as $p\rightarrow 0$.
\end{corollary}

\subsection{Minimum variance unbiased estimator}

Instead of the cycling estimator, one could consider the following U-statistic:
\begin{equation}
  \label{eq:mvue}
	\Urk^{\mathrm{MVUE}} \eqdef
	\frac{1}{\binom{r}{k}}
	\sum_{1\leq i_1<\dots<i_k\leq r} \left(\frac{X_{i_1}}{x_0}-1\right)\cdot\dots\cdot \left(\frac{X_{i_k}}{x_0}-1\right).
\end{equation}
which has minimimum variance among unbiased estimators of its expectation, see
\cite{Hoeffding1948a}. In spite of the combinatorial number of terms, it is
still possible to compute the estimator in $\OO(r^2)$ operations, using a
recursion, as pointed out by Sam Power in a personal communication, see
Appendix~\ref{app:mvue}. However, repeating our variance analysis for the MVUE
estimator appears challenging, indeed due to the combinatorial number of terms
in \eqref{eq:mvue}, and is left for further study. For the moment, we present a
numerical comparison in Appendix~\ref{app:mvue}, which happens to show cycling
and MVUE performing similarly.

\subsection{Relative merits of the simple and cycling estimators}
\label{sec:merits}
The cycling estimator makes it possible to estimate $f(m)$ as accurately as
required, and enjoys, up to a log factor, the same error rate as Monte Carlo
estimates. On the other hand, it is more expensive to compute than the simple
estimator. Although both estimators require generating $R$ random variables, computing the $U_{R,k}$'s ($0\leq k \leq R$) for the cycling estimator requires $\OO(R^2)$ operations.

That said, in many practical scenarios, the $X_i$'s can be expensive to generate. For instance, in our third numerical experiment we run a sequential Monte Carlo (SMC) sampler to get a
single $X_i$, which takes about 15 seconds. The CPU cost of the $\OO(R^2)$ operations for the cycling estimate is negligible compared to the CPU cost of generating the $X_i$'s,  unless $R$ is
extremely large. In this scenario, the cycling estimate is clearly preferable owing to the potentially large variance reductions achieved.

Another way to see the relative merits of both estimators is to look at their
robustness relative to tuning parameters. This point will be discussed at
length in Section \ref{sec:special_funcs}. However, if our figure of merit
is `variance times CPU cost', which is a common figure of merit in
Monte Carlo, then the cycling estimator $\sumestC$ is clearly preferable since this
figure of merit diverges at a log rate as $p\rightarrow 0$, whereas it diverges at a
linear rate for the simple estimator.

\subsection{Unbiased estimators of gradients}\label{sec:gradients}

In this section, we further develop the simple and cycling estimators to
unbiasedly estimate $\nabla_\theta f\left(m(\theta)\right)$, the gradient of
$f\left(m(\theta)\right)$ when $m=m(\theta)$ depends on a parameter $\theta$. A
typical application that utilises unbiased gradient estimates is maximum
likelihood estimation; see Section~\ref{sec:numerical} for an example.

Assuming $m(\theta)$ is differentiable, using~\eqref{eq:taylor},
\begin{equation}\label{eq:taylor_gradient_theta}
	\nabla_\theta f(m(\theta))
  = \sum_{k=1}^{\infty} k\gamma_k \left( \frac{m(\theta)}{x_0} - 1 \right)^{k-1}\frac{\nabla_\theta m(\theta)}{x_0},
\end{equation}
and the corresponding `sum' estimator, that is similar to \eqref{eq:def_sumest} in construction, is 
\begin{equation} \label{eq:def_sumest_gradient}
	\nablasumest \eqdef
	%\sum_{k=1}^R \frac{k \gamma_k }{x_0}\frac{W_{R,k}}{\pr(R\geq k)} =
	\sum_{k=1}^\infty \frac{\ind{R\geq k}}{\pr(R\geq k)} \WRk\frac{k\gamma_k}{x_0},
\end{equation} where $R$ is a random variable taking values in $\{1, \ldots\}$; $\WRk=\sum_{r=k}^\infty \Wrk \ind{R=r}$; $\Wrk$ is an unbiased estimate of $(m(\theta)/x_0 - 1)^{k-1}\nabla_\theta m(\theta)$ for $r \geq k$; and each and every $\Wrk$ is independent of $R$.

As was stated for $\sumest$ in Lemma \ref{lem:integrablefhat}, the gradient
estimate can be shown to be integrable and unbiased when the estimates $\Wrk$
satisfy similar conditions of Assumption~\ref{ass:condition_urk}, that is, with
univariates $\Urk$ therein replaced with $\Wrk$, and $\gamma_k$ replaced with
$k\gamma_k$.  See Appendix~\ref{app:integrablefhat} for the proof of
unbiasedness of $\sumest$ which can be easily adapted to the present gradient
case.

We can define a simple estimator for~\eqref{eq:taylor_gradient_theta},
denoted by $\nablasumestS$, by replacing $\Wrk$
in~\eqref{eq:def_sumest_gradient} with
\begin{equation}
\label{eq:W_simple}
	\WrkS\eqdef G_k \prod_{i=1}^{k-1} \left(\frac{X_i}{x_0} - 1\right)
\end{equation}
where $(X_1, G_1), (X_2, G_2), \dots,(X_r, G_r)$ is an IID collection of bivariates such that
$\E[X_i]=m(\theta)$ and $\E[G_i] = \nabla_\theta m(\theta)$.

Instead of $\WrkS$, the cycling estimator, denoted by $\nablasumestC$, uses $\WrkC$ which is an average over `circular shifts' of $\WrkS$,
\begin{equation*}
	\WrkC \eqdef \frac 1 r \Bigg[ G_k\prod_{i=1}^{k-1} \left( \frac{X_i}{x_0} - 1 \right)
		+ G_{k+1}\prod_{i=2}^{k} \left( \frac{X_i}{x_0} - 1 \right) 
		+ \ldots
		+ G_{k-1}\left(\frac{X_r}{x_0} - 1\right)\prod_{i=1}^{k-2}\left( \frac{X_i}{x_0} - 1 \right)
		\Bigg].
\end{equation*}

As was done in the previous section, we can analyse the variance of both
estimators using the law of total variance~\eqref{eq:decomp_var}. To ensure that
the second term of the variance decomposition, namely $\E\left[\var(\nablasumest |
  R)\right]$, is finite we further need to assume that the second moment of
$(X_1, G_1)$ is finite.

\begin{proposition}
  \label{prop:var_gradients}
At the point $\theta$, assume $(X_1, G_1)$ has finite second moments and let
$\sigma^2(\theta) = \var(X_1) $. Furthermore, let
Assumption~\ref{hyp:f_and_x0} hold at $m=m(\theta)$; assume
	\begin{equation*}
		\beta^2 \eqdef \frac{\sigma^2(\theta)}{x_0^2} + \left( \frac{m(\theta)}{x_0} -1 \right)^2 < 1;
	\end{equation*}
	and let $R\sim\Geomp$ with $p\in(0, 1-\beta^2)$. Then both $\nablasumestS$ and $\nablasumestC$ are in $\Ltwo$. In addition,
\begin{enumerate}
\item $\var\left[ \E[\nablasumest|R ]\right] =\OO(p)$ as $p\to 0$, for $\nablasumest \in \{\nablasumestS,\nablasumestC\}$.
\item As $p\to 0$, $\E\left[\var(\nablasumestS | R)\right]$ converges to a finite, possibly non-zero, value.
\item $\E\left[\var(\nablasumestC | R)\right] =\OO\left(p \log(1/p) \right)$ as $p\to 0$.
\end{enumerate}
\end{proposition}
The proof is in Appendix~\ref{app:var_gradients}.

The first statement implies the variance of the conditional expectation decreases as $\OO(1/\E[R])$ as $\E[R]\rightarrow\infty$. The second statement confirms the variance of the simple estimate does not diminish to zero as $\E[R]$ increases; this is verified for $f(x)=\log (x).$ In contrast, the final statement confirms the cycle method's variance diminishes at rate $\log(\E[R])/\E[R]$.

\section{Tuning parameters}\label{sec:special_funcs}

\subsection{Tuning $x_0$\label{subsec:x0}}
We now discuss how to set the tuning parameters $x_0$ and $p$ in practice. For
the sake of simplicity, we focus on the cycling estimator, $\sumestC$, and we
assume $m>0$, but adapting our discussion to either $\sumestS$ or $m<0$ should
be straightforward.

Propositions~\ref{prop:exp_var_convergence_simple} and
\ref{prop:exp_var_convergence_cycle} of Section~\ref{sec:general_prop} reveal
the key considerations for selecting $x_0$ and parameter $p$ of $R\sim\Geomp$,
appropriately. Firstly, to make sure that the variance of $\sumestC$ is finite,
$x_0$ must be chosen to ensure that $\beta^2<1$; see~\eqref{eq:hyp_simple}.
Secondly, once $x_0$ is chosen, then $p$ has to be found from the interval $p\in
(0,1-\beta^2)$. This order cannot be inverted in general because, if $p$ is too
close to one, there may not be an $x_0$ that guarantees $\beta^2<1-p$ as
$\beta^2$ has a unique minimiser with respect to $x_0$.

We now go through how to make these critical choices. From our experience, this discussion is largely neglected in the literature. We will further illustrate in Section~\ref{sec:numerical}, with the aid of numerical examples, the importance of these tuning considerations, and the robustness of our tuning procedure below.

Recall, we focus the discussion for  $m>0$. We see that $\beta^2<1 \iff x_0 > x_0^{\min} \eqdef m /2+\sigma^2/(2m)$. Rather than targeting $x_0^{\min}$, we target the $x_0$ that minimises $\beta^2$, namely
 \begin{align}
\label{eq:optimal_x0}
 x_0^\star \eqdef \arg\min_{x_0} \left\{ \frac{\sigma^2}{x_0^2} + \left(
	\frac{m}{x_0} -1 \right)^2 \right\} =
	\frac{m^2 + \sigma^2}{m}.
\end{align}
Being the minimiser, $x_0^\star$ automatically fulfils the condition for $\beta^2<1$. Incidentally, $x_0^{\star}=2 x_0^{\min}$. But there are also other reasons why $x_0^{\star}$ might be a good choice. The bound in Proposition~\ref{prop:exp_var_convergence_cycle} grows
with $\beta^2$ and even diverges as $\beta^2\rightarrow 1 - p$. So, it makes sense to
minimise this quantity with respect to $x_0$. Also, minimising $\beta^2$ permits a larger $p$, and this could save on the computational cost to generate the estimate of $f(m)$.

The catch is that we do not know the values of $m$ and $\sigma^2$. The only general solution to this issue is to use a pilot run. That is, use an initial batch of $n_0$ random variables $X_i$ to obtain estimates
$\widehat{m}$ and
$\widehat{\sigma^2}$, plug these estimates into \eqref{eq:optimal_x0} to obtain
$\widehat{x_0^\star}$ and use this estimate as the value for $x_0$.

Note however the following fundamental drawback of this approach. No matter how
large the pilot run is, the probability that $\widehat{x_0^\star} < x_0^{\min}$,
and equally $\beta^2 > 1$,
will be non-zero, and the variance of $\sumestC$ cannot be guaranteed to be finite.

To mitigate this issue, we need to find a way to make the probability $x_0 <
x_0^{\min}$ as small as possible for the chosen $x_0$. We propose the following
{\it bootstrap} \citep{EfronTibshirani_bootstrap} approach. Generate a pilot
sample of size $n_0$ and use it as follows: 
\begin{itemize} 
    \item[i.] Estimate $x_0^\star = (m^2 + \sigma^2)/m$ as 
        $\widehat{x_0^\star}=(\widehat{m}^2+\widehat{\sigma^2})/\widehat{m}$ where
  $\widehat{m}$ and $\widehat{\sigma^2}$ are the empirical mean and variance of
  the pilot sample. 

    \item[ii.] Find the upper limit of a lower one-sided bootstrap
        $(1-\alpha)$-confidence interval for $x_0^{\min}$, with $\alpha$ very
        small; note that $x_0^{\min}=x_0^\star/2$. 

\item[iii.] Set $x_0$ to the larger of these two values.

\end{itemize} 

  In this way, not only will the
$\pr(x_0>x_0^{\min})$ be close to one for the chosen $x_0$, the latter will
potentially also be close to $x_0^{\star}$ when the sample mean and variance
are accurate.

In practice, we recommend using the percentile method to construct the bootstrap
confidence interval, that is, the upper bound is the $(1-\alpha)$-quantile of
the bootstrap estimates of $x_0^{\min}$, as this method may work well even when
the distribution of the $X_i$'s is not symmetric; for example, see
\cite{EfronTibshirani_bootstrap} for more background on bootstrap confidence
intervals. In our experiments we set $\alpha = 0.01$. A second obvious
recommendation is to favour a large $n_0$, as the computational budget permits.

\subsection{Setting $p$ and the CPU budget}\label{subsec:budget}

Steps (i)-(iii) of Section \ref{subsec:x0} return an $x_{0}$ value such that $\pr\left(\widehat{x_{0}^{\min}}<x_{0}\right)\geq1-\alpha$,
where $\widehat{x_{0}^{\min}}$ is the plug-in estimate and $\pr(\cdot)$
is computed with the empirical cumulative distribution function given by the pilot samples $\left\{ X_{i}\right\} _{i=1}^{n_{0}}$.
Let $\widehat{\beta^{2}}$ be the corresponding plug-in estimate of
$\beta^{2}$: in \eqref{eq:hyp_simple}, use the found $x_{0}$ from Step (iii) and $(\widehat{m}, \widehat{\sigma^2})$ from Step (i). Then -- at this returned value for $x_{0}$ -- evidently,
we also have $\pr\left(\widehat{\beta^{2}}<1\right)\geq1-\alpha$.
This motivates choosing $p$ as follows:
\begin{itemize}
\item[iv.] Set $p=\min\left(1-\widehat{\beta^{2}},\frac{1}{n_{0}+1}\right)$.
\end{itemize}
The reason why we `align' the choice of $p$ with the size of the
pilot run $n_{0}$ is discussed next. Note that $\Geom(1/(n_0+1))$ has mean $n_0$. The resulting procedure is summarised in Algorithm~\ref{alg:cycling} for the cycling estimator. The simple estimator is obtained by replacing $\UrkC$ with $\UrkS$.

\begin{algorithm}
\begin{algorithmic}[1]
\STATE{\textit{Inputs}: number of pilot runs $n_0$, a mechanism to sample $X_i$ with mean $m$, confidence level $\alpha$, $(\gamma_k)_{k\geq 0}$. }
\STATE{\textit{Tuning}:}
\STATE{Sample $X_1, \dots X_{n_0}$ and compute empirical mean $\widehat{m}$ and variance $\widehat{\sigma^2}$.}
\STATE{Estimate $\widehat{x_0^\star}=(\widehat{m}^2+\widehat{\sigma^2})/\widehat{m}$.}
\STATE{Build a one-sided bootstrap $(1-\alpha)$-confidence interval $(-\infty, u]$, for $\sigma^2/(2m)+m/2$.}
\STATE{Set $x_0 = \max\{u, \widehat{x_0^\star}\}$ and $\widehat{\beta^2} = \widehat{\sigma}^2/x_0^2 + (\widehat{m}/x_0-1)^2$.}
\STATE{Set $p=\min\left(1-\widehat{\beta^{2}},\frac{1}{n_{0}+1}\right)$.}
\STATE{\textit{Estimator}:}
\STATE{Sample $R\sim \Geom(p)$ and, given $R= r$, sample $X_1, \dots, X_r$.}
\STATE{Compute $\UrkC$ as in~\eqref{eq:urkc}.}
\STATE{\textit{Output:} $\sumestC = \sum_{k=0}^r \frac{\gamma_k \UrkC}{(1-p)^k}$.}
 \end{algorithmic}
 \caption{Cycling Taylor-based sum estimator} \label{alg:cycling}
\end{algorithm}

Choosing $p$ amounts to choosing the number of variables $X_i$ that must be
generated, and ultimately the CPU budget of $\sumestC$ since, as we already
mentioned, the generation of the $X_i$'s often dominates the CPU cost of
$\sumestC$. 
For both the cycling and simple estimators, we have seen in the previous section that a large pilot run of size $n_0$ is
often required to choose $x_0$ well. It is thus illusory to try to use $\sumestC$
in a {\it low budget} regime.

Put it in another way, the figure of merit we should
now consider is the \emph{work-normalised variance} $\wnv:= (n_0 + \E[R]) \times \var[\sumest]$ \citep{glynn1992asymptotic}. We do not consider the cost of building the estimators since, as discussed in
Section~\ref{sec:merits}, we are mostly interested in scenarios in which the
$X_i$'s are expensive to generate, see Section~\ref{sec:erg} for an example.

Figure~\ref{fig:wnv} plots an upper bound on the work-normalised variance, obtained by replacing $\var[\sumest]$ with the bounds given
in Proposition~\ref{prop:var_exp} and~\ref{prop:exp_var_convergence_cycle} for the cycling estimator (Proposition~\ref{prop:exp_var_convergence_simple} for the simple estimator) in the case $x_0=x_0^\star$, as a function of $\E[R]$. We consider $n_0=10$ since this is the value used in the experiments in Section~\ref{sec:numerical}.
When $\E[R]$ gets
too small, both bounds diverge at rate $\OO\left( 1/(\E[R] - \beta^2) \right)$; when $\E[R]$ tends to infinity, the bound for the cycling estimator grows slowly, at rate $\OO(\log \E[R])$, while that for the simple estimator diverges at rate $\OO(\E[R])$.

In practice, this means that the cycling estimator is more robust with respect to the choice of $p$: for larger values of $\E[R]$ the cost increases, but while the cycling estimator provides results with diminishing variance, the simple one fails to do so and the corresponding $\wnv$ explodes at a faster rate. This makes cycling preferable to the simple estimator. 

% The cycling estimator with  $\E[R]$ chosen to be the same order as this pilot run size delivers a lower variance estimate with the same order of cost. 

\begin{figure}
\centering
\begin{tikzpicture}[every node/.append style={font=\normalsize}]
\node (img1) {\includegraphics[width = 0.3\textwidth]{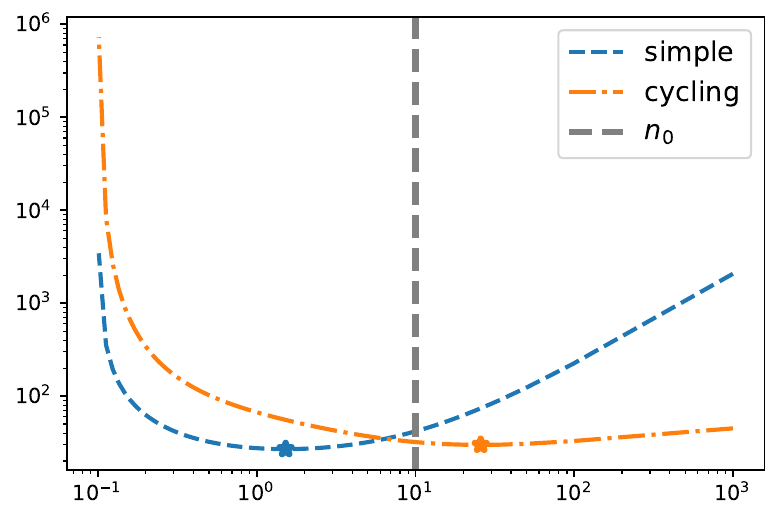}};
\node[above=of img1, node distance = 0, yshift = -1.2cm] {low};
\node[right=of img1, node distance = 0, xshift = -0.9cm] (img2) {\includegraphics[width = 0.3\textwidth]{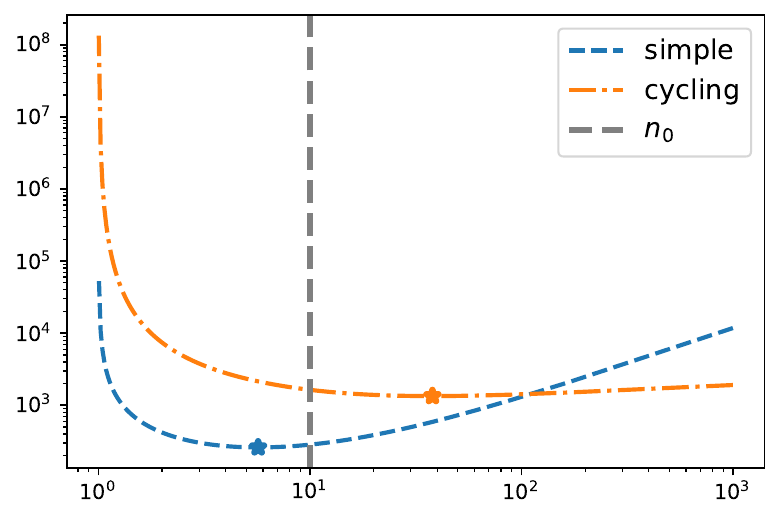}};
\node[right=of img2, node distance = 0, xshift = -0.9cm] (img3) {\includegraphics[width = 0.3\textwidth]{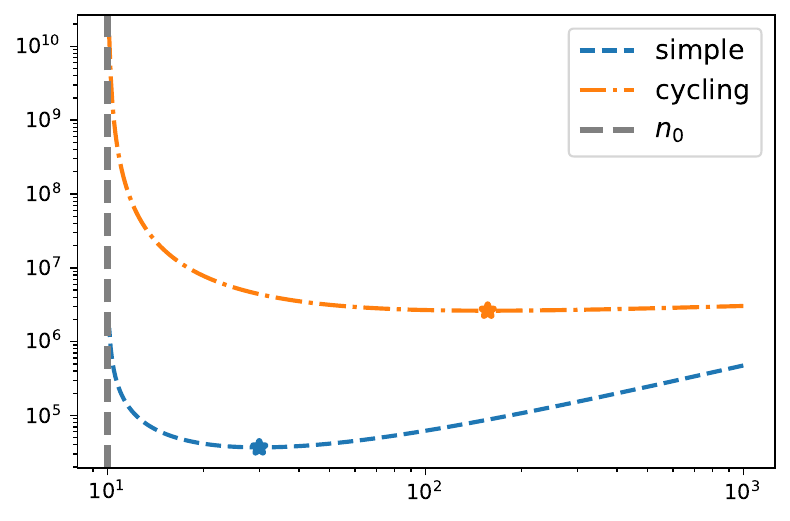}};
\node[above=of img2, node distance = 0, yshift = -1.2cm] {moderate};
\node[above=of img3, node distance = 0, yshift = -1.2cm] {high};
\node[left=of img1, node distance = 0, rotate = 90, anchor = center, yshift = -0.8cm] {$\wnv$};
\node[below=of img1, node distance = 0, yshift = 1.2cm] {$\E[R]$};
\node[below=of img2, node distance = 0, yshift = 1.2cm] {$\E[R]$};
\node[below=of img3, node distance = 0, yshift = 1.2cm] {$\E[R]$};
\end{tikzpicture}
\caption{Comparison of the upper-bound of the work-normalised variance in a low, moderate and high variance scenario for $n_0 = 10$ as in the experiments in Section~\ref{sec:numerical}. The stars denote the value of $\E[R]$ minimising the upper-bound of the WNV and the vertical line $\E[R]=n_0 = 10$.}
\label{fig:wnv}\end{figure}

We tried various ways to choose $p$ in an adaptive manner, by trying to minimise the upper bound on the work-normalised variance; see Appendix~\ref{app:wnv} for more details.
However, we found in practice the following very basic strategy to work well. Set $\E[R]=n_0$, the size of the pilot run; see Algorithm \ref{alg:cycling}.  In this way, we fix the CPU budget beforehand to twice $n_0$ and given that $n_0 \gg1$, this is often enough to ensure that we are in the right part of the curves in Figure~\ref{fig:wnv}.
In the low and moderate variance cases, we also see that $\E[R]=n_0$ is not far from the value minimising the upper bound on $\wnv$ for the cycling and simple estimator.

\section{Numerical experiments}
\label{sec:numerical}

\subsection{Toy latent variable model}\label{subsec:toy_lvm}

We consider the toy latent variable model (LVM) of \cite{rainforth2018tighter}, also used in \cite{shi2021multilevel}. Conditioned on $\theta\in \real$, the observations $y_1, \dots, y_n$ are sampled independently as follows,
\begin{align*}
	z_i|\theta & \sim \mathcal{N}(\cdot;\theta\textsf{1}_d, \textsf{Id}_{d}), \qquad\qquad y_i|z_i \sim \mathcal{N}(\cdot;z_i, \textsf{Id}_{d}),
\end{align*}
where $z_i$ and $y_i\in \real^d$. 

We set $\theta = 1$ and sample a data set of size $n=10^3$.
The goal is to  unbiasedly estimate the log-likelihood $\sum_{i=1}^n \log p(y_i|\theta)$ for
$\theta=1$, a quantity which we are able to compute exactly for this model.

To do so, we set $f(m)=\log m$ and consider each of the $n$ data points separately. For
a given $y\in\{y_1,\ldots,y_n\}$, we apply our approach as follows:
conditional on $R=r$, we sample $Z_1, \dots, Z_r\sim p(z|\theta)$ and set $X_j =
	p(y|Z_j, \theta)$ for $j=1,\dots, r$, which is evidently an unbiased estimate of
$p(y |\theta)=\int p(y|z,\theta) p(z|\theta)dz$. In this particular case, we can exactly compute  the 
moments $m=\E[X_j]=p(y|\theta)$ and $\sigma^2=\var[X_j]$; the optimal expansion point $x_0^\star = m + \sigma^2/m$; and the
corresponding value for $\beta^2$. See Appendix~\ref{app:toy_lvm} for their
expressions.

We use this toy model to compare the performance of the simple and cycling
estimators of $\log p(y|\theta)$ with that of the multilevel Monte Carlo (MLMC)
estimator of \cite{shi2021multilevel}. They use MLMC to modify the SUMO
approach of \cite{Luo2020SUMO} to return an unbiased estimate of $\log
	p(y|\theta)$ {\it with} finite variance. Thus, we do not consider the SUMO
estimator here since its variance is infinite. Furthermore, \citet[Section
	7.1]{shi2021multilevel} show, empirically, that MLMC outperforms SUMO.

The estimator proposed by \cite{shi2021multilevel} employs 
MLMC \citep{blanchet2019unbiased} to debias the estimator of $\log
	p(y|\theta)$ given by Importance Weighted Autoencoders (IWAE;
\cite{burda2015importance}). The MLMC estimator is
\begin{align}
	\label{eq:mlmc_mll}
	\widehat{\log p(y|\theta)}^{\textrm{MLMC}}= I_j + \sum_{k=0}^\infty \frac{\Delta_k\ind{\widetilde{R}\geq k}}{\pr(\widetilde{R}\geq k)},
\end{align}
where $\widetilde{R}\sim\Geom(0.6)$,
$I_j$, $j\geq 0$, is the IWAE estimate with proposal $q(\cdot;y)$ and $2^j$ samples:
\begin{align*}
	I_j \coloneq 
     \log\frac{1}{2^j}\sum_{h=1}^{2^j} w(Z_h;y, \theta), 
     \quad \text{with } w(Z_h;y, \theta) \coloneq \frac{p(Z_h, y| \theta)}{q(Z_h;y)},
     \quad Z_1, \dots, Z_{2^j} \sim q(\cdot;y).
\end{align*}
and $\Delta_k:= I_{j+k+1}-(I_{j+k}^E+I_{j+k}^O)/2$; with $I_{j+k}^E$ (resp.
$I_{j+k}^O$) being the IWAE estimate that only uses the even-indexed (resp.
odd-indexed) samples. We consider the same proposal as in
\cite{shi2021multilevel}, i.e.
$\N(\cdot;(y+\theta^\star\textsf{1}_d)/2, \frac{2}{3}\textsf{Id}_d)$, where $\theta^\star$ denotes the maximum likelihood estimator for $\theta$, which is
shown to be optimal within a certain family in \cite{rainforth2018tighter}.

The expected CPU cost, i.e. the number of generated $X_j$, is $\E[R]=1/p-1$ for
our Taylor-based estimators, and $\E[2^{j+1+\widetilde{R}}]$ for the MLMC
approach of \cite{shi2021multilevel}. We adjust $p$ and $j$ so that both
expectations are equal to a certain pre-specified budget $C$.
We consider two values for $C$, namely $C=6$ and $C=96$.

Figure~\ref{fig:iwae_comparison_cost} shows 100 estimates of each type (simple,
cycling, MLMC) for $d=2$ (left column) and $d=5$ (right column) against their
computational cost, for an expected cost $C$ of 6 (top line) and 96 (bottom
line) samples per data point. The actual computational cost (reported on the
$y$-axis) is the total number of variables $Z_j$ generated in order to estimate
all the terms $\log p(y_i|\theta)$ for $i=1,\dots, n$.

This experiment shows that the cost of MLMC has large empirical variance. In
fact, the true variance of the cost is infinite, see Appendix~\ref{app:toy_lvm}
for a proof of this statement. In contrast, the cost of the simple and cycling
estimators remains close to its expectation.

% ; however, for large $d$ this cost might not be
% enough to guarantee that the upper bound on the variance obtaining by combining
% Proposition~\ref{prop:var_exp} and~\ref{prop:exp_var_convergence_cycle} is
% finite.

The simple and cycling estimators outperform MLMC in terms of
accuracy, with the simple estimator having larger variance than the cycling
estimator for higher computational cost. This is particularly evident in the first plot of the second row, in which we compare the estimates for $\E[R]=96$, in this case the cloud of estimates obtained with cycling is considerably more concentrated around the true value (the horizontal line). 
This is consistent with the
theoretical results established in Section~\ref{sec:general_prop} which show
that the variance of the simple estimator does not decay to 0 when the
computational cost increases.

\begin{figure}
	\centering
	\begin{tikzpicture}[every node/.append style={font=\normalsize}]
		\node (img1) {\includegraphics[width = 0.4\textwidth]{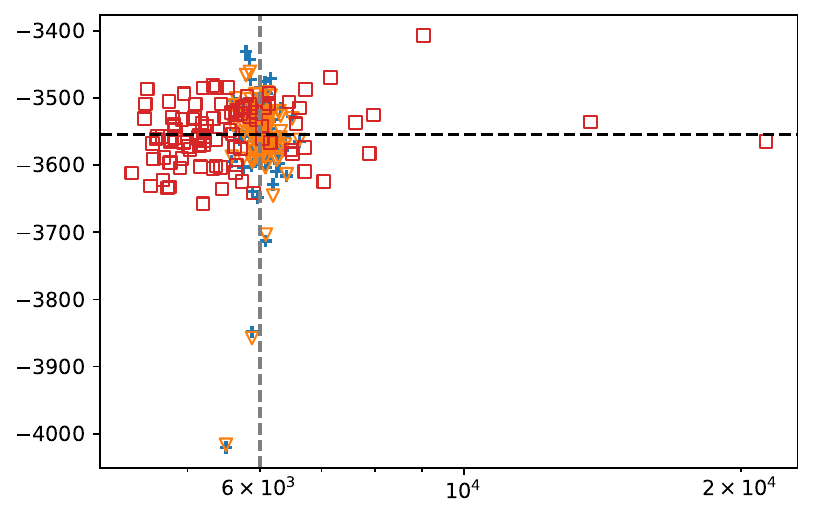}};
		\node[above=of img1, node distance = 0, yshift = -1.2cm] {$d=2$};
		\node[right=of img1, node distance = 0, xshift = -0.7cm] (img2) {\includegraphics[width = 0.4\textwidth]{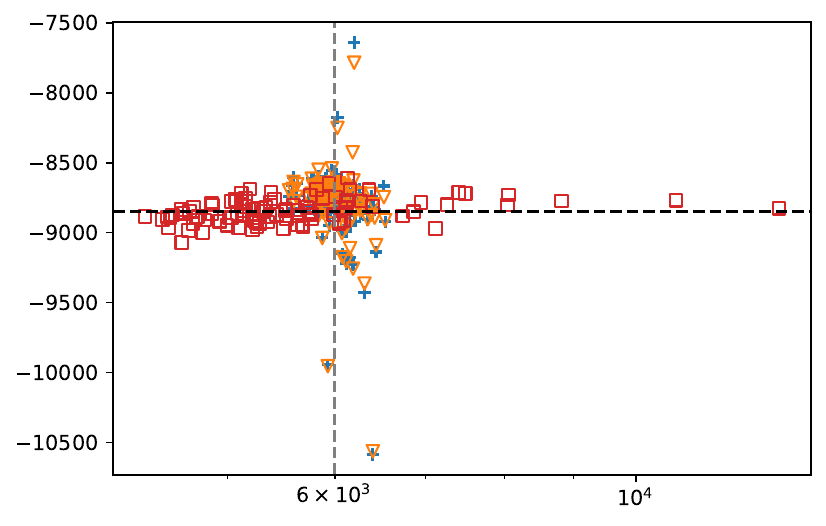}};
		\node[left=of img1, node distance = 0, rotate = 90, anchor = center, yshift = -0.8cm] {$\Exp[R] = 6$};
		\node[above=of img2, node distance = 0, yshift = -1.2cm] { $d=5$};
		\node[below=of img1, node distance = 0, yshift = 1.2cm] (img3) {\includegraphics[width = 0.4\textwidth]{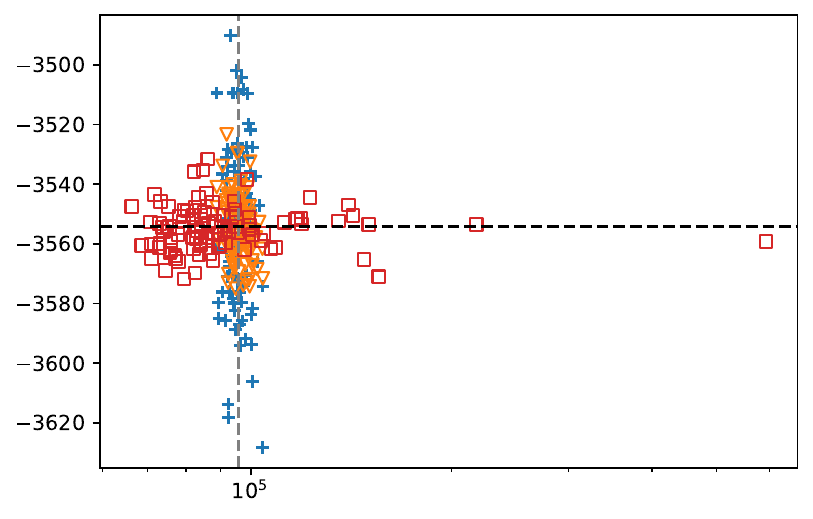}};
		\node[right=of img3, node distance = 0, xshift = -0.7cm] (img4) {\includegraphics[width = 0.4\textwidth]{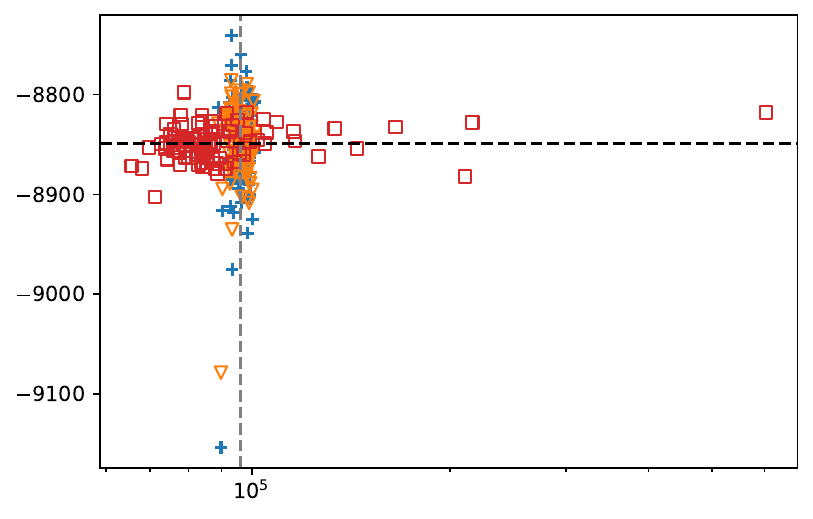}};
		\node[left=of img3, node distance = 0, rotate = 90, anchor = center, yshift = -0.8cm] {$\Exp[R] = 96$};
		\node[below=of img3, node distance = 0, yshift = 0.5cm, xshift = 2.5cm] (img9) {\includegraphics[width = 0.8\textwidth]{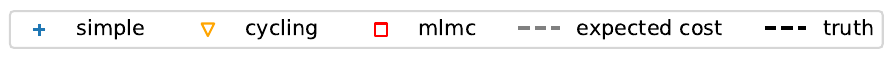}};
		\node[below=of img3, node distance = 0, yshift = 1.2cm] {Cost};
		\node[below=of img4, node distance = 0, yshift = 1.2cm] {Cost};
	\end{tikzpicture}
	\caption{Estimates of the marginal log-likelihood for a full data set of size
		$n=10^3$ plotted against computational cost (measured in number of samples
		$Z_j$ drawn) for an expected cost $C$ of 6 and 96 samples per data point.
		The first column shows the results
		for $d=2$ while the second one for $d=5$. The vertical dashed line denotes the
		expected cost while the horizontal one denotes the true value of
		$\sum_{i=1}^n\log  p(y_i|\theta)$.}
	\label{fig:iwae_comparison_cost}
\end{figure}

As a note of caution, our Taylor-based estimators may give very poor results
for a larger dimension $d$; see Appendix~\ref{app:toy_lvm} for extra results
when $d=20$. Simple calculations show that, in that case, one would need to
take $\E[R]\gg 10^8$ to ensure that $p<1-\beta^2$ and that the estimators have
finite variance. In this case, the inputs have very large variance, and
presumably the only way to fix this issue is to come up with a different
strategy to unbiasedly estimate $p(y|\theta)$, so that the corresponding
estimates have much lower variance.

\subsection{Independent component analysis}
\label{sec:ica}
We consider a LVM associated with probabilistic independent component analysis (ICA). The data $y_1, \dots, y_n$ are $d_y$-dimensional vectors obtained from 
\begin{equation}
y = \sum_{j=1}^{d_z} z_j a_j +\sigma \varepsilon,
\label{eq:ica}
\end{equation}
where $z_j$ are the latent random variables; $a_j$'s are parameter vectors of dimension $d_y$; $\varepsilon$ is a $d_y$-dimensional standard Gaussian vector and $\sigma>0$. Furthermore,  $z_j = b_j\zeta_j$ where $b_j\sim \textrm{Bernoulli}(\alpha)$ and $\zeta_j \sim \textrm{Logistic}(1/2)$ for some fixed $\alpha$. 
The aim is to estimate the parameter vector $\theta=(A, \sigma)$, with $A:=(a_1, \dots, a_{d_z} )$, via maximum marginal-likelihood estimation, that is, we want to find the $\theta$ that maximises $\sum_{i=1}^n \log p(y_i|\theta)$.

We repeat the first experiment in \cite{allassonniere2012stochastic}. We generate $n=100$ data points from \eqref{eq:ica} with the following model choices: $d_z = 2$, $\sigma = 0.5$, $\alpha=0.8$. Vectors $a_1$ and $a_2$ are the $16\times 16$ images in the first panel of Figure~\ref{fig:ica_true_theta}. Note that $d_y = 256$ and the parameter space has dimension $d_\theta = 2\times256+1$.

To estimate $\theta$, \cite{allassonniere2012stochastic} use a stochastic approximation expectation maximisation (SAEM) algorithm. We will instead use stochastic gradient descent \citep{robbins1951stochastic},  implemented with our unbiased estimators of $\nabla_\theta f(m(\theta)) = \nabla_\theta\log p(\theta|y)$. We call our two implementations  simple-SGD and cycling-SGD. 
We also implement SGD with the gradient unbiasedly estimated using the MLMC strategy of \cite{shi2021multilevel} (MLMC-SGD).
The bivariates $(X_i, G_i)$ (for $i=1,2,\ldots$) of Section~\ref{sec:gradients} are generated using importance sampling; and the same importance sampling proposal is used for MLMC. For MLMC, we adjust its parameters $p$ and $j$, defined in \eqref{eq:mlmc_mll} with $\tilde{R}\sim\Geom(p)$, so that $\Exp\left[2^{j+1+\widetilde{R}}\right]= n_0$. Similarly, $p$ for the Geometric random truncation of the simple and cycling gradient estimates is found by setting  $\Exp\left[R\right]= n_0$. See Appendix~\ref{app:ica} for all the  details.

\begin{figure}
\centering
\begin{tikzpicture}[every node/.append style={font=\normalsize}]
\node (img1) {\includegraphics[width = 0.15\textwidth]{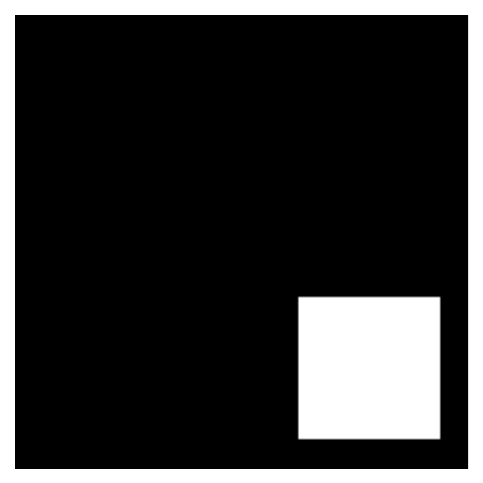}};
\node[below=of img1, node distance = 0, yshift = 1cm] (img2) {\includegraphics[width = 0.15\textwidth]{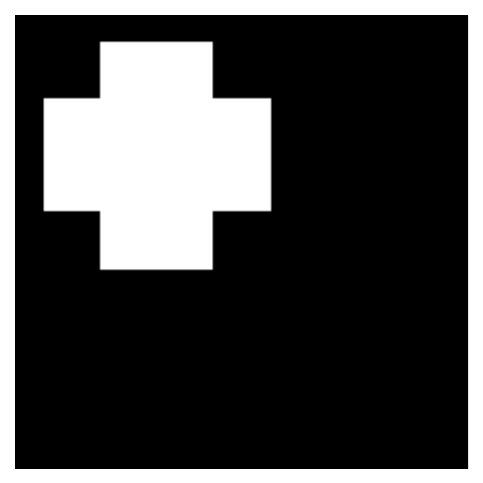}};
\node[right=of img1, node distance = 0, xshift = -1cm] (img3) {\includegraphics[width = 0.15\textwidth]{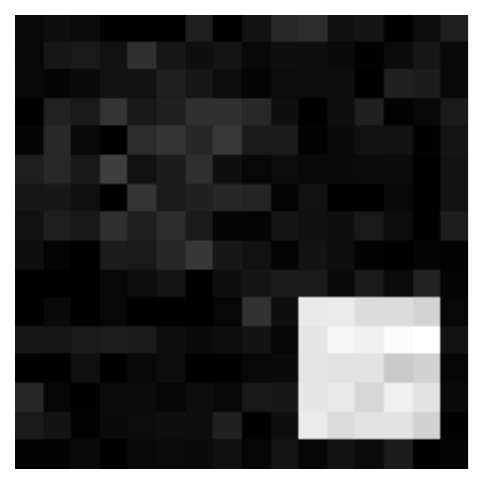}};
\node[below=of img3, node distance = 0, yshift = 1cm] (img4) {\includegraphics[width = 0.15\textwidth]{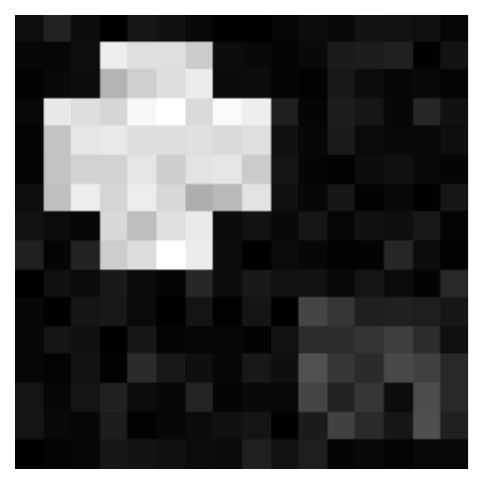}};
\node[right=of img3, node distance = 0, xshift = -1cm] (img5) {\includegraphics[width = 0.15\textwidth]{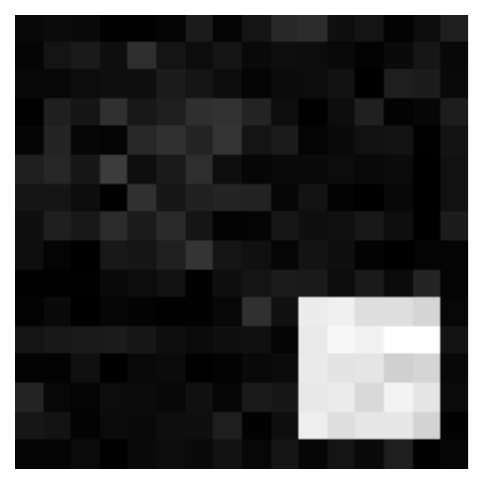}};
\node[below=of img5, node distance = 0, yshift = 1cm] (img2) {\includegraphics[width = 0.15\textwidth]{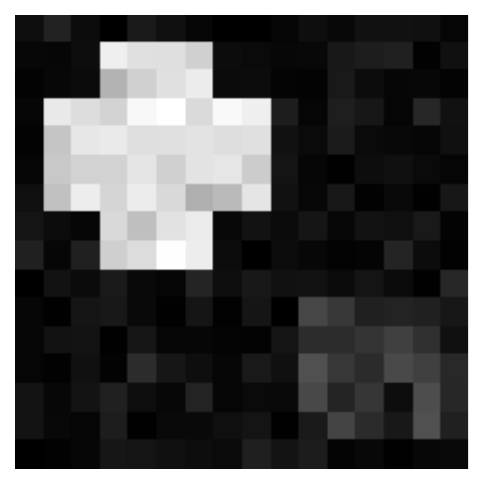}};
\node[right=of img5, node distance = 0, xshift = -1cm] (img7) {\includegraphics[width = 0.15\textwidth]{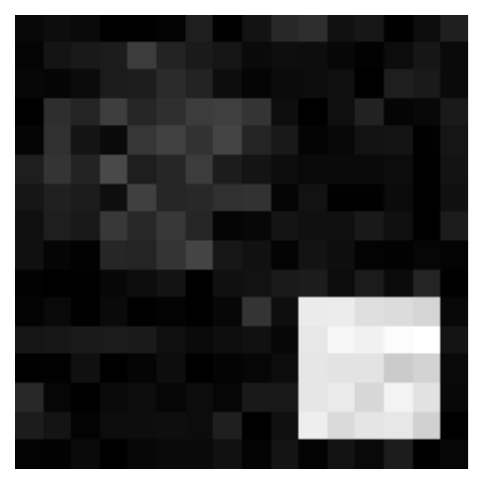}};
\node[below=of img7, node distance = 0, yshift = 1cm] (img2) {\includegraphics[width = 0.15\textwidth]{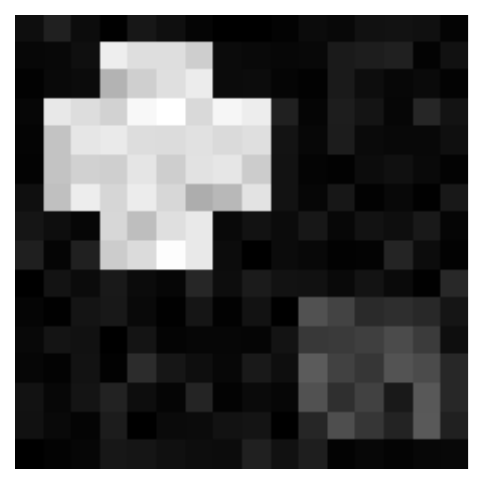}};
\node[right=of img7, node distance = 0, xshift = -1cm] (img9) {\includegraphics[width = 0.15\textwidth]{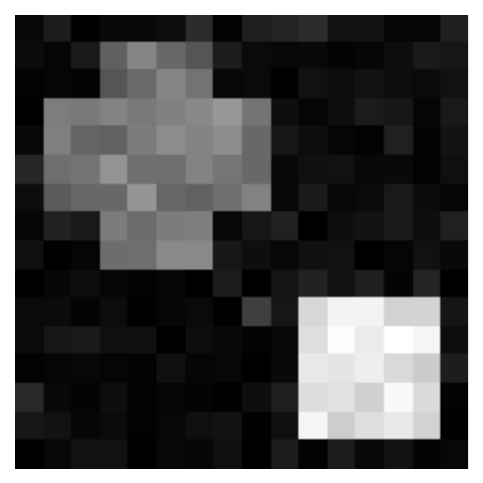}};
\node[below=of img9, node distance = 0, yshift = 1cm] (img10) {\includegraphics[width = 0.15\textwidth]{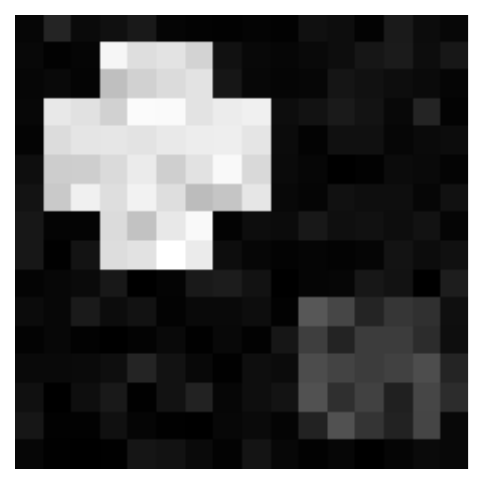}};
\node[above=of img1, node distance = 0, yshift = -1.2cm] {Truth};
\node[above=of img3, node distance = 0, yshift = -1.2cm]{MLMC};
\node[above=of img5, node distance = 0, yshift = -1.2cm] {simple};
\node[above=of img7, node distance = 0, yshift = -1.2cm] {cycling};
\node[above=of img9, node distance = 0, yshift = -1.2cm] {SAEM};
\end{tikzpicture}
\caption{Estimated decomposition images (parameter vector $A$ of $\theta$) with SAEM, MLMC-SGD and Taylor-based sum estimators for the censored logistic ICA model. The first column shows the true value of $a_1, a_2$.}
\label{fig:ica_true_theta}
\end{figure}

\begin{table}
\centering
\begin{tabular}{l|cccccc}
Method & $\mse(\sigma)$ & $\mse(A)$ & $\Exp\left[\textrm{Nsamples}\right]$ & $\var\left(\textrm{Nsamples}\right)$ & runtime (s) \\
\hline\noalign{\smallskip}
SAEM & $7.00\cdot 10^{-7}$ & 0.22 & -- & -- & 33 \\
MLMC & $2.95\cdot 10^{-6}$ & 0.29 & 12 & 19311 & 11\\
simple & $2.97\cdot 10^{-6}$ & 0.27  & 10 & 90 & 10 \\
cycling & $6.80\cdot 10^{-5}$ & 0.24  & 10 & 90 & 14 \\
\end{tabular}
\caption{Comparison of reconstruction accuracy and cost for the censored logistic ICA model. The results are averaged over 100 repetitions.}
\label{tab:sim_ica}
\end{table}

Table~\ref{tab:sim_ica} shows the mean square error ($\mse$) of the estimates
and their cost over 100 repetitions. Figure~\ref{fig:ica_true_theta} shows the
estimated images (parameter $A$ of $\theta$) from one run of each algorithm.
SAEM is more costly than the other methods since it scans the whole data set of
size $n=100$ at each iteration, while the other SGD methods use randomly
selected mini-batches of size 2. The overhead caused by cycling is small
compared to the cost of running SGD for 5000 steps. Although the final MSE is
similar for the four algorithms, on a visual inspection
(Figure~\ref{fig:ica_true_theta}), SGD seems to yield better $A$ estimates than
SAEM.

We also observe that there seems to be a small difference between the results
obtained with the simple and the cycling estimator. Both are
unbiased, however, while running our experiments we noticed that the higher
variance of the simple estimator requires smaller learning rates to
counterbalance the gradient estimate.

\subsection{Exponential random graphs}
\label{sec:erg}
Exponential random graph (ERG) models are one of the classical examples of
models with a doubly intractable posterior distributions. For this model, the data is the observed edge connectivity of an $n$-node undirected graph. In particular, one
observes $y = (y_{ij})_{i<j}$, where $y_{ij} = 1$ (resp. $0$) if nodes
$i$ and $j$ are connected (resp. are not).  The likelihood is, given a
parameter $\theta$ and canonical statistics $s(y)$, both in $\R^k$,
\begin{equation}
	\label{eq:erg_target}
	p(y|\theta) = \exp\left\{ \theta^T s(y) \right\}/\mathcal{Z}(\theta).
\end{equation}
The normalising constant $\mathcal{Z}(\theta)$ is a sum over $2^{\binom{n}{2}}$ terms,
and is therefore intractable.

We construct an unbiased estimate of $1/\mathcal{Z}(\theta)$, for any $\theta$, which may then be combined with the remaining tractable terms of \eqref{eq:erg_target} to perform Bayesian inference and
model choice.
We consider the same example as in \cite{caimo2011bayesian}: an $n=16$
node graph; the statistic is $s(y) = (\sum_{i<j} y_{ij} , \sum_{i<j<k} y_{ij}y_{jk})$, which counts
the number of edges and of two stars; $\theta\in \real^2$ is assigned the prior $N_2(0_2, 30 I_2)$. The data -- the observed connectivity -- is 
the Florentine family business graph \citep{padgett1993robust}. % For the Florentine family business graph we have $ s(y) = (15, 36)$.
% The normalising constant $\mathcal{Z}(\theta) = \sum_{y\in\mathcal{Y}}\exp(\theta s(y))$ requires summing over the set $\mathcal{Y}$, containing all $2^{\binom{n}{2}}$ possible undirected graphs with $n=16$ nodes, and is thus intractable.

For a given $\theta$, we will unbiasedly estimate  $1/\mathcal{Z}(\theta)$ using the simple and cycling estimator applied to the function $f(m)=1/m$. The unbiased estimates of $m=\mathcal{Z}(\theta)$, or inputs  $X_i$ used in the simple and cycling methods,  are obtained by running the waste-free variant
\citep{dau2022waste} of tempered SMC; see Appendix~\ref{app:erg} for more details on
this algorithm. In this example, the cost of generating the inputs with SMC is significantly larger than the cost of assembling the cycling estimate.

% \subsubsection{Variance Comparison (fixed $\theta$)}

We first fix $\theta=(-3.15, 0.58)$, a value in the high probability region of
the posterior, and compare the performance of the simple and the cycling
estimator.

\begin{table}
	\centering
	\begin{tabular}{l|cc|cc|cc|cc}
		             & \multicolumn{4}{c}{Simple}   & \multicolumn{4}{c}{Cycling}                                                                                                                \\
		\hline\noalign{\smallskip}
		             & \multicolumn{2}{c}{Moderate} & \multicolumn{2}{c}{Low}     & \multicolumn{2}{c}{Moderate} & \multicolumn{2}{c}{Low}                                                       \\
		\hline\noalign{\smallskip}
		             & $\wnv$                       & $\mathbb{P}(-)$             & $\wnv$                       & $\mathbb{P}(-)$         & $\wnv$ & $\mathbb{P}(-)$ & $\wnv$ & $\mathbb{P}(-)$ \\
		\hline\noalign{\smallskip}
		$\Exp[R]=1$  & 111                           & $0.026$                      & 106                           & $0.006$                 & 102     & $0.028$         & 77     & $0.002$         \\
		$\Exp[R]=10$ & 61                           & $0.027$                     & 83                           & $0.001$                   & 24     & $0.01$         & 26     & $0.0$
	\end{tabular}

	\caption{Work-normalised variance and proportion of negative
		estimates out of $10^3$ replicates of $\widehat{\mathcal{Z}(\theta)^{-1}}$ for
		the simple and the cycling estimator when the variance of the estimates of
		$\mathcal{Z}(\theta)$ is moderate and low.}
	\label{tab:ergm}
\end{table}

We consider two scenarios, one in which the unbiased estimates of
$\mathcal{Z}(\theta)$ have moderate variance and one in which the variance is
low. See Appendix~\ref{app:erg} for details on the moderate and low variance
regimes. Basically, the latter is obtained by taking 10 times more particles in
SMC, which leads to a (roughly) ten times smaller variance for  the inputs.

We select the value of $x_0$ using the strategy in Section~\ref{subsec:x0} with $n_0=10$.
Once $x_0$ has been selected, we identify the corresponding $\widehat{\beta^2}$
as discussed in Section~\ref{subsec:budget}, see Algorithm \ref{alg:cycling} for a summary. This gives us a range for $p\in(0,
	1-\widehat{\beta^2})$. In the moderate variance regime we obtain $p\in(0, 0.7)$
while for the low variance regime we have $p\in(0, 0.95)$. In both cases, it is
sufficient to set $\Exp[R]=1$ (i.e. $p=0.5$) to ensure that both the simple and
the cycling estimator have finite variance.

Since the cost of obtaining the estimates of $\mathcal{Z}(\theta)$ is
significantly larger than that of computing the estimates,
selecting $\Exp[R]=1$ might be wasteful as our theoretical results in
Section~\ref{sec:general_prop} guarantee that the cycling estimator has a lower
variance than the simple estimator for large values of $\Exp[R]$. Thus, we also
consider $\E[R]=10=n_0$ in our simulations as recommended
in Section~\ref{subsec:budget}.

Table~\ref{tab:ergm} compares the simple and cycling estimators in these four different
scenarios (two variance regimes, two values of $\E[R]$) according to two metrics
estimated on $10^3$ independent runs: work-normalised variance (relative
variance times CPU cost), and proportion of negative $1/\mathcal{Z}(\theta)$ estimates. The results show that
(a) the cycling estimator systematically outperforms the simple one; and (b)
increasing $\E[R]$ does lead to better performance for the cycling estimator
but not for the simpler one, as expected.
The CPU cost of cycling is less than 0.01 seconds higher than that of the simple estimator.

% \subsubsection{Posterior inference and model choice}

To recover the posterior distribution and the marginal likelihood, we use
importance sampling, with a Gaussian proposal $q$, from which we generate $n=1024$
vectors $\theta_i$ through randomised quasi-Monte Carlo. Using quasi-Monte Carlo ensures that the
variability of our importance sampling estimates is dominated by the randomness
of the estimates of $1/\mathcal{Z}(\theta)$. See the appendix for more details
on the proposal and on randomised quasi-Monte Carlo.

The importance sampling weights are 
\begin{align}
	\label{eq:erg_weights}
	w_i = \frac{p(\theta_i)\exp\left\{ \theta_i^T s(y) \right\}}{q(\theta_i)}
	\times \widehat{\frac{1}{\mathcal{Z}(\theta_i)}},\qquad \theta_i \sim q.
\end{align}
The average of these noisy weights gives us an unbiased
estimate of the marginal likelihood, $p(y)=\int p(\theta) p(y|\theta)d\theta$.
To the best of our knowledge, this is the first study that uses unbiased estimates of the
marginal likelihood of an ERG model.

For each $\theta_i$, we use exactly the same approach to estimate
$1/\mathcal{Z}(\theta_i)$ as the one which led to best performance in the
previous section, namely, $n_0=10$, $\E[R]=10$, and waste-free SMC run in the ``moderate
variance'' regime.

Figure~\ref{fig:ergm_posterior} shows the histograms obtained with the importance sampling
approximations of the posterior distribution $p(\theta|y)$; `simple' and `cycling' correspond to \eqref{eq:erg_weights} with $1/\mathcal{Z}(\theta)$ estimated with the simple and cycling methods respectively. The simple (resp. cycling) estimator returns
35 (resp. 10) negative weights out of 1024 samples.
The effective sample size ($\ess$), i.e.$\left(\sum_{i}
	w_i\right)^2 /\sum_{i} (w_i)^2$ is $6\%$ (resp. $31\%$) when using the simple
(resp. cycling) estimator. Using the cycling estimator here, approximately, yields a fivefold increase in computational 
efficiency  with little additional cost.

\begin{figure}
	\centering
	\begin{tikzpicture}[every node/.append style={font=\normalsize}]
		\node (img1) {\includegraphics[width = 0.4\textwidth]{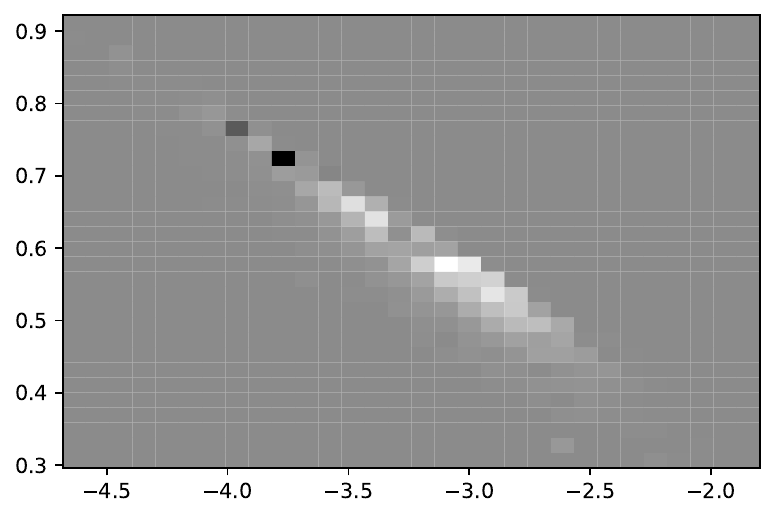}};
		\node[right=of img1, node distance = 0, xshift = -0.7cm] (img2) {\includegraphics[width = 0.4\textwidth]{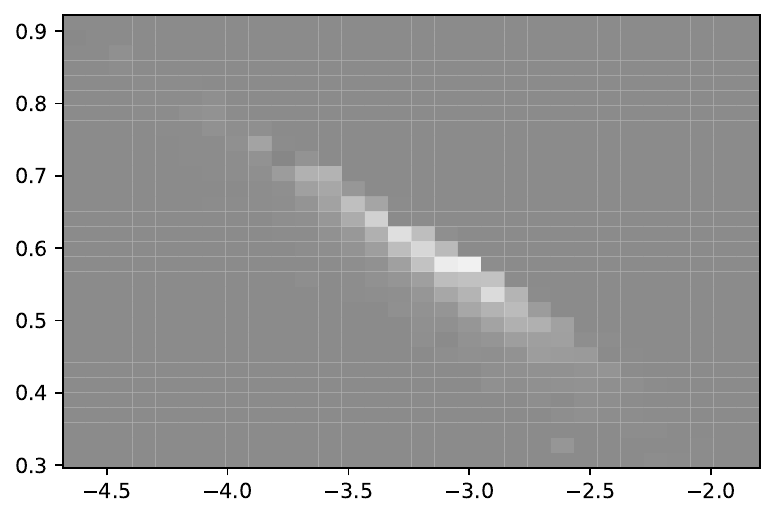}};
		\node[above=of img1, node distance = 0, yshift = -1.2cm] {Simple};
		\node[above=of img2, node distance = 0, yshift = -1.2cm] {Cycling};
		\node[below=of img1, node distance = 0, yshift = 1.2cm] {$\theta_1$};
		\node[below=of img2, node distance = 0, yshift = 1.2cm] {$\theta_1$};
		\node[left=of img1, node distance = 0, rotate = 90, anchor = center, yshift = -0.8cm] {$\theta_2$};
		\node[left=of img2, node distance = 0, rotate = 90, anchor = center, yshift = -0.8cm] {$\theta_2$};
	\end{tikzpicture}
	\caption{Bivariate weighted histograms approximating the posterior distributions
		obtained with the simple and the cycling estimator using $n=1024$ samples from proposal $q$.
	}
	\label{fig:ergm_posterior}
\end{figure}

To assess the variability of our approach we repeat the estimation of
$1/\mathcal{Z}(\theta)$, for each $\theta_i$, one hundred times,
resulting in 100 different posterior approximations obtained from the same set
of $n=1024$ samples obtained from the proposal.
For each repetition, we compute the $\ess$, and the unbiased approximation of the model evidence $p(y)$
given by $n^{-1}\sum_{i=1}^n w(\theta_i)$
(Figure~\ref{fig:ergm_evidence}).
The $\ess$ obtained with the cycling estimator is generally higher than that obtained with the simple estimator, and is above 0.2 in 74 out of  the 100 repetitions. The average number of negative weights for a repetition is 44 for the simple estimator and 18 for the cycling estimator.

Low values of the $\ess$ correspond to outliers in the distribution of the
evidence, in particular, negative values of $p(y)$ are more
frequent with the simple estimator than with the cycling estimator. The
resulting variance for the estimates of $p(y)$ is 6 times larger for the simple
estimator than for the cycling estimator.
\begin{figure}
	\centering
	\begin{tikzpicture}[every node/.append style={font=\normalsize}]
		\node (img1) {\includegraphics[width = 0.3\textwidth]{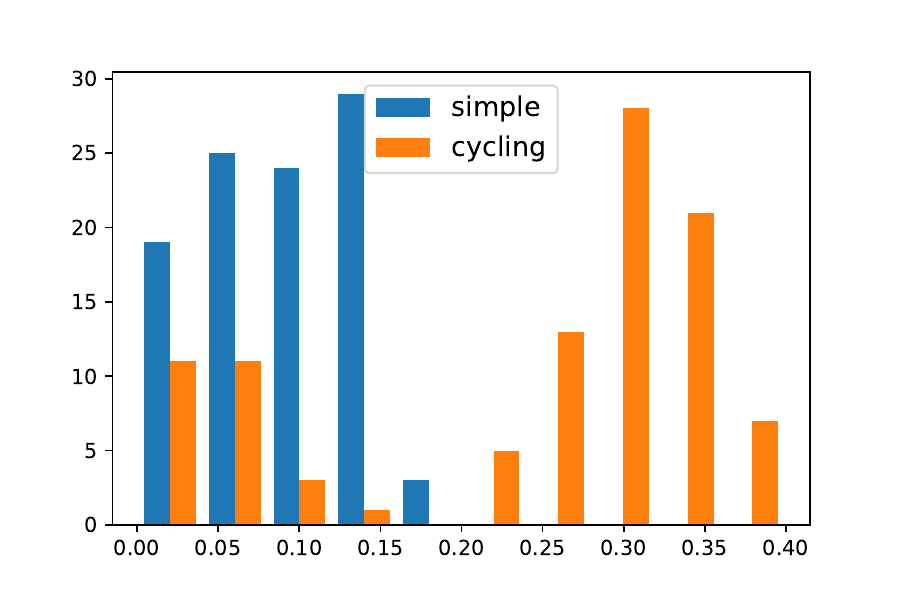}};
		\node[above=of img1, node distance = 0, yshift = -1.2cm] {$\ess$};
		\node[right=of img1, node distance = 0, xshift = -1.3cm] (img2) {\includegraphics[width = 0.3\textwidth]{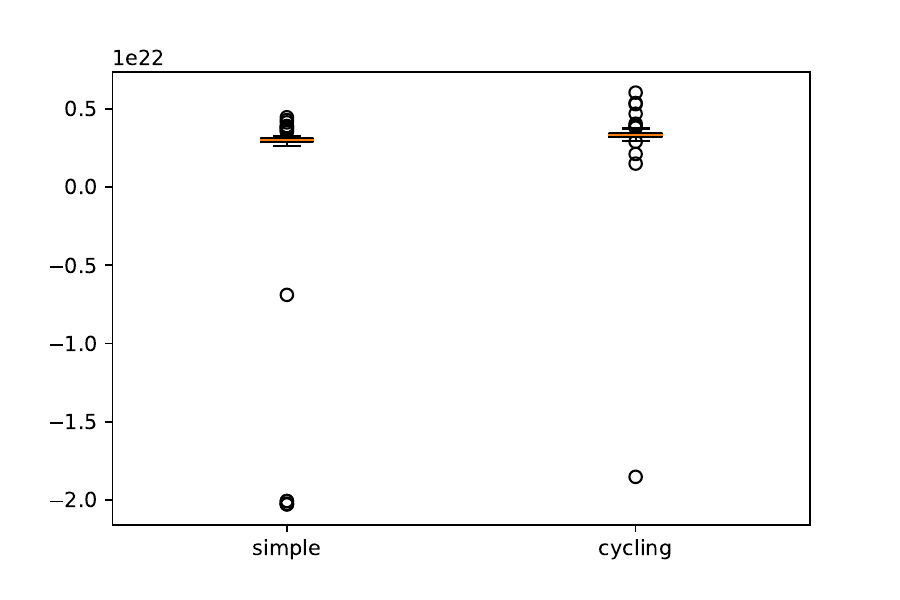}};
		\node[above=of img2, node distance = 0, yshift = -1.2cm] { Evidence};
\node[right=of img2, node distance = 0, xshift = -1.3cm] (img3) {\includegraphics[width = 0.3\textwidth]{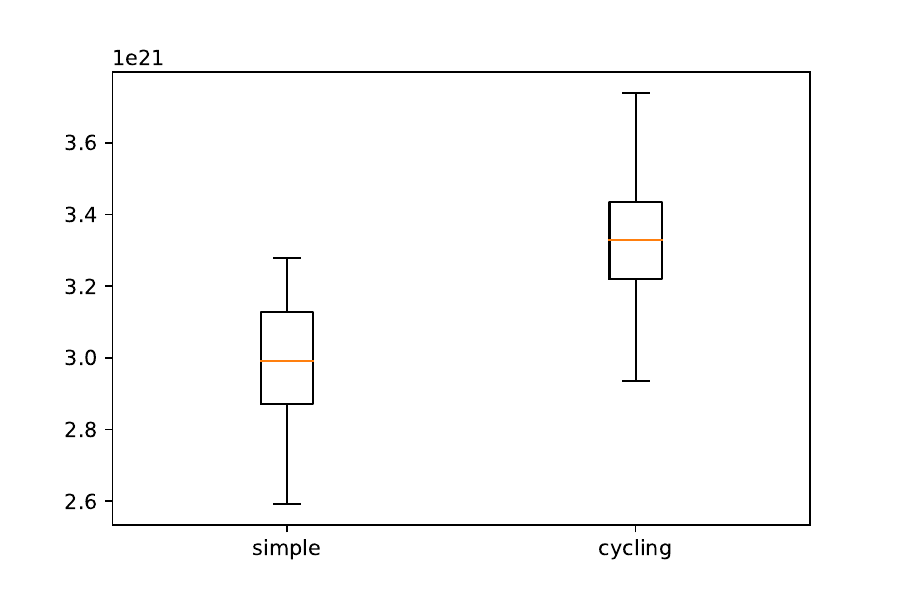}};
		\node[above=of img3, node distance = 0, yshift = -1.2cm] { Evidence (no outliers)};
	\end{tikzpicture}
	\caption{Distribution of ESS and evidence (marginal likelihood) for 100
		repetitions of the posterior approximation using the simple and the cycling
		estimator. }
	\label{fig:ergm_evidence}
\end{figure}

%%% Local Variables:
%%% mode: latex
%%% TeX-master: "paper.tex"
%%% End:

\section{Conclusion}
\label{sec:conclusion}

We develop a general strategy for unbiasedly estimating $f(m)$, for a smooth
function $f$, from  random variables $X_i$ with expectation $m$. Inspired by
the simple estimator (Section~\ref{subsec:simple}) -- which is widely used in
the literature -- we proposed a new cycling estimator
(Section~\ref{subsec:cycle}), and obtained conditions that guarantee finite
estimator and execution time (CPU cost) variance for both methods. This is an
improvement over previous approaches. In particular, the  multi-level Monte
Carlo method of \cite{Luo2020SUMO} returns estimates with infinite variance,
while the one of \cite{shi2021multilevel} has a running time with infinite
variance. The latter issue may be particularly problematic in a parallel
implementation since the completion time is the maximum of all threads being
run.

As pointed out by a referee, the fact that the execution cost is random may be 
a drawback in certain implementation scenarios, even if it has finite variance.
In such a case, one may prefer using a biased estimate whose execution
cost is deterministic. Such an approach has less theoretical support, but seems
to work well in certain practical applications, e.g., to approximate gradients
in variational auto-encoders \citep{burda2015importance}.

We show that the cycling estimator outperforms the simple one both
theoretically, see Proposition~\ref{prop:exp_var_convergence_cycle} and
Example~\ref{ex:simple_var} and empirically, see Section~\ref{sec:numerical},
with potentially only a little more computational cost. We also discussed the
sensitivity of these methods to their tuning parameters, an issue we found to
be largely overlooked in the literature. We gave precise guidelines for
selecting the expansion point $x_0$ and the distribution of the random variable
$R$ that determines the random truncation (Sections~\ref{subsec:x0}
and~\ref{subsec:budget}).

Our experiments show that, in order to build reliable unbiased estimates of
$f(m)$ or its gradient, one needs input variables $X_i$ with low or moderate
variance. In fact, the minimal $\beta^2$ in
Proposition~\ref{prop:exp_var_convergence_simple}, achieved for $x_0^\star$
in~\eqref{eq:optimal_x0}, is $(\beta^2)^\star = \sigma^2/(\sigma^2+m^2)$. Note
that $1-(\beta^2)^\star$ is indeed the inverse of relative variance of the
$X_i$'s. As
$\sigma^2/m^2\rightarrow \infty$, we need to select $p\in(0, 1-(\beta^2)^\star)
\to \{0\}$, hence requiring an infinite computational cost. In this case, we
recommend dedicating the computational effort towards building lower variance
estimates of $m$.

Even if the $X_i$'s have low variance, a poorly chosen $x_0$ can cause the
estimators of $f(m)$ to have high variance. While in low dimensional problems
one can get away with a poorly tuned $x_0$, we found that for high dimensional
applications, e.g., the ICA example in Section~\ref{sec:ica}, it is crucial
that the input variables $X_i$'s have low variance \emph{and} to fine tune
$x_0$,  as recommended in Section~\ref{subsec:x0}. Once $x_0$ has been
selected, we describe a simple strategy to select $p$ to guarantee that $\E[R]$
is sufficiently large to have finite and small variance. This strategy works
well for the experiments we considered, however, in some cases it might be
possible to build optimal distributions for particular $f$'s (see, 
e.g.~\cite{rhee2015unbiased, cui2021optimal, zheng2023optimal}).

As a by-product of our study, we provide a method to unbiasedly estimate the
model evidence in doubly intractable models. Previous attempts in the context
of ERG models have focused on pseudo-likelihood approaches in which the
intractable likelihood is replaced by a tractable approximation
\citep{bouranis2018bayesian}. {Also, using} a combination of path sampling and kernel
density estimation, which provides biased estimators and are only suitable in
low-dimensional settings \cite[Section 4]{caimo2013bayesian}.

\section*{Acknowledgments} 

Research of the first two authors is partly supported by the French Agence
Nationale de la Recherche (ANR) through grant EPISVACAGE.  SSS holds the Tibra
Foundation professorial chair and gratefully acknowledges research funding as
follows: This material is based upon work supported by the Air Force Office of
Scientific Research under award number FA2386-23-1-4100.  The authors would
like to thank Otmane Sakhi, Adrien Corenflos Sam Power and two referees for
helpful comments at various stages of this project.

\bibliographystyle{apalike}
\bibliography{unbiased_biblio}

\appendix
\section{Integrability and moment expressions}
\label{app:integrability}

\subsection{Proof of Lemma \ref{lem:integrablefhat}} \label{app:integrablefhat}

We will show
\begin{align}
\label{eq:fubini_sumest}
\E\left[\sum_{k=0}^\infty \frac{\ind{R\geq k}}{\pr(R\geq k)}\vert \gamma_k \vert\vert \URk\vert \right] <\infty,
\end{align}
which implies  the infinite series on the right-hand side of \eqref{eq:def_sumest} is, almost surely, absolutely summable. Thus, $\sumest$ exists almost surely and is integrable. To show it has the desired mean, as~\eqref{eq:fubini_sumest} holds, we can apply Fubini's Theorem for general functions \cite[Theorem 18.3]{billingsley1995measure} to  swap the expectation and the sum and obtain
\begin{align*}
\E[\sumest] = \sum_{k=0}^\infty \frac{\E\left[\ind{R\geq k} \URk\right]}{\pr(R\geq k)}\gamma_k = \sum_{k=0}^\infty\gamma_k\left(\frac{m}{x_0}-1\right)^k.
\end{align*}

To show  \eqref{eq:fubini_sumest}, we note that all the terms in the sum are positive. Fubini's Theorem for non-negative functions \cite[Theorem 18.3]{billingsley1995measure} allows the swap of the expectation and the sum,
\begin{align*}
  \E\left[\sum_{k=0}^\infty \frac{\ind{R\geq k}}{\pr(R\geq k)}\vert \gamma_k \vert \vert\URk\vert \right]
  &= \sum_{k=0}^\infty \frac{\E\left[\ind{R\geq k} \vert \URk\vert \right]}{\pr(R\geq k)}\vert \gamma_k\vert
  \leq \sum_{k=0}^\infty \vert \gamma_k\vert a_k <\infty,
\end{align*}
where we used
\begin{align*}
\E\left[\ind{R\geq k} \vert \URk\vert \right] 
%&=\sum_{r=0}^\infty\pr(R=r) \E\left[\ind{R\geq k} \vert \URk\vert |R=r\right]\\
&=\sum_{r=k}^\infty \E\left[\ind{R=r} \vert \Urk\vert \right]\\
&=\sum_{r=k}^\infty \pr(R=r) \E\left[ \vert \Urk\vert \right]\\
&\leq a_k\sum_{r=k}^\infty\pr(R=r).
\end{align*} Note that the independence of $R$ and $\Urk$ was invoked in the second line.

\subsection{Verification of \eqref{eq:exp_fhat_given_r}%Conditional expectation of $\sumest$
\label{app:exp_fhat_given_r}}
The conditional expectation exists, and is integrable, since $\sumest$ itself is integrable.
By definition, $\E\left[ \sumest \middle| R=r\right]=\E\left[ \sumest \ind{R = r}\right]/\pr(R=r)$, provided $\pr(R=r)>0$. It follows from \eqref{eq:fubini_sumest} that 
\begin{align}
\label{eq:fubini_sumest_conditional_2}
 \E\left[\sum_{k=0}^\infty \frac{\ind{R\geq k}}{\pr(R\geq k)}\vert \gamma_k \vert\vert \URk\vert
\ind{R = r}\right] <\infty.
\end{align}
Since~\eqref{eq:fubini_sumest_conditional_2} holds, we can apply Fubini's Theorem for general functions \cite[Theorem 18.3]{billingsley1995measure} and obtain
\[ \E\left[ \sumest \middle| R=r\right]
	= \sum_{k=0}^\infty \gamma_k\frac{\ind{r \geq k}}{\pr(R\geq k)} \E[\Urk | R=r]
	=\sum_{k=0}^\infty \gamma_k\frac{\ind{r \geq k}}{\pr(R\geq k)} \left(
	\frac{m}{x_0} -1 \right)^k.
\]
Note the denominator is non-zero, for $r\geq k$, since $\pr(R=r)>0$.

\subsection{Verification of \eqref{eq:varE}}
\label{app:varE}
We now show that $\E\left[ \sumest | R=r\right]$ in \eqref{eq:exp_fhat_given_r} admits the conditional variance expression~\eqref{eq:varE}.
We will do so under the following sufficient condition which will permit the interchange of the expectation operator and the infinite sum:
\begin{align}
\sum_{k=0}^\infty\sum_{l=k+1}^\infty |\gamma_k| |\gamma_l | \frac{1}{\pr(R\geq k)} \left|
	\frac{m}{x_0} -1 \right|^{k+l} < \infty.
\label{eq:verify_varE}
\end{align}
We only consider the expression for the second moment since $\E[\E[\sumest|R]]^{2}=f(m)^{2}$. 
Squaring  \eqref{eq:exp_fhat_given_r} -- noting that it is a finite series -- and replacing $r$ with the (integer valued) random variable $R$ gives
\begin{align}
\label{eq:conditional_expe_squared}
\E[ \sumest | R]^2 = & \sum_{k=0}^\infty \gamma_k^2 \frac{\ind{R \geq k}}{\pr(R\geq k)^2} \left(
	\frac{m}{x_0} -1 \right)^{2k} \\
	&+ 2\sum_{k=0}^\infty\sum_{l=k+1}^\infty \gamma_k\gamma_l\frac{\ind{R \geq l}}{\pr(R\geq k)
	\pr(R\geq l)}\left(\frac{m}{x_0} -1 \right)^{k+l}.
\end{align}
Since the terms in the first sum are all positive, Fubini's Theorem for non-negative functions \cite[Theorem 18.3]{billingsley1995measure} allows the interchange of expectation and the sum,
\begin{align*}
\E\left[\sum_{k=0}^\infty \gamma_k^2\frac{\ind{R \geq k}}{\pr(R\geq k)^2} \left(
	\frac{m}{x_0} -1 \right)^{2k}\right] &= \sum_{k=0}^\infty\E\left[ \gamma_k^2\frac{\ind{R \geq k}}{\pr(R\geq k)^2} \E[\URk|R]^{2}\right]\\
	&= \sum_{k=0}^\infty \gamma_k^2\frac{1}{\pr(R\geq k)} \left(
	\frac{m}{x_0} -1 \right)^{2k}.
\end{align*}
For the second sum, assumption \eqref{eq:verify_varE} also permits the interchange of expectation and the sum. Thus, 
\begin{align*}
&\E\left[\sum_{k=0}^\infty\sum_{l=k+1}^\infty  \gamma_k\gamma_l \frac{\ind{R \geq l}}{\pr(R\geq k)\pr(R\geq l)} 
\left(\frac{m}{x_0} -1 \right)^{k+l}\right]\\
&=\sum_{k=0}^\infty\sum_{l=k+1}^\infty\E\left[  \gamma_k\gamma_l \frac{\ind{R \geq l}}{\pr(R\geq k)\pr(R\geq l)} 
\left(\frac{m}{x_0} -1 \right)^{k+l}\right]
\end{align*}
This completes the verification of \eqref{eq:varE}.

Finally, condition \eqref{eq:verify_varE} holds when the distribution of $R$ satisfies certain conditions. For example, precisely under all the conditions of Proposition~\ref{prop:var_exp}.

\subsection{Derivation of~\eqref{eq:Evar}}
\label{app:derivation_variance}
We start by computing $\E[(\sumest)^2|R=r]$:
\begin{align*}
\E[(\sumest)^2|R=r]&= \E\left[\left(\sum_{k=0}^\infty \frac{\ind{R\geq k}}{\pr(R\geq k)}\gamma_k \URk\right)^2 |R=r\right]\\
&= \E\left[\sum_{k=0}^\infty \frac{\ind{R\geq k}}{\pr(R\geq k)^2}\gamma_k^2 \URk^2|R=r\right]\\
&+\E\left[2\sum_{k=0}^\infty\sum_{l=k+1}^\infty \frac{\ind{R\geq k}\ind{R\geq l}}{\pr(R\geq k)\pr(R\geq l)}\gamma_k\gamma_l \URk\URll |R=r\right]\\
&= \sum_{k=0}^\infty \frac{\ind{r\geq k}}{\pr(R\geq k)^2}\gamma_k^2 \Urk^2+2\sum_{k=0}^\infty\sum_{l=k+1}^\infty \frac{\ind{r\geq l}}{\pr(R\geq k)\pr(R\geq l)}\gamma_k\gamma_l \Urk\Urll.
\end{align*}
This and~\eqref{eq:conditional_expe_squared} give
\begin{align*}
\var[\sumest|R=r ] &=\sum_{k=0}^\infty \frac{\ind{r\geq k}}{\pr(R\geq k)^2}\gamma_k^2 \left(\Urk^2-\left(
	\frac{m}{x_0} -1 \right)^{2k}\right)\\
&+2\sum_{k=0}^\infty\sum_{l=k+1}^\infty \frac{\ind{r\geq l}}{\pr(R\geq k)\pr(R\geq l)}\gamma_k\gamma_l \left(\Urk\Urll-\left(
	\frac{m}{x_0} -1 \right)^{k+l}\right)\\
	&=\sum_{k=0}^\infty \frac{\ind{r\geq k}}{\pr(R\geq k)^2}\gamma_k^2 \var(\Urk) \\
&+2\sum_{k=0}^\infty\sum_{l=k+1}^\infty \frac{\ind{r\geq l}}{\pr(R\geq k)\pr(R\geq l)}\gamma_k\gamma_l \cov\left( U_{r,k}, U_{r,l}  \right).
\end{align*}
Taking expectations w.r.t. $R$ gives the result.
%%% Local Variables: 
%%% mode: latex
%%% TeX-master: "paper.tex"
%%% End: 

\section{Proofs of Section~\ref{sec:general_prop}}

\subsection{Proof of Proposition~\ref{prop:var_exp}}
\label{app:proof_prop_var_exp}

Let $\tilde{f} = \E[\sumest|R]$.
The lower bound can be found by the law of total variance: $\var[\tilde{f}] \geq \var
\left[ \E [\tilde{f} \vert \ind{R>0}]\right]$ and
$\E [\tilde{f} \vert \ind{R>0}]
= \gamma_0 + \frac{\ind{R>0}}{\pr(R>0)} (f(m)-\gamma_0)$.
In the case of $R\sim\Geomp$ we have
\begin{align*}
\var[\tilde{f}] &\geq (f(m)-\gamma_0)^2\left( \frac{1}{\pr(R> 0) }  -1 \right)\\
&=(f(m)-\gamma_0)^2\left( \frac{1}{1-p }  -1 \right).
\end{align*}

For the upper bound, under Assumption~\ref{hyp:f_and_x0},
%and using the fact 
%\[ 	\left(\frac{1}{\pr(R\geq k)} - 1 \right) \geq 0\]
the first sum \eqref{eq:varE} can be bounded as follows:
\begin{align*}
\sum_{k=0}^\infty \gamma_k^2  \beta_0^{2k}
	\left(\frac{1}{\pr(R\geq k)} - 1 \right) &\leq
c^2\sum_{k=0}^\infty\left[\left(\frac{\beta_0^2}{1-p}\right)^k -\beta_0^{2k}\right] \\
& = c^2\left(\frac{1-p}{1-p-\beta_0^2} -\frac{1}{1-\beta_0^2}\right)\\
&= \frac{c^2p\beta_0^2}{(1-\beta_0^2)(1-p-\beta_0^2)},
\end{align*}
where the convergence of the infinite sum is guaranteed by the fact that $\beta_0 <1$ and $1-p>\beta_0^2$.

The absolute value of the second term may be bounded in a similar manner:
\begin{align*}
2 \sum_{k=0}^\infty\sum_{l=k+1}^\infty \left| \gamma_k \gamma_l \right| \beta_0^{k+l}
	\left(\frac{1}{\pr(R\geq k)} - 1 \right) &\leq 2c^2\sum_{k=0}^\infty
	\left(\frac{1}{\pr(R\geq k)} - 1 \right)\sum_{l=k+1}^\infty \beta_0^{k+l}\\
	&=\frac{2c^2 \beta_0 }{1-\beta_0 }\sum_{k=0}^\infty \beta_0^{2k}
	\left(\frac{1}{\pr(R\geq k)} - 1 \right)\\
	&=\frac{2 \beta_0 }{1-\beta_0 }\times\frac{c^2p\beta_0^2}{(1-\beta_0^2)(1-p-\beta_0^2)}.
\end{align*}

Combining the two inequalities leads to:
\begin{align*}
	\var\left[ \E[\sumest|R ] \right]
  & \leq \left(1+\frac{2 \beta_0 }{1-\beta_0
    }\right)\frac{c^2p\beta_0^2}{(1-\beta_0^2)(1-p-\beta_0^2)} \\
	& \leq \frac{c^2\beta_0^2}{(1-\beta_0 )^2}\times \frac{p}{1-p-\beta_0^2}.
\end{align*}

\subsection{Proof of Proposition~\ref{prop:exp_var_convergence_simple}}
\label{app:proof_prop_exp_var_convergence_simple}
\subsubsection{Proof of $\hat{f}^{S}\in\mathcal{L}_{2}$}

For brevity, let $\hat{f}$ denote $\hat{f}^{S}$. By the monotone convergence
theorem, $\mathbb{E}[ \hat{f}^{2}] =\sum_{r=0}^{\infty}\mathbb{E}[ \hat{f}^{2}\ind{R = r}] .$
By definition of $\hat{f}$ (see \eqref{eq:def_sumest}), $\hat{f}\ind{R = r}=\sum_{k=0}^{r}\gamma_{k}U_{r,k}\ind{R = r}/\mathbb{P}(R\geq k)$.
Thus, 
\begin{align*}
\hat{f}^{2}\ind{R = r} & =\sum_{k=0}^{r}\frac{\gamma_{k}^{2}U_{r,k}^{2}}{\mathbb{P}(R\geq k)^{2}}\ind{R = r}+2\sum_{k=0}^{r-1}\sum_{l=k+1}^{r}\ind{R = r}\frac{\gamma_{k}U_{r,k}}{\mathbb{P}(R\geq k)}\frac{\gamma_{l}U_{r,l}}{\mathbb{P}(R\geq l)}\\
\mathbb{E}[\hat{f}^{2}\ind{R = r}]  & =\sum_{k=0}^{r}\frac{\gamma_{k}^{2}\mathbb{E}[U_{r,k}^{2}]}{\mathbb{P}(R\geq k)^{2}}\mathbb{E}[\ind{R = r}]+2\sum_{k=0}^{r-1}\sum_{l=k+1}^{r}\frac{\gamma_{k}\gamma_{l}\mathbb{E}[\ind{R = r}]}{\mathbb{P}(R\geq k)\mathbb{P}(R\geq l)}\mathbb{E}[U_{r,k}U_{r,l}].
\end{align*}
The independence of $R$ and $\Urk$ has been invoked in the final expression for $\mathbb{E}[ \hat{f}^{2}\ind{R = r}]$. For the simple estimate $\hat{f}^{S}$,
\begin{align*}
\Exp\left[\left( \Urk \right)^2\right]= \left(\frac{\sigma^2}{x_0^2}+\left(\frac{m}{x_0}-1\right)^2\right)^k = \beta^{2k}.
\end{align*}
Along with the Cauchy-Schwarz inequality, gives, $|\mathbb{E} [ U_{r,k}U_{r,l}]|\leq \mathbb{E}[U_{r,k}^2]^{1/2}\mathbb{E}[U_{r,l}^2]^{1/2}\leq\beta^{l+k}$. Thus, 
\begin{align*}
 \mathbb{E}[ \hat{f}^{2}\ind{R = r}]& \leq\mathbb{P}\left(R=r\right)\sum_{k=0}^{r}\frac{\gamma_{k}^{2}\beta^{2k}}{\mathbb{P}(R\geq k)^{2}}+2\mathbb{P}\left(R=r\right)\sum_{k=0}^{r-1}\sum_{l=k+1}^{r}\frac{\gamma_{k}\gamma_{l}}{\mathbb{P}(R\geq k)\mathbb{P}(R\geq l)}\beta^{k+l}\\
\sum_{r=1}^{t}\mathbb{E}[\hat{f}^{2}\ind{R = r}]  & \leq\sum_{r=1}^{t}\mathbb{P}\left(R=r\right)\sum_{k=0}^{r}\frac{\gamma_{k}^{2}\beta^{2k}}{\mathbb{P}(R\geq k)^{2}}\\
 & \qquad+\sum_{r=1}^{t}2\mathbb{P}\left(R=r\right)\sum_{k=0}^{r-1}\sum_{l=k+1}^{r}\frac{\gamma_{k}\gamma_{l}}{\mathbb{P}(R\geq k)\mathbb{P}(R\geq l)}\beta^{k+l}
\end{align*}
The first sum can be expressed as 
\begin{align*}
\sum_{r=0}^{t}\mathbb{P}\left(R=r\right)\sum_{k=0}^{r}\frac{\gamma_{k}^{2}\beta^{2k}}{\mathbb{P}(R\geq k)^{2}} & =\sum_{k=0}^{t}\frac{\gamma_{k}^{2}\beta^{2k}}{\mathbb{P}(R\geq k)^{2}}\sum_{r=k}^{t}\mathbb{P}\left(R=r\right)\\
 & \leq\sum_{k=0}^{t}\frac{\gamma_{k}^{2}\beta^{2k}}{\mathbb{P}(R\geq k)}\\
 & \leq\sum_{k=0}^{\infty}\frac{c^{2}\beta^{2k}}{(1-p)^{k}}\\
 & =c^{2}\frac{1-p}{1-p-\beta^{2}},
\end{align*} where the final line uses condition $1-p > \beta^2$.
Similarly, the double  sum in the right-hand side of $\sum_{r=1}^{t}\mathbb{E}[ \hat{f}^{2}\ind{R = r}] $ can be expressed, and then upper
bounded as follow: 
\begin{align*}
 & \sum_{r=1}^{t}2\mathbb{P}\left(R=r\right)\sum_{k=0}^{r-1}\sum_{l=k+1}^{r}\frac{\gamma_{k}\gamma_{l}}{\mathbb{P}(R\geq k)\mathbb{P}(R\geq l)}\beta^{k+l}\\
 & =\sum_{k=0}^{t-1}\sum_{l=k+1}^{t}\frac{\gamma_{k}\gamma_{l}}{\mathbb{P}(R\geq k)\mathbb{P}(R\geq l)}\beta^{k+l}\sum_{r=l}^{t}2\mathbb{P}\left(R=r\right)\\
 & \leq\sum_{k=0}^{t-1}\sum_{l=k+1}^{t}\frac{\gamma_{k}\gamma_{l}}{\mathbb{P}(R\geq k)}\beta^{k+l}2\\
 & \leq2c^{2}\sum_{k=0}^{t-1}\sum_{l=k+1}^{t}\frac{\beta^{k+l}}{\mathbb{P}(R\geq k)}\\
 & \leq2c^{2}\frac{\beta}{1-\beta}\frac{1-p}{1-p-\beta^{2}}.
\end{align*}
Thus, $\sum_{r=1}^{t}\mathbb{E}[ \hat{f}^{2}\ind{R = r}] $
is bounded uniformly in $t$ by 
\begin{equation}
c^{2}\frac{1-p}{1-p-\beta^{2}}+2c^{2}\frac{\beta}{1-\beta}\frac{1-p}{1-p-\beta^{2}},
\label{eq:proof_simple_est_in_L2}
\end{equation}
and so $\mathbb{E}[\hat{f}^{2}]<\infty$.

\subsubsection{Proof of \eqref{eq:prop_exp_var_convergence_simple}}
We may bound~\eqref{eq:Evar} by the second moment,
\[
\E\left[ \var[\sumest|R ] \right] = \E\left[ \E[\sumest^2|R ] \right] - \E\left[ \E[\sumest |R ]^2 \right] \leq  \E\left[ \sumest^2 \right].
\] Using \eqref{eq:proof_simple_est_in_L2},
\begin{align*}
\mathbb{E}[\hat{f}^{2}] &
    \leq c^2 \left(1+\frac{2\beta}{1-\beta}\right)\frac{1-p}{1-p-\beta^2}\\
	&=c^2\frac{1+\beta}{1-\beta}\frac{1-p}{1-p-\beta^2}.
\end{align*}

\subsection{Verification of Example \ref{ex:simple_var} }
\label{app:counter}
Recall that by the law of total probability we have $$\var[\sumestS] = \var\left[ \E[\sumestS|R ] \right]	+ \E\left[ \var[\sumestS | R ] \right].$$ If we let $p\to 0$, Proposition~\ref{prop:var_exp} guarantees that $\var\left[ \E[\sumestS|R ]\right]\to 0$ regardless of the choice of $f, m, x_0$. 
The behaviour of $\E\left[ \var[\sumestS | R ] \right]$ as $p$ tends to $0$ can be
obtained by invoking the dominated convergence theorem to interchange $\lim_{p\to 0}$ and the sums  
%where the integral is takenw.r.t the counting measure on sequence space $l^1$ 
(a result also known as Tannery's Theorem; \citealp{boas1965tannery}):
\begin{align*}
% \label{eq:exp_var_limit}
  \E\left[ \var[\sumestS|R ] \right]
  &= \sum_{k=0}^\infty \frac{\gamma_k^2}{\pr(R\geq k)^2}
	\left\{ \sum_{r=k}^\infty \pr(R=r) \var(\UrkS)  \right\}\\
	&\quad + 2 \sum_{k=0}^{\infty}\sum_{l=k+1}^{\infty} \frac{\gamma_k
		\gamma_l}{\pr(R\geq k)\pr(R\geq l)}
	\left\{  \sum_{r=l}^{\infty}\pr(R=r)\cov\left( \UrkS, \UrlS \right)\right\}\notag\\
	&= \sum_{k=0}^\infty \frac{\gamma_k^2}{\pr(R\geq k)}
	\left\{ \left(\frac{\sigma^2}{x_0^2}+\left(\frac{m}{x_0}-1\right)^2\right)^k
	-\left(\frac{m}{x_0}-1\right)^{2k}\right\}\notag\\
	& \quad + 2 \sum_{k=0}^{\infty}\sum_{l=k+1}^{\infty} \frac{\gamma_k
		\gamma_l}{\pr(R\geq k)}
	\left\{  \left(\frac{\sigma^2}{x_0^2}+\left(\frac{m}{x_0}-1\right)^2\right)^k
	\left(\frac{m}{x_0} - 1\right)^{l-k}
	-\left(\frac{m}{x_0}-1\right)^{k+l}\right\}\notag\\
 & \to \sum_{k=0}^\infty \gamma_k^2
	\left\{ \left(\frac{\sigma^2}{x_0^2}+\left(\frac{m}{x_0}-1\right)^2\right)^k
	-\left(\frac{m}{x_0}-1\right)^{2k}\right\}\notag\\
	& \quad + 2 \sum_{k=0}^{\infty}\sum_{l=k+1}^{\infty} \gamma_k
		\gamma_l
	\left\{  \left(\frac{\sigma^2}{x_0^2}+\left(\frac{m}{x_0}-1\right)^2\right)^k
	\left(\frac{m}{x_0} - 1\right)^{l-k}
	-\left(\frac{m}{x_0}-1\right)^{k+l}\right\}.\notag
\end{align*}

%To see that $\lim_{p\to0}\E\left[ \var[\sumestS|R ] \right]$ can take non-zero values we consider the following example:
%\begin{example}
%Consider $f(x) =1/x$ so that $\gamma_k = (-1)^{k-1}$ and assume that we want to estimate $f(m)$ for $m>0$. Then, we have
For $f(x) =1/x$, $\gamma_k = (-1)^{k-1}$, and substituting into the above expression gives
\begin{align*}
x_0\lim_{p\to0}\E\left[ \var[\sumestS|R ] \right] &= \sum_{k=0}^\infty
	\left\{ \beta^{2k}
	-\beta_0^{2k}\right\}\notag \\
	& \quad + 2 \sum_{k=0}^{\infty}(-1)^{k-1}\left\{  \beta^{2k}
	\left(\frac{m}{x_0} - 1\right)^{-k}
	-\left(\frac{m}{x_0}-1\right)^{k}\right\}\sum_{l=k+1}^{\infty} (-1)^{l-1}\left(\frac{m}{x_0}-1\right)^{l}\\
	& = \frac{1}{1-\beta^2}-\frac{1}{1-\beta_0^2}+2 \left(\frac{x_0}{m}-1\right)\sum_{k=0}^{\infty}\left\{\beta^{2k}
	-\left(\frac{m}{x_0}-1\right)^{2k}\right\}\\
	& = \frac{1}{1-\beta^2}-\frac{1}{1-\beta_0^2}+2 \left(\frac{x_0}{m}-1\right)\left\{  \frac{1}{1-\beta^2}
	-\frac{1}{1-\beta_0^2}\right\}\\
	& = \left(\frac{2x_0}{m}-1\right)\left\{  \frac{1}{1-\beta^2}
	-\frac{1}{1-\beta_0^2}\right\}.
\end{align*}
%If $\sigma^2=0$, i.e. $U_{r,k}=(m/x_0 - 1)^k$, then we have $\beta^2 = \beta_0^2$ and $\lim_{p\to0}\E\left[ \var[\sumestS|R ] \right] = 0$. On the other hand, 
%If $\sigma^2>0$, we have $\beta^2>\beta_0^2$, and since we need to select $x_0>m/2$ to guarantee that the Taylor sum~\eqref{eq:taylor} converges, we have $\lim_{p\to0}\E\left[ \var[\sumestS|R ] \right]>0$, showing that the variance of the simple estimator does not converge to 0 in general.
%\end{example}
\subsection{Proof of Proposition~\ref{prop:exp_var_convergence_cycle}}
\label{app:proof_prop_exp_var_convergence_cycle}

\subsubsection{Second moment of cycling variables}
We first show that the assumptions of Proposition~\ref{prop:exp_var_convergence_cycle} imply $\sumestC$ is integrable, and thus $\E[\sumestC|R]$ is well-defined, by appealing to Lemma \ref{lem:integrablefhat}. To use Lemma \ref{lem:integrablefhat}, we need to show Assumption~\ref{ass:condition_urk} holds for the $\UrkC$'s.

Let $\widetilde{X}_i:=(X_i/x_0-1)$, then (for $r\geq k$),
\begin{equation*}
  \E\left[|\UrkC|\right]
  \leq \E\left[\prod_{i=1}^k |\widetilde{X}_i|\right]
  = \left\{\E\left[|\widetilde{X}_i|\right] \right\}^k
  \leq \left\{\E\left[|\widetilde{X}_i|^2\right]^{1/2} \right\}^k
  \leq \beta^k,
\end{equation*}
where $\beta$, defined in \eqref{eq:hyp_simple}, is the square-root of the second moment of $\widetilde{X}_i$. Assumption~\ref{hyp:f_and_x0}-(i) assumes $ |\gamma_k|$ is uniformly bounded by some constant $c$ while condition  \eqref{eq:hyp_simple} assumes  $\beta<1$. Hence $\sum |\gamma_k| a_k$ (of Assumption~\ref{ass:condition_urk}) is bounded above by  $\sum c \beta_k < \infty$. Thus, Lemma \ref{lem:integrablefhat} applies to  $\sumest=\sumestC$.

Before proceeding with the proof of
Proposition~\ref{prop:exp_var_convergence_cycle}, we derive an expression for
the moments of $\UrkC $.
% In this section, we fix $R=r$ and derive the second moment of $\UrkC$.
It is convenient to define $\rho\eqdef 1 + \sigma^2/(m - x_0)^2$ and observe that
$\rho>1$.

The second moment of the cycling variables can be obtained by direct multiplication, distinguishing between three different cases:
\begin{lemma}
\label{lemma:covariance_Z}
For any $r \geq l\geq k$ we have
\begin{enumerate}%[label=(\alph*)]
\item \label{lemma:covariance_Z_a}If $r\geq l+k$
\begin{align*}
\E\left[\UrkC \UrlC\right] = \left(\frac{m}{x_0}-1\right)^{l+k}\frac{1}{r}\left[(r-l-k+1)+ 2\sum_{j=1}^{k-1}\rho^j+(l-k+1)\rho^k\right],
\end{align*}
with the convention that $\sum_{j=n_1}^{n_2} = 0$ if $n_1<n_2$.
\item \label{lemma:covariance_Z_b}If $r< l+k$
\begin{align*}
\E\left[\UrkC \UrlC\right] =\left(\frac{m}{x_0}-1\right)^{l+k}\frac{1}{r}\left[ 2\sum_{j=l+k-r+1}^{k-1}\rho^j+(l-k+1)\rho^k+(l+k-r+1)\rho^{l+k-r}\right],
\end{align*}
with the convention that $\sum_{j=n_1}^{n_2} = 0$ if $n_1<n_2$.
\item \label{lemma:covariance_Z_c} If $r = l$, the cycling variable $\UrlC$ coincides with $\UrlS$ and we have
\begin{align*}
\E\left[\UrkC \UrlC\right] = \left(\frac{\sigma^2}{x_0^2}+\left(\frac{m}{x_0}-1\right)^2\right)^k\left( \frac{m}{x_0} -1 \right)^{l-k} = \left(\frac{m}{x_0}-1\right)^{l+k}\rho^k.
\end{align*}
\end{enumerate}
\end{lemma}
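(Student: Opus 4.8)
The plan is to expand $\UrkC\UrlC$ as a double average of products of the i.i.d.\ variables $\widetilde{X}_i\eqdef X_i/x_0-1$, take expectations term by term, and reduce the whole computation to a purely combinatorial count of how a cyclic arc of length $l$ overlaps a fixed cyclic arc of length $k$ as it is rotated around $\{1,\dots,r\}$.

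First I would write $\UrkC=\frac1r\sum_{s=1}^{r}Q_s^{(k)}$, where $Q_s^{(k)}\eqdef\prod_{i=s}^{s+k-1}\widetilde{X}_{\langle i\rangle}$ is the product of the $k$ consecutive variables starting at position $s$, indices $\langle\cdot\rangle$ read cyclically in $\{1,\dots,r\}$; thus $Q_s^{(k)}$ involves exactly the $k$ indices in the cyclic arc $A_s\eqdef\{s,\dots,s+k-1\}$, and likewise $Q_t^{(l)}$ the $l$ indices in the arc $B_t$. Then
\[
  \UrkC\UrlC=\frac1{r^2}\sum_{s=1}^{r}\sum_{t=1}^{r}Q_s^{(k)}Q_t^{(l)} .
\]
Since the $\widetilde X_i$ are i.i.d.\ with $\E[\widetilde X_i]=\tfrac m{x_0}-1$ and (for $m\neq x_0$, as the definition of $\rho$ presumes) $\E[\widetilde X_i^2]=\tfrac{\sigma^2}{x_0^2}+\bigl(\tfrac m{x_0}-1\bigr)^2=\bigl(\tfrac m{x_0}-1\bigr)^2\rho$, each index lying in both arcs contributes a second moment and each index lying in exactly one arc a first moment, so
\[
  \E\!\left[Q_s^{(k)}Q_t^{(l)}\right]=\bigl(\E[\widetilde X_i^2]\bigr)^{|A_s\cap B_t|}\bigl(\E[\widetilde X_i]\bigr)^{k+l-2|A_s\cap B_t|}=\Bigl(\tfrac m{x_0}-1\Bigr)^{k+l}\rho^{\,|A_s\cap B_t|} .
\]
Because $|A_s\cap B_t|$ depends only on the offset $d\eqdef t-s\bmod r$, for each $s$ the inner sum over $t$ equals $\sum_{d=0}^{r-1}\rho^{\,n_d}$ with $n_d\eqdef|A_0\cap B_d|$ independent of $s$, $B_d$ being a length-$l$ cyclic arc shifted by $d$ from a fixed length-$k$ arc $A_0$. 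Hence
\[
  \E\!\left[\UrkC\UrlC\right]=\Bigl(\tfrac m{x_0}-1\Bigr)^{k+l}\,\frac1r\sum_{d=0}^{r-1}\rho^{\,n_d} .
\]
(All quantities are finite, each $\UrkC$ being a finite sum of products of variables with finite second moment.)

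It then remains to identify the multiset $\{n_d:0\leq d<r\}$. The elementary ingredient is: for two cyclic arcs on $\{1,\dots,r\}$ of lengths $a\leq b$ with $a+b\leq r$, as one arc is rotated through all $r$ positions the overlap with the other runs (up to rotation) through $0,\dots,0,1,2,\dots,a-1,a,\dots,a,a-1,\dots,2,1$; equivalently the overlap is $0$ for exactly $r-a-b+1$ positions, is $a$ for exactly $b-a+1$ positions, and equals each $j\in\{1,\dots,a-1\}$ for exactly $2$ positions. This holds because rotating by one step changes the overlap by $-1$, $0$ or $+1$, a position of zero overlap exists since $a+b\leq r$, and exactly $b-a+1$ rotations fully contain the shorter arc in the longer one. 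Applying this with $(a,b)=(k,l)$ yields case~\ref{lemma:covariance_Z_a}, valid precisely when $r\geq l+k$. For case~\ref{lemma:covariance_Z_b}, $l<r<l+k$, observe that $B_d^{c}$ is then a cyclic arc of length $r-l\in\{1,\dots,k-1\}$ and $n_d=k-|A_0\cap B_d^{c}|$; applying the same fact to $A_0$ (length $k$) and $B_d^{c}$ (length $r-l$), which satisfy $(r-l)+k\leq r$, and reindexing gives the stated formula. Case~\ref{lemma:covariance_Z_c}, $r=l$, is immediate: then $B_d=\{1,\dots,r\}$ for every $d$, so $n_d=k$ and $\sum_d\rho^{\,n_d}=r\rho^k$; moreover $\UrlC=\UrlS$ here, which is also why the expression coincides with the second moment of the simple variables.

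The main obstacle is the combinatorial accounting in case~\ref{lemma:covariance_Z_b}: one must handle the three overlap regimes on a circle, pass correctly to the complement of the longer arc, and check that the degenerate boundaries ($r=l+1$, $k=l$, $k+l=r$) are consistent with the empty-sum conventions in the statement — and, relatedly, that cases~\ref{lemma:covariance_Z_a}--\ref{lemma:covariance_Z_c} do exhaust all $1\leq k\leq l\leq r$ with \ref{lemma:covariance_Z_c} taking precedence when $r=l$. Everything else is routine verification.
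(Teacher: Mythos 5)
Your proof is correct and takes essentially the same route as the paper: reduce the double sum by cyclic symmetry (the paper reduces to $\E[\UrlC\prod_{j=1}^k\widetilde X_j]$, you reduce to a single sum over offsets $d$), then count how many rotations realize each overlap size between the two cyclic arcs. Your formulation of the residual count as a clean arc-overlap lemma and your use of the complement $B_d^c$ for case~\ref{lemma:covariance_Z_b} is a nice touch — the paper only writes out case~\ref{lemma:covariance_Z_a} explicitly and asserts the others "follow the same lines" — and you are right to flag that \ref{lemma:covariance_Z_c} must take precedence when $r=l$, since the overlap lemma (and hence formula \ref{lemma:covariance_Z_b}) degenerates when the complement arc has length zero.
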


\begin{proof}
We only prove case~\ref{lemma:covariance_Z_a}. The proof for the second and third case follows the same lines.

Recall that $\widetilde{X}_i=(X_i/x_0-1)$, let
$\widetilde{m}:=\Exp\left[X_i/x_0-1\right]$ and
$\widetilde{\sigma}^2:=\var\left(X_i/x_0-1\right) = \sigma^2/x_0^2$. Expanding
$\UrlC$ we find
\begin{align*}
  r \times \UrlC
  = & \prod_{i=1}^l\widetilde{X}_i + \dots + \prod_{i=k}^{l+k-1}\widetilde{X}_i \\
&  +\prod_{i=k+1}^{l+k}\widetilde{X}_i + \prod_{i=k+2}^{l+k+1}\widetilde{X}_i+\dots+\prod_{i=r-l+1}^{r}\widetilde{X}_i\\
& + \left(\prod_{i=r-l+2}^{r}\widetilde{X}_i\right)\widetilde{X}_1+\dots +\widetilde{X}_r\prod_{i=1}^{l-1}\widetilde{X}_i.
\end{align*}
The expansion has been organised in three separate rows for the following reason.  Note that each row is non-empty since $r\geq k+l.$ { (i)} We commence by calculating $\Exp\left[\UrlC\prod_{j=1}^k\widetilde{X}_j \right] $. The first and third row involve non-independent products with the term $\prod_{j=1}^k\widetilde{X}_j $, whereas for the middle row, the expected value is the product of the expectations. {(ii)} The previous calculation is sufficient since, by symmetry, $\Exp\left[ \UrkC\UrlC\right]=\Exp\left[\UrlC\prod_{j=1}^k\widetilde{X}_j \right]$.

\begin{align*}
  & r \times \UrlC \times \prod_{j=1}^k\widetilde{X}_j\\
  =& \prod_{j=1}^k\widetilde{X}_j^2\prod_{i=k+1}^l\widetilde{X}_i + \dots + \widetilde{X}_k^2\prod_{j=1}^{k-1}\widetilde{X}_j\prod_{i=k+1}^{l+k-1}\widetilde{X}_i\\
&+\prod_{j=1}^k\widetilde{X}_j\prod_{i=k+1}^{l+k}\widetilde{X}_i +\dots+\prod_{j=1}^k\widetilde{X}_j\prod_{i=r-l+1}^{r}\widetilde{X}_i \\
&+ \widetilde{X}_1^2\prod_{j=2}^k\widetilde{X}_j\prod_{i=r-l+2}^{r}\widetilde{X}_i+\dots +\prod_{j=1}^k\widetilde{X}_j^2\prod_{i=r-l+k+1}^{r}\widetilde{X}_i+\dots
+\left(\prod_{j=1}^k\widetilde{X}_j^2\right)\widetilde{X}_r \prod_{i=k+1}^{l-1}\widetilde{X}_i,
\end{align*}
and, taking expectations,
\begin{align*}
  \Exp\left[r \UrlC\prod_{j=1}^k\widetilde{X}_j \right]=
  & (\widetilde{\sigma}^2+\widetilde{m}^2)^k\widetilde{m}^{l-k} + \dots + (\widetilde{\sigma}^2+\widetilde{m}^2)\widetilde{m}^{l+k-2}\\
&+\widetilde{m}^{l+k}+\ldots+\widetilde{m}^{l+k}\\
&+ (\widetilde{\sigma}^2+\widetilde{m}^2)\widetilde{m}^{l+k-2}+\dots
+(\widetilde{\sigma}^2+\widetilde{m}^2)^k\widetilde{m}^{l-k}+\dots +(\widetilde{\sigma}^2+\widetilde{m}^2)^k\widetilde{m}^{l-k}.
\end{align*}
Counting how many times each term in the sum above appears we find that
\begin{align*}
\Exp\left[r \UrlC\prod_{j=1}^k\widetilde{X}_j \right] &=\widetilde{m}^{l+k}(r-(l+k)+1) + 2\sum_{j=1}^{k-1}(\widetilde{\sigma}^2+\widetilde{m}^2)^j\widetilde{m}^{l+k-2j}+(\widetilde{\sigma}^2+\widetilde{m}^2)^{k}\widetilde{m}^{l-k}(l-k+1)\\
& =\widetilde{m}^{l+k}\left[(r-l-k+1)+ 2\sum_{j=1}^{k-1}\left(\frac{\widetilde{\sigma}^2}{\widetilde{m}^2}+1\right)^j+(l-k+1)\left(\frac{\widetilde{\sigma}^2}{\widetilde{m}^2}+1\right)^k\right].
\end{align*}
We can then exploit the symmetry of $\UrkC$ to conclude
\begin{align*}
\Exp\left[ \UrkC\UrlC\right] &= \frac{\widetilde{m}^{l+k}}{r}\left[(r-l-k+1)+ 2\sum_{j=1}^{k-1}\left(\frac{\widetilde{\sigma}^2}{\widetilde{m}^2}+1\right)^j+(l-k+1)\left(\frac{\widetilde{\sigma}^2}{\widetilde{m}^2}+1\right)^k\right].
\end{align*}
Using the fact that $\widetilde{m}=(m/x_0-1)$ and $\widetilde{\sigma}^2=\sigma^2/x_0^2$ and recalling the definition of $\rho= 1 + \sigma^2/(m - x_0)^2$ we finally obtain
\begin{align*}
\Exp\left[ \UrkC\UrlC\right] = \left(\frac{m}{x_0}-1\right)^{l+k}\frac{1}{r}\left[(r-l-k+1)+ 2\sum_{j=1}^{k-1}\rho^j+(l-k+1)\rho^k\right].
\end{align*}
\end{proof}

\subsubsection{Bound on second moments of cycling variables}

To characterise the behaviour of the cycling estimator as the number of terms in the Taylor expansion increases, we establish the following bound on $ \cov\left[\UrkC\UrlC\right]$.
\begin{lemma}
\label{lemma:covariance_cycle_bound}
Let $r\geq l\geq k$. Then,
\begin{align*}
\cov\left[\UrkC\UrlC\right] < \left\lvert\frac{m}{x_0}-1\right\rvert^{l+k}
\times \begin{cases}
\rho^k(l+k)/r,& \text{for } r\geq l+k\\
(\rho^k+1)& \text{for } r< l+k\\
(\rho^k-1)& \text{for } r=l.
\end{cases}
\end{align*}
\end{lemma}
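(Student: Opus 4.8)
The plan is to read off the bound directly from the exact moment formulas in Lemma~\ref{lemma:covariance_Z}. Writing $\cov(\UrkC,\UrlC)=\E[\UrkC\UrlC]-\E[\UrkC]\E[\UrlC]$ and using the unbiasedness $\E[\UrkC]=(m/x_0-1)^k$, $\E[\UrlC]=(m/x_0-1)^l$ (Assumption~\ref{ass:condition_urk}, verified above for the cycling variables), each of the three cases of Lemma~\ref{lemma:covariance_Z} expresses $\cov(\UrkC,\UrlC)$ as $(m/x_0-1)^{l+k}$ times a bracketed quantity. Since $|(m/x_0-1)^{l+k}|=|m/x_0-1|^{l+k}$, it suffices to bound that bracketed quantity from above; and whenever $(m/x_0-1)^{l+k}<0$ we will check that the bracket is nonnegative, so that $\cov(\UrkC,\UrlC)\le 0$ and the (positive) claimed bound holds trivially. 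The substantive work is thus the case $(m/x_0-1)^{l+k}>0$, where the bracket itself must be shown to lie below the stated constant.

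First I would handle case~\ref{lemma:covariance_Z_a}, $r\ge l+k$. Subtracting $1$ from $\frac{1}{r}\big[(r-l-k+1)+2\sum_{j=1}^{k-1}\rho^j+(l-k+1)\rho^k\big]$ leaves $\frac{1}{r}\big[-(l+k-1)+2\sum_{j=1}^{k-1}\rho^j+(l-k+1)\rho^k\big]$. Using $\rho>1$, every $\rho^j$ with $j\le k$ is at most $\rho^k$, so the $\rho$-terms contribute at most $\big(2(k-1)+(l-k+1)\big)\rho^k=(l+k-1)\rho^k$, whence the bracket is at most $\frac{l+k-1}{r}(\rho^k-1)<\frac{l+k}{r}\rho^k$, which is the desired bound. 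Replacing $\rho$ by $1$ instead shows the same bracket is $\ge 0$, settling the negative case.

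For case~\ref{lemma:covariance_Z_b}, $l<r<l+k$, write $\cov(\UrkC,\UrlC)=(m/x_0-1)^{l+k}(B-1)$ with $B=\frac{1}{r}\big[2\sum_{j=l+k-r+1}^{k-1}\rho^j+(l-k+1)\rho^k+(l+k-r+1)\rho^{l+k-r}\big]$. The point is that every exponent here is at most $k$ (indeed $j\le k-1$, and $l+k-r\le k-1$ because $r\ge l+1$), while the coefficients, counted with multiplicity, sum to exactly $r$: $2(r-l-1)+(l-k+1)+(l+k-r+1)=r$. Hence $1\le B\le\rho^k$, so $B-1\le\rho^k-1<\rho^k+1$, the stated bound; and $B\ge1$ again disposes of the negative case. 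Case~\ref{lemma:covariance_Z_c}, $r=l$, is immediate from the exact identity $\cov(\UrkC,\UrlC)=(m/x_0-1)^{l+k}(\rho^k-1)$ together with $\rho>1$.

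I do not expect a genuine obstacle here: Lemma~\ref{lemma:covariance_Z} does all the real work, and what remains is elementary. The only points requiring a little care are (i) the combinatorial bookkeeping in case~\ref{lemma:covariance_Z_b} showing the coefficients sum to exactly $r$ — this is precisely what yields $B\le\rho^k$ — and (ii) consistently tracking the sign of $(m/x_0-1)^{l+k}$ so that the one-sided inequality is legitimate; in the degenerate situations $\sigma^2=0$ (so $\rho=1$) or $k=0$ the displayed inequalities should be read as $\le$ rather than $<$.
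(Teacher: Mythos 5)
Your proof is correct and follows essentially the same route as the paper's: apply the exact moment formulas of Lemma~\ref{lemma:covariance_Z}, bound each $\rho^j$ by $\rho^k$ using $\rho>1$, and do the coefficient bookkeeping. The only differences are cosmetic but in your favor: you keep the covariance factored as $(m/x_0-1)^{l+k}(B-1)$ throughout and explicitly verify $B\ge 1$ (so the sign argument is airtight, whereas the paper's inequality chain handles the sign of $(m/x_0-1)^{l+k}$ implicitly), and in case~(b) your ``weights sum to $r$, exponents lie in $\{1,\dots,k\}$'' observation gives the sharper intermediate bound $\rho^k-1$ before relaxing to the stated $\rho^k+1$.
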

\begin{proof}
The proofs relies on bounding each one of the second moments obtained in Lemma~\ref{lemma:covariance_Z} separately.
Recall that $\cov\left[\UrkC\UrlC\right] = \Exp\left[ \UrkC\UrlC\right] - \Exp\left[ \UrkC\right]\Exp\left[ \UrlC\right]$.
Using the expression for $\Exp\left[ \UrkC\UrlC\right]$ in Lemma~
\ref{lemma:covariance_Z}--\ref{lemma:covariance_Z_a}, since $\rho>1$, we have
\begin{align*}
\cov\left[\UrkC\UrlC\right]&=\left(\frac{m}{x_0}-1\right)^{l+k}\frac{1}{r}\left[(r-l-k+1)+ 2\sum_{j=1}^{k-1}\rho^j+(l-k+1)\rho^k-r\right]\\
&<\left\lvert\frac{m}{x_0}-1\right\rvert^{l+k}\frac{1}{r}\left[1+ 2\sum_{j=1}^{k-1}\rho^j+(l-k+1)\rho^k\right]\\
&<\frac{\rho^k}{r}\left\lvert\frac{m}{x_0}-1\right\rvert^{l+k}\left[1+ 2\sum_{j=1}^{k-1}\rho^{j-k}+(l-k+1)\right]\\
&<\frac{\rho^k}{r}\left\lvert\frac{m}{x_0}-1\right\rvert^{l+k}\left[1+ 2(k-1)+(l-k+1)\right]\\
&=  \frac{\rho^k}{r}(l+k)\left\lvert\frac{m}{x_0}-1\right\rvert^{l+k}.
\end{align*}

Similarly, for case~\ref{lemma:covariance_Z_b} we find
\begin{align*}
\Exp\left[ \UrkC\UrlC\right] &=\frac{1}{r}\left(\frac{m}{x_0}-1\right)^{l+k}\rho^k\left[ 2\sum_{j=l+k-r+1}^{k-1}\rho^{j-k}+(l-k+1)+(l+k-r+1)\rho^{l-r}\right]\\
&<\frac{1}{r}\left\lvert\frac{m}{x_0}-1\right\rvert^{l+k}\rho^k\left[ 2(r-l-1)+(l-k+1)+(l+k-r+1)\right]\\
&=\left\lvert\frac{m}{x_0}-1\right\rvert^{l+k}\rho^k,
\end{align*}
where we used the fact that $\rho>1$. 
It follows that
\begin{align*}
\cov\left[\UrkC\UrlC\right] <\left\lvert\frac{m}{x_0}-1\right\rvert^{l+k}\left(\rho^k-\textrm{sgn}\left(\left(\frac{m}{x_0}-1\right)^{l+k}\right)\right) \leq \left\lvert\frac{m}{x_0}-1\right\rvert^{l+k}(\rho^k+1).
\end{align*}
%Observing that $r<l+k\leq 2l$ we find \begin{align*} \cov\left[\UrkC\UrlC\right] <\frac{4l}{r}\left\lvert\frac{m}{x_0}-1\right\rvert^{l+k}\rho^k. \end{align*}
Finally, case~\ref{lemma:covariance_Z_c} follows straightforwardly
\begin{align*}
\cov\left[\UrkC\UrlC\right] =  \left(\frac{m}{x_0}-1\right)^{l+k}(\rho^k - 1).% < 4\left\lvert\frac{m}{x_0}-1\right\rvert^{l+k}\rho^k ,
\end{align*} 
Collecting all the stated, for cases ~\ref{lemma:covariance_Z_a}, ~\ref{lemma:covariance_Z_b} and ~\ref{lemma:covariance_Z_c}, completes the proof.
\end{proof}

\subsubsection{Proof of Proposition~\ref{prop:exp_var_convergence_cycle}}

Before proceeding to the proof of Proposition~\ref{prop:exp_var_convergence_cycle} we state and prove the following simple result:
\begin{lemma}[Expectation of Reciprocal]
\label{lemma:reciprocal_geometric}
If $R\sim\Geom(p)$ then for all $k\geq 1$,
\[\E\left[\frac{1}{R}\middle| R\geq k\right]
  = \frac{1}{\pr(R\geq k)}\sum_{r=k}^{\infty}\frac{1}{r} \pr(R=r)
  \leq \frac{p\log (1/p) }{1-p}.
  \]
\end{lemma}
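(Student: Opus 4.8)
The plan is to prove Lemma~\ref{lemma:reciprocal_geometric} by a direct computation, reducing the sum $\sum_{r=k}^{\infty} r^{-1}(1-p)^r$ to an integral that can be bounded cleanly. First I would use $\pr(R=r)=(1-p)^r p$ and $\pr(R\geq k)=(1-p)^k$, so that
\[
  \E\left[\frac{1}{R}\,\middle|\,R\geq k\right]
  = \frac{p}{(1-p)^k}\sum_{r=k}^{\infty}\frac{(1-p)^r}{r}.
\]
The key identity is $\sum_{r=k}^{\infty} x^r/r = \int_0^x \frac{t^{k-1}}{1-t}\,\rmd t$ for $x\in(0,1)$, valid since $\sum_{r=k}^\infty t^{r-1} = t^{k-1}/(1-t)$ and term-by-term integration is justified by monotone convergence on $[0,x]$. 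Applying this with $x=1-p$ gives
\[
  \E\left[\frac{1}{R}\,\middle|\,R\geq k\right]
  = \frac{p}{(1-p)^k}\int_0^{1-p}\frac{t^{k-1}}{1-t}\,\rmd t.
\]

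Next I would bound the integral. On $(0,1-p)$ we have $1-t > p$, hence $\frac{1}{1-t} < \frac{1}{p}\cdot\frac{1-t}{1-t}$ is too crude; instead split or just bound $t^{k-1}\leq (1-p)^{k-1}$ on the interval, giving
\[
  \int_0^{1-p}\frac{t^{k-1}}{1-t}\,\rmd t
  \leq (1-p)^{k-1}\int_0^{1-p}\frac{\rmd t}{1-t}
  = (1-p)^{k-1}\log\frac{1}{p}.
\]
Substituting back yields
\[
  \E\left[\frac{1}{R}\,\middle|\,R\geq k\right]
  \leq \frac{p}{(1-p)^k}\,(1-p)^{k-1}\log\frac1p
  = \frac{p\log(1/p)}{1-p},
\]
which is exactly the claimed bound, and notably independent of $k$.

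I do not expect a serious obstacle here; the only points requiring a little care are the justification of exchanging sum and integral (monotone convergence, all terms nonnegative) and checking the bound $t^{k-1}\le(1-p)^{k-1}$ is valid for $k\ge 1$ on the whole interval $[0,1-p]$, which it is since $t\le 1-p<1$. One might worry the bound $t^{k-1}\leq(1-p)^{k-1}$ is wasteful for large $k$, but since the statement only asks for a $k$-uniform bound this is harmless; in fact a sharper estimate would only improve constants and is unnecessary for the downstream use in the proof of Proposition~\ref{prop:exp_var_convergence_cycle}. An alternative route avoiding the integral entirely is to note $\sum_{r=k}^\infty (1-p)^r/r \le \frac{1}{k}\sum_{r=k}^\infty(1-p)^r = \frac{(1-p)^k}{kp}$, giving the bound $1/k$, but this is incomparable to $p\log(1/p)/(1-p)$ (better for large $k$, worse for small $k$), so the integral approach is the one that delivers the stated inequality.
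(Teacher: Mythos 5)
Your proof is correct, and it takes a genuinely different route from the paper's. Both start from
\[
  \E\!\left[\tfrac{1}{R}\,\middle|\,R\geq k\right] = \frac{p}{(1-p)^k}\sum_{r=k}^{\infty}\frac{(1-p)^r}{r},
\]
but where you pass to the integral representation $\sum_{r\geq k} x^r/r = \int_0^x t^{k-1}(1-t)^{-1}\,\rmd t$ and eliminate the $k$-dependence by bounding the integrand factor $t^{k-1}\le(1-p)^{k-1}$ over the domain of integration, the paper stays entirely in the discrete setting: it shifts the summation index via $h = r-k+1$ and uses the termwise inequality $\frac{1}{h+k-1}\le\frac{1}{h}$ (valid for $k\ge 1$), followed by $(1-p)^{h-1}=(1-p)^h/(1-p)$ and the Mercator series $\sum_{h\ge1}(1-p)^h/h=\log(1/p)$. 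The two arguments remove the $k$-dependence at different places — you flatten the power $t^{k-1}$, the paper flattens the denominator $h+k-1$ — yet land on the exact same $k$-uniform bound $p\log(1/p)/(1-p)$, so neither is sharper. The paper's version is a line shorter and avoids invoking an interchange of sum and integral, but your argument is equally valid and the justification you give (nonnegative terms, monotone convergence) is adequate. Your closing remark about the alternative bound $1/k$ is a nice observation but, as you note, does not deliver the stated inequality and so is rightly set aside.
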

\begin{proof}
By definition we have
\begin{align*}
\frac{1}{\pr(R\geq k)}\sum_{r=k}^{\infty}\frac{1}{r} \pr(R=r)&= p\sum_{r=k}^{\infty}\frac{(1-p)^{r-k}}{r}\\
&= p\sum_{h=1}^{\infty}\frac{(1-p)^{h-1}}{h+k-1}\\
&\leq \frac{p}{1-p}\sum_{h=1}^{\infty}\frac{(1-p)^{h}}{h}\\
& = \frac{p}{1-p}\log (1/p).
\end{align*}
\end{proof}

We are now ready to prove Proposition~\ref{prop:exp_var_convergence_cycle}.
First, we observe that for $k=0$ we have $U_{r, 0}^\textrm{C} \equiv 1$, thus $ \var(U_{r, 0}^\textrm{C})=0$ and we have
\begin{multline*}
	\E\left[ \var[\sumestC|R ] \right] =
	\sum_{k=1}^\infty \frac{\gamma_k^2}{\pr(R\geq k)^2}
	\left\{ \sum_{r=k}^\infty \pr(R=r) \var(\UrkC)  \right\}\\
	+ 2 \sum_{k=0}^{\infty}\sum_{l=k+1}^{\infty} \frac{\gamma_k
		\gamma_l}{\pr(R\geq k)\pr(R\geq l)}
	\left\{  \sum_{r=l}^{\infty}\pr(R=r)\cov\left( \UrkC, \UrlC  \right)\right\}.
\end{multline*}
We use Lemma \ref{lemma:covariance_cycle_bound} to bound the terms $\cov\left( \UrkC, \UrlC  \right)$. For the case $r<l+k$, we use the simplified bound $\rho^k+1 \leq 2 \rho^k$ and then  $\rho^k< \rho^k(l+k)/r \leq \rho^k(2l)/r$. When $r=l$, clearly $\rho^k-1<  \rho^k(2l)/r $. 

Combining the above bounds from Lemma \ref{lemma:covariance_cycle_bound}  with Lemma~\ref{lemma:reciprocal_geometric}, gives
\begin{align*}
  \E\left[ \var[\sumestC|R ] \right] \leq
  & \sum_{k=1}^\infty \frac{\gamma_k^2}{\pr(R\geq k)^2}
	\left\{ 2k \left\lvert\frac{m}{x_0}-1\right\rvert^{2k}\rho^k\sum_{r=k}^{\infty}\frac{1}{r} \pr(R=r)\right\}\\
	&+ 2 \sum_{k=0}^{\infty}\sum_{l=k+1}^{\infty} \frac{|\gamma_k \gamma_l|}{\pr(R\geq k)\pr(R\geq l)}
	\left\{ 4l \left\lvert\frac{m}{x_0}-1\right\rvert^{l+k}\rho^k\sum_{r=l}^{\infty}\frac{1}{r} \pr(R=r)\right\}\\
	\leq &   \frac{p\log (1/p) }{1-p}
	\sum_{k=1}^\infty \frac{\gamma_k^2}{\pr(R\geq k)}
	2k \left\lvert\frac{m}{x_0}-1\right\rvert^{2k}\rho^k\\
	&+2\frac{p\log (1/p) }{1-p} \sum_{k=0}^{\infty}\sum_{l=k+1}^{\infty}
    \frac{| \gamma_k \gamma_l |}{\pr(R\geq k)}4l \left\lvert\frac{m}{x_0}-1\right\rvert^{l+k}\rho^k.
\end{align*}

Using Assumption~\ref{hyp:f_and_x0} and recalling that
\begin{align*}
\beta^{2k} &= \left(\frac{\sigma^2}{x_0^2}+\left(\frac{m}{x_0}-1\right)^2\right)^k = \left(\frac{m}{x_0}-1\right)^{2k}\rho^k,
\end{align*}
that $\beta_0 = \vert m/x_0-1\vert<1$,
we have
\begin{align*}
  \E\left[ \var[\sumestC|R ] \right]
  &\leq     2c^2\frac{p\log(1/ p) }{1-p}
    \left\{ \sum_{k=1}^\infty k\frac{\beta^{2k}}{(1-p)^k}
    + 4\sum_{k=0}^{\infty} \frac{\beta_0^{k}\rho^k}{(1-p)^k}\sum_{l=k+1}^{\infty} l \beta_0^{l}\right\}\\
	& = \frac{2c^2p\log (1/p)}{1-p} \left\{   \frac{\beta^2(1-p)}{(1-p-\beta^2)^2}
    +4 \frac{\beta_0}{(1-\beta_0)^2}\sum_{k=0}^{\infty}\frac{\beta_0^{2k}\rho^k }{(1-p)^k}(k(1-\beta_0)+1) \right\}\\
	& \leq \frac{2c^2p\log (1/p)}{1-p} \left\{   \frac{\beta^2(1-p)}{(1-p-\beta^2)^2}
    +4 \frac{\beta_0}{(1-\beta_0)^2}\sum_{k=0}^{\infty}\frac{\beta^{2k} }{(1-p)^k}(k+1) \right\}\\
	& = 2c^2\frac{p\log (1/p)}{(1-p-\beta^2)^2} \left(\beta^2+4\frac{\beta_0(1-p)}{(1-\beta_0)^2}\right),
\end{align*}
where we used that $\sum_{k=1}^{\infty} k g^k = g/(1-g)^2$ and $\sum_{k=0}^{\infty} (k+1) g^k = 1/(1-g)^2$ for $g\in(0, 1)$.

%%% Local Variables: 
%%% mode: latex
%%% TeX-master: "paper.tex"
%%% End: 

%\input{appendix_gradient}
%\subsection{Unbiasedness of simple and cycling estimators} \label{app:gradient_unbiased}
%By definition we have that $\Wrk$ is independent of $R$ and 
%\begin{align*}
%	\E[\Wrk]=\left(\frac{m(\theta)}{x_0}-1\right)^{k-1}\nabla_\theta m(\theta).
%\end{align*} Then, using the above
%\begin{align*}
%	\E\left[ \nablasumest \middle| R=r\right] & =
%	\sum_{k=1}^\infty \frac{\ind{r \geq k}}{\pr(R\geq k)} \E[\Wrk | R=r]\frac{k\gamma_k}{x_0}
%	\\
%	                                          & =\sum_{k=1}^\infty \frac{\ind{r \geq k}}{\pr(R\geq k)} \left(
%	\frac{m(\theta)}{x_0} -1 \right)^{k-1}\nabla_\theta m(\theta)\frac{k\gamma_k}{x_0},
%\end{align*} from which follows that $\E[\nablasumest] = \sum_{k}  k\gamma_k(m(\theta)/x_0 -1)^{k-1}\nabla_\theta m(\theta)/x_0$.

\subsection{Proof of Proposition~\ref{prop:var_gradients}}
\label{app:var_gradients}
The proof of the first assertion, that $\nablasumestS$ and $\nablasumestC$ have finite second moments, follows closely the steps of Proposition \ref{prop:exp_var_convergence_simple}'s proof (that $\sumestS$ has a finite second moment.) This part is thus omitted.

For the remaining three statements on the variance, the following
variance decomposition of $\nablasumest$ will be used,
\begin{equation*}%\label{eq:decomp_var_gradient}
	\var[\nablasumest] = \var\left[ \E[\nablasumest|R ] \right]
	+ \E\left[ \var[\nablasumest | R ] \right],
\end{equation*}
where the first term can be expressed as
\begin{multline}\label{eq:varE_gradient_2}
	\var\left[ \E[\nablasumest|R ] \right] =
	\sum_{k=1}^\infty k^2 \gamma_k^2  \left(  \frac{m}{x_0} - 1 \right)^{2k-2}\frac{\nabla_\theta m(\theta)^2}{x_0^2}
	\left(\frac{1}{\pr(R\geq k)} - 1 \right) \\
	+ 2 \sum_{k=1}^\infty\sum_{l=k+1}^\infty kl\gamma_k \gamma_l \left(\frac{m}{x_0} - 1\right)^{k+l-2}\frac{\nabla_\theta m(\theta)^2}{x_0^2}
	\left(\frac{1}{\pr(R\geq k)} - 1 \right).
\end{multline}
As for the second term, it is
\begin{multline}\label{eq:Evar_gradient_2}
	\E\left[ \var[\nablasumest|R ] \right] =
	\frac{1}{x_0^2}\sum_{k=1}^\infty k^2\frac{\gamma_k^2}{\pr(R\geq k)^2}
	\left\{ \sum_{r=k}^\infty \pr(R=r) \var(\Wrk)  \right\}\\
	+ \frac{2}{x_0^2} \sum_{k=1}^{\infty}\sum_{l=k+1}^{\infty} kl\frac{\gamma_k
		\gamma_l}{\pr(R\geq k)\pr(R\geq l)}
	\left\{  \sum_{r=l}^{\infty}\pr(R=r)\cov\left( W_{r,k}, W_{r,l}  \right)\right\}
\end{multline}
where the expression of $\Wrk$ depends on the chosen estimator, either simple or cycling, for $(m(\theta)/x_0 -1)^{k-1}\nabla_\theta m(\theta)$.

Statements {\it (1)}--{\it(3)} of Proposition~\ref{prop:var_gradients} are established as follows:
\begin{itemize}
\item [(1)] The stated result for $\var\left[ \E[\nablasumest|R ] \right]$  is established by Proposition \ref{prop:var_exp_gradient}. Note that $ \E[\nablasumest|R ]$ has the same expression for both the simple 
and cycling case.

\item [(2)] Example \ref{ex:Evar_grad_logx} shows $\lim_{p\rightarrow0} \E\left[ \var[\nablasumestS|R ] \right]  >0$ for $f(\cdot)=\log(\cdot)$. 
%Section \ref{app:simple_gradient_est} collects 
The necessary supporting results to enable this calculation are presented just before this example.
\item [(3)] Proposition \ref{prop:exp_var_gradient_cycle} verifies the final assertion for the cycle estimate. 
\end{itemize}
\begin{proposition}
	\label{prop:var_exp_gradient}
	Let Assumption~\ref{hyp:f_and_x0} hold with $m$ therein replaced by $m(\theta)$. If there exists $p\in(0, 1-\beta_0^2)$ such that
	$\pr(R\geq k) \geq (1-p)^{k-1}$, then a bound for \eqref{eq:varE_gradient_2} is 
	\begin{multline*}
		\var\left[ \E[\nablasumest|R ] \right] \leq c^2 \frac{\nabla_\theta m(\theta)^2}{x_0^2}\frac{1}{(1-\beta_0)^2}\\
		\times \left(\frac{(\beta_0^2+\beta_0^4)(1-\beta_0)^2+4\beta_0}{(1-\beta_0^2/(1-p))^3}-\frac{(\beta_0^2/(1-p)+\beta_0^4/(1-p)^2)(1-\beta_0)^2+4\beta_0}{(1-\beta_0^2)^3}\right).
	\end{multline*}
	
\end{proposition}
It is can be shown that the right-hand side is $\OO(p)$ as $p\rightarrow 0$. The proof of this result is very similar  to that of Proposition~\ref{prop:var_exp} and is therefore omitted.

\subsubsection{Simple gradient estimator}\label{app:simple_gradient_est}

Using the definition of $\WrkS$ in~\eqref{eq:W_simple} it is easy to see that for any $0 \leq k\leq l$,
\begin{align}
	  \var\left(\WrkS\right)
	  &=\left(\frac{\sigma^2(\theta)}{x_0^2}+\left(\frac{m(\theta)}{x_0}-1\right)^2\right)^{k-1}s^2(\theta)
	-\left(\frac{m(\theta)}{x_0}-1\right)^{2k-2}\nabla_\theta m(\theta)^2, \label{eq:varW_simple}\\
	  \cov\left(\WrkS, \WrlS\right)  
	 &=\left(\frac{\sigma^2(\theta)}{x_0^2}+\left(\frac{m(\theta)}{x_0}-1\right)^2\right)^{k-1}
	\left(\frac{m(\theta)}{x_0} - 1\right)^{l-k-1}\nabla_\theta m(\theta)t(\theta)\label{eq:covW_simple}\\
	&\qquad -\left(\frac{m(\theta)}{x_0}-1\right)^{k+l-2}\nabla_\theta m(\theta)^2, \nonumber
\end{align}
where we denoted $s^2(\theta):=\Exp\left[G_k^2\right]$ and $t(\theta):=\Exp\left[(X_k/x_0-1)G_k\right]$.

%Proposition~\ref{prop:var_exp_gradient} guarantees that $\var\left[ \E[\nablasumestS|R ]\right]\to 0$ as $p\to 0$.
To obtain the limit $\lim_{p\rightarrow 0}\E\left[\var(\nablasumestS | R)\right]$, substitute \eqref{eq:varW_simple}  and \eqref{eq:covW_simple} into \eqref{eq:Evar_gradient_2} and use Tannery's Theorem; \citealp{boas1965tannery} to interchange the limit and the sum (as previously shown for the estimator of $f(m)$):
\begin{align*}
	x_0^2\E\left[ \var[\nablasumestS|R ] \right] & \to \sum_{k=1}^\infty k^2\gamma_k^2
	\left\{ \left(\frac{\sigma^2(\theta)}{x_0^2}+\left(\frac{m(\theta)}{x_0}-1\right)^2\right)^{k-1}s^2(\theta)
	-\left(\frac{m(\theta)}{x_0}-1\right)^{2k-2}\nabla_\theta m(\theta)^2\right\}                          \\
	                                             & + 2 \sum_{k=1}^{\infty}\sum_{l=k+1}^{\infty} kl\gamma_k
	\gamma_l\left(\frac{m(\theta)}{x_0} - 1\right)^{l-k-1}\nabla_\theta m(\theta)                          \\
	                                             & \times
	\left\{  \left(\frac{\sigma^2(\theta)}{x_0^2}+\left(\frac{m(\theta)}{x_0}-1\right)^2\right)^{k-1}
	t(\theta)
	-\left(\frac{m(\theta)}{x_0}-1\right)^{2k-1}\nabla_\theta m(\theta)\right\}
\end{align*}

The following example shows $\lim_{p\to0}\E\left[ \var[\nablasumestS|R ] \right]$ can be non-zero.
\begin{example}\label{ex:Evar_grad_logx}
	Consider $f(x) =\log x$ so that $\gamma_k = (-1)^{k-1}/k$ and assume that we want to estimate $f(m)$ for $m>0$. Then, we have
	\begin{align*}
		x_0^2\lim_{p\to0}\E\left[ \var[\nablasumestS|R ] \right] & = \sum_{k=1}^\infty
		\left\{ \beta^{2(k-1)}s^2(\theta)
		-\beta_0^{2(k-1)}\nabla_\theta m(\theta)^2\right\}                                                                                                 \\
		                                                         & + 2\nabla_\theta m(\theta) \sum_{k=1}^{\infty}\left\{  \beta^{2(k-1)}
		t(\theta)
		-\beta_0^{2(k-1)}\beta_0\nabla_\theta m(\theta)\right\}                                                                                            \\
		                                                         & \times\sum_{l=k+1}^{\infty} (-1)^{k+l-2}\left(\frac{m(\theta)}{x_0} - 1\right)^{l-k-1}  \\
		                                                         & = \frac{s^2(\theta)}{1-\beta^2}-\frac{\nabla_\theta m(\theta)^2}{1-\beta_0^2}           \\
		                                                         & - \frac{2\nabla_\theta m(\theta)}{1+\beta_0} \sum_{k=1}^{\infty}\left\{  \beta^{2(k-1)}
		t(\theta)
		-\beta_0^{2(k-1)}\beta_0\nabla_\theta m(\theta)\right\}                                                                                            \\
		                                                         & = \frac{s^2(\theta)}{1-\beta^2}-\frac{\nabla_\theta m(\theta)^2}{1-\beta_0^2}           \\
		                                                         & - \frac{2\nabla_\theta m(\theta)}{1+\beta_0} \left( \frac{
				t(\theta)}{1-\beta^2}
		-\frac{\beta_0\nabla_\theta m(\theta)}{1-\beta_0^2}\right).
	\end{align*}
	Since $s^2(\theta)=\Exp[G_k^2] \geq\Exp[G_k]^2=\nabla_\theta m(\theta)^2$ we can lower bound the first term with
	\begin{align*}
		\frac{s^2(\theta)}{1-\beta^2}-\frac{\nabla_\theta m(\theta)^2}{1-\beta_0^2} \geq\nabla_\theta m(\theta)^2 \left( \frac{1}{1-\beta^2}
		-\frac{1}{1-\beta_0^2}\right).
	\end{align*}
	Then, assume that we can sample $G_k$ and $X_k$ independently, so that $t(\theta) = \beta_0\nabla_\theta m(\theta)$. It follows that the second term becomes
	\begin{align*}
		- \frac{2\nabla_\theta m(\theta)^2\beta_0}{1+\beta_0} \left( \frac{1}{1-\beta^2}
		-\frac{1}{1-\beta_0^2}\right) =  2\nabla_\theta m(\theta)^2\left(\frac{x_0}{m}-1\right) \left( \frac{1}{1-\beta^2}
		-\frac{1}{1-\beta_0^2}\right).
	\end{align*}
	Hence, we can lower bound
	\begin{align*}
		x_0^2\lim_{p\to0}\E\left[ \var[\nablasumestS|R ] \right] \geq \nabla_\theta m(\theta)^2 \left( \frac{1}{1-\beta^2}
		-\frac{1}{1-\beta_0^2}\right)\left(\frac{2x_0}{m}-1\right).
	\end{align*}
	If $\sigma^2>0$, we have $\beta^2>\beta_0^2$ and, since we need to select $x_0>m/2$ to guarantee that the Taylor sum~\eqref{eq:taylor_gradient_theta} converges, we have $\lim_{p\to0}\E\left[ \var[\nablasumestS|R ] \right]>0$, showing that the variance of the simple estimator does not converge to 0 in general.
\end{example}

\subsubsection{Cycling gradient estimator}
For the cycling estimator, under Assumption~\ref{hyp:f_and_x0} (with $m$ replaced by $m(\theta)$ therein), $\E\left[\var(\nablasumestC | R)\right]$ satisfies a result similar to that in Proposition~\ref{prop:exp_var_convergence_cycle}. The proof of this result can established by following the steps outlined to prove  Proposition~\ref{prop:exp_var_convergence_cycle} (that gave the corresponding bound for $\E\left[\var(\sumestC | R)\right]$.) In particular, substitute into \eqref{eq:Evar_gradient_2} the bounds on the second moments of $\WrkC$ given in Lemma \ref{lemma:covariance_cycle_bound_gradient}, and then also use Lemma~\ref{lemma:reciprocal_geometric}. 
\begin{proposition}
\label{prop:exp_var_gradient_cycle}
	Let Assumption~\ref{hyp:f_and_x0} hold with  $m$ therein replaced by $m(\theta)$.  Assume further that
	\begin{equation*}
			\beta^2 \eqdef \frac{\sigma^2(\theta)}{x_0^2} + \left( \frac{m(\theta)}{x_0} -1 \right)^2 < 1,
	\end{equation*}
	and $R\sim\Geomp$ with $p\in(0, 1-\beta^2)$. Then,
	\begin{align*}
		&\E\left[\var(\nablasumestC | R)\right] \leq  
		 \frac{8c^2s(\theta)^{2}p\log (1/p)}{x_0^2(1-p)}\\
		 &\qquad\times\left\{  \frac{\beta^2(1-p)(\beta^2+2(1-p))}{\beta_0(1-p-\beta^2)^4}
    +\frac{\beta_0^2+2}{(1-\beta_0)^3} \frac{\beta^2(\beta^4+4\beta^2(1-p)+(1-p)^2)}{(1-p-\beta^2)^4}\right\},
	\end{align*}
\end{proposition}
\begin{proof} 
Recall that from the law of total variance decomposition we have
\begin{multline}
	\E\left[ \var[\nablasumestC|R ] \right] =
	\frac{1}{x_0^2}\sum_{k=1}^\infty k^2\frac{\gamma_k^2}{\pr(R\geq k)^2}
	\left\{ \sum_{r=k}^\infty \pr(R=r) \var(\WrkC)  \right\}\\
	+ \frac{2}{x_0^2} \sum_{k=1}^{\infty}\sum_{l=k+1}^{\infty} kl\frac{\gamma_k
		\gamma_l}{\pr(R\geq k)\pr(R\geq l)}
	\left\{  \sum_{r=l}^{\infty}\pr(R=r)\cov\left( W_{r,k}^{\textrm{C}}, W_{r,l}^{\textrm{C}}  \right)\right\}.
\end{multline}
We use Lemma \ref{lemma:covariance_cycle_bound_gradient} to bound the terms $\cov\left( \WrkC, \WrlC  \right)$, and Lemma~\ref{lemma:reciprocal_geometric} to bound the expectation w.r.t. $R$. Since $\beta_0^2<\beta^2<1$, $\max(\beta_{0},1)=1$. 
\begin{align*}
 & \E\left[ \var[\nablasumestC|R ] \right] \\
  & \leq \frac{2s(\theta)^{2}}{x_0^2}\sum_{k=1}^\infty (2k-2)k^2\frac{\gamma_k^2}{\pr(R\geq k)^2}
	\rho(\theta)^{k-1}\beta_{0}^{2k-3}\left\{ \sum_{r=k}^\infty \pr(R=r) \frac{1}{r}  \right\}\\
	& + \frac{4s(\theta)^{2}}{x_0^2} \sum_{k=1}^{\infty}\sum_{l=k+1}^{\infty} (2l-2)kl\frac{\gamma_k
		\gamma_l}{\pr(R\geq k)\pr(R\geq l)}
	\rho(\theta)^{k-1}\beta_{0}^{k+l-3}\left\{  \sum_{r=l}^{\infty}\pr(R=r)\frac{1}{r} \right\}\\
	 & \leq \frac{p\log (1/p) }{1-p}\frac{2s(\theta)^{2}}{x_0^2}\sum_{k=1}^\infty (2k-2)k^2\frac{\gamma_k^2}{\pr(R\geq k)}
	\rho(\theta)^{k-1}\beta_{0}^{2k-3}\\
	& + \frac{p\log (1/p) }{1-p}\frac{4s(\theta)^{2}}{x_0^2} \sum_{k=1}^{\infty}\sum_{l=k+1}^{\infty} (2l-2)kl\frac{\gamma_k
		\gamma_l}{\pr(R\geq k)}
	\rho(\theta)^{k-1}\beta_{0}^{k+l-3}.
\end{align*}

Using Assumption~\ref{hyp:f_and_x0} and recalling that
\begin{align*}
\beta^{2k} &= \left(\frac{\sigma^2(\theta)}{x_0^2}+\left(\frac{m(\theta)}{x_0}-1\right)^2\right)^k = \beta_0^{2k}\rho(\theta)^k,
\end{align*}
we have
\begin{align*}
& \E\left[ \var[\nablasumestC|R ] \right]\\
  &\leq  \frac{p\log (1/p) }{1-p}\frac{2c^2s(\theta)^{2}}{x_0^2}\sum_{k=1}^\infty \frac{(2k-2)k^2}{(1-p)^k}
	\rho(\theta)^{k-1}\beta_{0}^{2k-3}\\
	&+ \frac{p\log (1/p) }{1-p}\frac{4c^2s(\theta)^{2}}{x_0^2} \sum_{k=1}^{\infty}\sum_{l=k+1}^{\infty} \frac{(2l-2)kl}{(1-p)^k}
	\rho(\theta)^{k-1}\beta_{0}^{k+l-3}   \\
	&=\frac{p\log (1/p) }{1-p}\frac{2c^2s(\theta)^{2}}{\beta_0 x_0^2}\sum_{k=1}^\infty (2k-2)k^2\frac{\beta^{2(k-1)}}{(1-p)^k}\\
	&+ \frac{p\log (1/p) }{1-p}\frac{4c^2s(\theta)^{2}}{x_0^2} \sum_{k=1}^{\infty}\frac{k}{(1-p)^k}\beta_{0}^{k-3}\rho(\theta)^{k-1}\sum_{l=k+1}^{\infty} (2l-2)l
	\beta_{0}^{l}   \\
	& = \frac{2c^2s(\theta)^{2}p\log (1/p)}{\beta_0 x_0^2(1-p)}   \frac{4\beta^2(1-p)(\beta^2+2(1-p))}{(1-p-\beta^2)^4}\\
    &+ \frac{4c^2s(\theta)^{2}p\log (1/p) }{x_0^2(1-p)}\sum_{k=1}^{\infty}\frac{k}{(1-p)^k}\beta_{0}^{2k-2}\rho(\theta)^{k-1}\frac{2(\beta_0^2k^2-2k^2\beta_0+k^2-k\beta_0^2+k+2\beta_0)}{(1-\beta_0)^3} \\
    &\leq \frac{2c^2s(\theta)^{2}p\log (1/p)}{\beta_0 x_0^2(1-p)}   \frac{4\beta^2(1-p)(\beta^2+2(1-p))}{(1-p-\beta^2)^4}\\
    &+ \frac{8c^2s(\theta)^{2}p\log (1/p) }{x_0^2(1-p)}\frac{\beta_0^2+2}{(1-\beta_0)^3}\sum_{k=1}^{\infty}k^3\frac{\beta^{2(k-1)}}{(1-p)^k} \\
	& \leq \frac{8c^2s(\theta)^{2}p\log (1/p)}{x_0^2(1-p)}\\
	&\qquad\times\left\{  \frac{\beta^2(1-p)(\beta^2+2(1-p))}{\beta_0(1-p-\beta^2)^4}
    +\frac{\beta_0^2+2}{(1-\beta_0)^3} \frac{\beta^2(\beta^4+4\beta^2(1-p)+(1-p)^2)}{(1-p-\beta^2)^4}\right\},
\end{align*}
where we used that $\sum_{k=0}^{\infty} 2k(k+1)^2 g^k = 4g(g+2)/(1-g)^4$ and $\sum_{k=0}^{\infty}k^3 g^k = g(g^2+4g+1)/(1-g)^4$.
\end{proof}

\subsubsection{Second moments of cycling variables}

The second moments of the cycling variables $\WrkC$ can be obtained by a simple counting argument identical to that used in the proof of Lemma~\ref{lemma:covariance_Z} as shown in the Proposition below.
\begin{lemma}
\label{lem:Exp_prod_of_cycle_gradient} Let $s(\theta)^{2}=\mathbb{E}[G_{1}^{2}]$,
$t(\theta)=\mathbb{E}\left((X_{1}/x_{0}-1)G_{1}\right)$ and $$\rho(\theta)=\frac{\mathbb{E}\left((X_{1}/x_{0}-1)^{2}\right)}{\left(\mathbb{E}(X_{1}/x_{0}-1)\right)^{2}}.$$
For any $r\geq l\geq k$,
\begin{alignat}{1}
\mathbb{E}\left(W_{r,k}^{\mathrm{C}}W_{r,l}^{\mathrm{C}}\right)\nonumber \\
 & =\frac{1}{r}\sum_{i=1}^{1}\rho(\theta)^{k-i}\left(\frac{m(\theta)}{x_{0}}-1\right)^{k+l-3}\left(\ind{l>k}t(\theta)\nabla m(\theta)+\ind{l=k}s(\theta)^{2}\left(\frac{m(\theta)}{x_{0}}-1\right)\right)\label{eq:cycgrad_term1a}\\
 & +\frac{1}{r}\sum_{i=2,i<k}^{r-l+1}\rho(\theta)^{k-i}\left(\frac{m(\theta)}{x_{0}}-1\right)^{k+l-3}t(\theta)\nabla m(\theta)\label{eq:cycgrad_term1b}\\
 & +\frac{1}{r}\sum_{i=1,i\geq k}^{r-l+1}\left(\frac{m(\theta)}{x_{0}}-1\right)^{k+l-2}\left(\nabla m(\theta)\right)^{2}\label{eq:cycgrad_term1c}\\
 & +\frac{1}{r}\sum_{i=1}^{k-1}\rho(\theta)^{i-1}\left(\frac{m(\theta)}{x_{0}}-1\right)^{k+l-3}t(\theta)\nabla m(\theta)\label{eq:cycgrad_term2a}\\
 & +\frac{1}{r}\sum_{i=k}^{k}\ind{l>k}\rho(\theta)^{k-1}s(\theta)^{2}\left(\frac{m(\theta)}{x_{0}}-1\right)^{l+k-2}\label{eq:cycgrad_term2b}\\
 & +\frac{1}{r}\sum_{i=k+1}^{l-1}\rho(\theta)^{k-1}\left(\frac{m(\theta)}{x_{0}}-1\right)^{l+k-3}t(\theta)\nabla m(\theta),\label{eq:cycgrad_term2c}
\end{alignat}
where empty sums, $\sum_{i=i_{1}}^{i_{2}}$ with $i_{2}<i_{1}$, are
by convention zero. 
\end{lemma}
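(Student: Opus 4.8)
The plan is to reproduce, for the $\WrkC$'s, the counting argument that yielded Lemma~\ref{lemma:covariance_Z} for the $\UrkC$'s, the only novelty being that each of the $r$ cyclic ``windows'' now carries, besides its run of $\widetilde X$-factors, one $G$-factor sitting at the window's last index; this $G$-factor couples to the other window through the extra moments $s(\theta)^2=\Exp[G_1^2]$ and $t(\theta)=\Exp[(X_1/x_0-1)G_1]$.

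First I would write $r\WrkC=\sum_{a=1}^r T^{(k)}_a$ with $T^{(k)}_a=G_{a+k-1}\prod_{i=a}^{a+k-2}(X_i/x_0-1)$ (indices modulo $r$), and likewise $r\WrlC=\sum_{b=1}^r T^{(l)}_b$. Since $(X_1,G_1),(X_2,G_2),\dots$ are i.i.d., a simultaneous cyclic relabelling of the indices preserves all joint laws, so $\Exp[T^{(k)}_a T^{(l)}_b]$ depends only on $b-a\bmod r$; consequently $\Exp[\WrkC\WrlC]=\tfrac1r\sum_{b=1}^r\Exp[T^{(k)}_1T^{(l)}_b]$, exactly as in the scalar case one has $\Exp[\UrkC\UrlC]=\Exp[\UrlC\prod_{j=1}^k(X_j/x_0-1)]$. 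It then suffices to evaluate $\Exp[T^{(k)}_1T^{(l)}_b]$ for each shift $b$ and add up.

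Fix $b$. The product $T^{(k)}_1T^{(l)}_b$ carries at most two factors at each index $j$, and because distinct indices are independent its expectation factorises over $j$; the factor at $j$ is $1$ (no window touches $j$), $\widetilde m:=\Exp[X_1/x_0-1]$ (one window puts an $\widetilde X_j$ there, the other misses $j$), $\Exp[(X_1/x_0-1)^2]=\rho(\theta)\widetilde m^2$ (both put an $\widetilde X_j$), $\nabla_\theta m(\theta)$ (one puts a $G_j$, the other misses $j$), $t(\theta)$ (one puts $G_j$, the other $\widetilde X_j$), or $s(\theta)^2$ (both put a $G_j$). The window of $T^{(k)}_1$ is the block $\{1,\dots,k\}$ with $\widetilde X$ on $\{1,\dots,k-1\}$ and $G$ at $k$; the window of $T^{(l)}_b$ is a length-$l$ cyclic block starting at $b$ with $G$ at its last slot. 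Sliding $b$ through $\{1,\dots,r\}$ produces finitely many overlap patterns; I expect them to organise into two families — one where $T^{(l)}_b$ lies to the right of $\{1,\dots,k\}$ (sub-cases: its $\widetilde X$-run covers index $k$, so that slot yields $t(\theta)$, with the shift $b=1$ being special, producing $t(\theta)\nabla_\theta m(\theta)$ when $l>k$ and $s(\theta)^2$ when $l=k$; or its $G$-slot sits at $k$; or the two windows are disjoint, yielding $\nabla_\theta m(\theta)^2$), and a mirror family where the right tail of $T^{(l)}_b$ wraps past $r$ and meets the left portion $\{1,\dots,k-1\}$ (sub-cases according to where the $G$-slot of $T^{(l)}_b$ falls relative to $\{1,\dots,k-1\}$ and whether it lands exactly on $k$). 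In each pattern the product of the per-index factors collapses to one monomial in $\rho(\theta),\widetilde m, t(\theta), s(\theta)^2, \nabla_\theta m(\theta)$, whose exponents are read off by counting the shared and the isolated $\widetilde X$-slots; summing the monomials over the shifts in each group should reproduce \eqref{eq:cycgrad_term1a}--\eqref{eq:cycgrad_term2c} after substituting $\widetilde m=m(\theta)/x_0-1$ and $\rho(\theta)\widetilde m^2=\sigma^2(\theta)/x_0^2+\widetilde m^2$.

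The monomial arithmetic within a fixed overlap pattern is routine; the genuinely delicate point — and the main obstacle — is the combinatorial bookkeeping of the cyclic overlaps: one has to check that the groups of shifts are exhaustive and pairwise disjoint, track the wrap-around correctly, and treat the boundary regimes separately (small $k$ or $l$, $l=k$, $r=l$, and $r$ close to $l+k$ so that the two windows are forced to overlap, possibly at both ends). These regimes are precisely what the guarded summation limits $\sum_{i=2,\,i<k}^{r-l+1}$, $\sum_{i=1,\,i\geq k}^{r-l+1}$, the one-term sums $\sum_{i=1}^{1}$, $\sum_{i=k}^{k}$, and the indicators $\ind{l>k}$, $\ind{l=k}$ in the statement are there to encode, so the bulk of the work is verifying the case split against the claimed ranges rather than any single estimate.
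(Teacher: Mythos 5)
Your proposal reproduces the paper's own strategy exactly: write $r\,W_{r,\cdot}^{\mathrm{C}}$ as a sum of $r$ cyclic windows, use the i.i.d.\ structure to reduce $\E\bigl[W_{r,k}^{\mathrm{C}} W_{r,l}^{\mathrm{C}}\bigr]$ to $\tfrac1r\sum_{b=1}^{r}\E\bigl[T_1^{(k)}T_b^{(l)}\bigr]$, and then evaluate each summand by factoring over independent indices, with per-index contributions drawn from $\{1,\widetilde m,\rho(\theta)\widetilde m^2,\nabla_\theta m(\theta),t(\theta),s(\theta)^2\}$ according to which of $\widetilde X$, $G$, or neither the two windows place at that index. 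Your split into a non-wrapping family and a wrapping family coincides with the paper's decomposition of $W_{r,l}^{\mathrm{C}}$ into $\sum_{i=1}^{r-l+1} V(i\!:\!i+l-1)$ and $\sum_{i=1}^{l-1}\bar V(\cdot)V(1\!:\!i)$, and your identification of which overlap pattern produces which moment (one window's $\widetilde X$-run hitting the other's $G$-slot $\to t(\theta)$, both $G$-slots coinciding $\to s(\theta)^2$, disjoint windows $\to \nabla_\theta m(\theta)^2$) is the correct bookkeeping principle.

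What is missing is precisely what constitutes the proof of this lemma: turning that principle into the six guarded sums \eqref{eq:cycgrad_term1a}--\eqref{eq:cycgrad_term2c}. Concretely, for each shift $b$ you still need to pin down the exact ranges of the shared-$\widetilde X$ block (giving the $\rho(\theta)$-exponent), the lone-$\widetilde X$ blocks (giving the $\widetilde m$-exponent, which must collapse to $k+l-3$ in every coupled case and to $k+l-2$ in the disjoint case), and the positions of the two $G$-slots; and then verify that the guarded limits $\sum_{i=2,\,i<k}^{r-l+1}$, $\sum_{i=1,\,i\geq k}^{r-l+1}$, $\sum_{i=1}^{k-1}$, $\sum_{i=k+1}^{l-1}$, $\sum_{i=1}^{1}$, $\sum_{i=k}^{k}$ together with the indicators $\ind{l>k}$, $\ind{l=k}$ exactly partition the $r$ shifts. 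You have candidly flagged this as ``the bulk of the work,'' but acknowledging the obstacle is not the same as surmounting it; as written, the lemma is asserted, not proved. One subtlety to watch if you do carry it out: the per-index factorisation of the wrap-around terms implicitly assumes the tail block $\{r-l+i+1,\dots,r\}$ stays clear of $\{1,\dots,k\}$, i.e.\ that $r$ is large relative to $k+l-i$; the scalar analogue Lemma~\ref{lemma:covariance_Z} treats $r<l+k$ as a separate case, and analogous care is needed here.
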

\begin{proof}
Equivalently, we can also write $\rho(\theta)=1+\sigma(\theta)^{2}/(m(\theta)-x_{0})^{2}$.
The sums with single terms, namely (\ref{eq:cycgrad_term1a}) and
(\ref{eq:cycgrad_term2b}), single the unique cases that involve $s(\theta)^{2}$,
as explained more below.
For integers $i,j$, with $i<j$, let $i:j$ denote the sequence $i,i+1,\ldots,j$.
For $r\geq l$, let
\begin{equation}
W_{r,l}^{\mathrm{C}}=\frac{1}{r}\sum_{i=1}^{r-l+1}V(i:i+l-1)+\frac{1}{r}\sum_{i=1}^{l-1}\bar{V}(r-(l-i)+1:r)V(1:i)\label{eq:WC_via_index}
\end{equation}
where
\[
V(i:j)=\prod_{k=i}^{j-1}\left(\frac{X_{k}}{x_{0}}-1\right)\times G_{j},\qquad j>i
\]
and for $i=j$,
\[
V(j:j)=G_{j}.
\]
Finally, 
\[
\bar{V}(i:j)=\prod_{k=i}^{j}\left(\frac{X_{k}}{x_{0}}-1\right),\qquad j\geq i.
\]

For $r\geq l\geq k$, by symmetry of the circular shift that defines
$W_{r,l}^{\mathrm{C}}$, we have
\begin{equation}
\mathbb{E}\left(W_{r,k}^{\mathrm{C}}W_{r,l}^{\mathrm{C}}\right)=\mathbb{E}\left(V(1:k)\times W_{r,l}^{\mathrm{C}}\right),\label{eq:symmetryW}
\end{equation}
where, using (\ref{eq:WC_via_index}), we have
\begin{equation}
\frac{1}{r}\sum_{i=1}^{r-l+1}\mathbb{E}\left(V(1:k)V(i:i+l-1)\right)+\frac{1}{r}\sum_{i=1}^{l-1}\mathbb{E}\left(V(1:k)\bar{V}(r-(l-i)+1:r)V(1:i)\right),\label{eq:symmetryW_2}
\end{equation}
Evaluating the products in (\ref{eq:symmetryW_2}) due to the first
sum: For $i>k$
\begin{alignat}{1}
\mathbb{E}\left(V(1:k)V(i:i+l-1)\right) & =\mathbb{E}\left(V(1:k)\right)\mathbb{E}\left(V(i:i+l-1)\right)\nonumber \\
 & =\left(\frac{m}{x_{0}}-1\right)^{k-1}\nabla m\times\left(\frac{m}{x_{0}}-1\right)^{l-1}\nabla m\nonumber \\
 & =\left(\frac{m}{x_{0}}-1\right)^{k+l-2}\left(\nabla m\right)^{2},\label{eq:term_1c}
\end{alignat}
where the $\theta$ dependence has been omitted for brevity. This
accounts for (\ref{eq:cycgrad_term1c}). For $1<i\leq k$,
\begin{alignat*}{1}
 & \mathbb{E}\left(V(1:k)V(i:i+l-1)\right)\\
 & =\mathbb{E}\left(V(1:k)\bar{V}(i:k)V((k+1):(i+l-1)\right)\\
 & =\mathbb{E}\left(V(1:k)\left(\frac{X_{i}}{x_{0}}-1\right)\cdots\left(\frac{X_{k}}{x_{0}}-1\right)\right)\mathbb{E}\left(V((k+1):(i+l-1)\right)\\
 & =\mathbb{E}\left(V(1:k)\left(\frac{X_{i}}{x_{0}}-1\right)\cdots\left(\frac{X_{k}}{x_{0}}-1\right)\right)\left(\frac{m}{x_{0}}-1\right)^{i+l-k-2}\nabla m
 \end{alignat*}
 and
\begin{alignat*}{1} 
 & \mathbb{E}\left(V(1:k)\left(\frac{X_{i}}{x_{0}}-1\right)\cdots\left(\frac{X_{k}}{x_{0}}-1\right)\right)\\
 & =\mathbb{E}\left(\left(\frac{X_{1}}{x_{0}}-1\right)\cdots\left(\frac{X_{i-1}}{x_{0}}-1\right)\left(\frac{X_{i}}{x_{0}}-1\right)^{2}\cdots\left(\frac{X_{k-1}}{x_{0}}-1\right)^{2}G_{k}\left(\frac{X_{k}}{x_{0}}-1\right)\right)\\
 & =\left(\frac{m}{x_{0}}-1\right)^{i-1}\left(\frac{\sigma^{2}}{x_{0}^{2}}+\left(\frac{m}{x_{0}}-1\right)^{2}\right)^{k-i}t.
\end{alignat*}
Thus 
\begin{alignat}{1}
 & \mathbb{E}\left(V(1:k)V(i:i+l-1)\right),\qquad1<i\leq k,\nonumber \\
 & =\left(\frac{\sigma^{2}}{x_{0}^{2}}+\left(\frac{m}{x_{0}}-1\right)^{2}\right)^{k-i}\left(\frac{m}{x_{0}}-1\right)^{l-k+2i-3}\times t\times\nabla m\nonumber \\
 & =\left(\frac{\sigma^{2}}{x_{0}^{2}}+\left(\frac{m}{x_{0}}-1\right)^{2}\right)^{k-i}\left(\frac{m}{x_{0}}-1\right)^{-2(k-i)}\left(\frac{m}{x_{0}}-1\right)^{l+k-3}t\nabla m\nonumber \\
 & =\rho^{k-i}\left(\frac{m}{x_{0}}-1\right)^{k+l-3}t\nabla m,\label{eq:term_1b}
\end{alignat}
which accounts for (\ref{eq:cycgrad_term1b}).

For $i=1$, if $k=l,$
\begin{alignat}{1}
\mathbb{E}\left(V(1:k)V(1:l)\right) & =\mathbb{E}\left(V(1:k)^{2}\right)\nonumber \\
 & =\left(\frac{\sigma^{2}}{x_{0}^{2}}+\left(\frac{m}{x_{0}}-1\right)^{2}\right)^{k-1}s^{2}\nonumber \\
 & =\rho^{k-1}\left(\frac{m}{x_{0}}-1\right)^{2(k-1)}s^{2},\label{eq:term_1a-1}
\end{alignat}
which accounts for (\ref{eq:cycgrad_term1a}) when $\ind{l=k}=1$.
If $l>k$,
\begin{alignat}{1}
\mathbb{E}\left(V(1:k)V(1:l)\right) & =\mathbb{E}\left(V(1:k)\bar{V}(1:k)V((k+1):l)\right).\nonumber \\
 & =\rho^{k-i}\left(\frac{m}{x_{0}}-1\right)^{k+l-3}t\nabla m.\label{eq:term_1a-2}
\end{alignat}
Note this expected value can be studied as in the case $1<i\leq k$
above to yield the same answer as in (\ref{eq:term_1b}). Thus accounting
for (\ref{eq:cycgrad_term1a}) when $\ind{l>k}=1$.

Evaluating the products in (\ref{eq:symmetryW_2}) due to the second
sum of $W_{r,l}^{\mathrm{C}}$ in (\ref{eq:WC_via_index}): For $1\leq i<k$,
\[
V(1:k)\bar{V}(r-(l-i)+1:r)V(1:i)=\bar{V}(1:i)V(i+1:k)\bar{V}(r-(l-i)+1:r)V(1:i)
\]
and thus,
\begin{alignat}{1}
 & \mathbb{E}\left(\bar{V}(1:i)V(i+1:k)\bar{V}(r-(l-i)+1:r)V(1:i)\right)\nonumber \\
 & =\mathbb{E}\left(\bar{V}(1:i)V(1:i)\right)\mathbb{E}\left(V(i+1:k)\right)\mathbb{E}\left(\bar{V}(r-(l-i)+1:r)\right)\nonumber \\
 & =\left(\frac{\sigma^{2}}{x_{0}^{2}}+\left(\frac{m}{x_{0}}-1\right)^{2}\right)^{i-1}t\left(\frac{m}{x_{0}}-1\right)^{k-i-1}\nabla m\left(\frac{m}{x_{0}}-1\right)^{l-i}\nonumber \\
 & =\left(\frac{\sigma^{2}}{x_{0}^{2}}+\left(\frac{m}{x_{0}}-1\right)^{2}\right)^{i-1}\left(\frac{m}{x_{0}}-1\right)^{k+l-2i-1}t\nabla m\nonumber \\
 & =\rho^{i-1}\left(\frac{m}{x_{0}}-1\right)^{k+l-3}t\nabla m.\label{eq:term_2a}
\end{alignat}
This accounts for (\ref{eq:cycgrad_term2a}). 

For $i=k$,
\begin{alignat}{1}
 & \mathbb{E}\left(V(1:k)\bar{V}(r-(l-k)+1:r)V(1:k)\right)\nonumber \\
 & =\mathbb{E}\left(V(1:k)^{2}\right)\mathbb{E}\left(\bar{V}(r-(l-k)+1:r)\right)\nonumber \\
 & =\left(\frac{\sigma^{2}}{x_{0}^{2}}+\left(\frac{m}{x_{0}}-1\right)^{2}\right)^{k-1}s^{2}\left(\frac{m}{x_{0}}-1\right)^{l-k}\nonumber \\
 & =\rho^{k-1}s^{2}\left(\frac{m}{x_{0}}-1\right)^{l+k-2}.\label{eq:term_2b}
\end{alignat}
This accounts for (\ref{eq:cycgrad_term2b}). 

For $i>k$,
\begin{alignat}{1}
 & \mathbb{E}\left(V(1:k)\bar{V}(r-(l-i)+1:r)V(1:i)\right)\nonumber \\
 & =\mathbb{E}\left(V(1:k)\bar{V}(1:k)\right)\mathbb{E}\left(V(k+1:i)\right)\mathbb{E}\left(\bar{V}(r-(l-i)+1:r)\right)\nonumber \\
 & =\left(\frac{\sigma^{2}}{x_{0}^{2}}+\left(\frac{m}{x_{0}}-1\right)^{2}\right)^{k-1}t\left(\frac{m}{x_{0}}-1\right)^{i-k-1}\nabla m\left(\frac{m}{x_{0}}-1\right)^{l-i}\nonumber \\
 & =\rho^{k-1}\left(\frac{m}{x_{0}}-1\right)^{l+k-3}t\nabla m.\label{eq:term_2c}
\end{alignat}
This accounts for (\ref{eq:cycgrad_term2c}). 

The constituent terms of all the separate sums of $\mathbb{E}\left(W_{r,k}^{\mathrm{C}}W_{r,l}^{\mathrm{C}}\right)$
stated in the Lemma are given by (\ref{eq:term_1a-1}), (\ref{eq:term_1a-2}),
(\ref{eq:term_1b}), (\ref{eq:term_1c}), (\ref{eq:term_2a}), (\ref{eq:term_2b})
and (\ref{eq:term_2c}).
\end{proof}
The second moments can be bounded following a similar argument to that in Lemma~\ref{lemma:covariance_cycle_bound}. We obtain a general bound for the cycling estimate which does not require any assumption besides Assumption~\ref{ass:condition_urk} and finiteness of the second moments; however, under Assumption~\ref{hyp:f_and_x0}, $\beta_0<1$ and the bound below could be simplified.

\begin{lemma}
\label{lemma:covariance_cycle_bound_gradient} For any $r\geq l\geq$k, $\cov\left[\WrkC\WrlC\right] $ satisfies
\begin{alignat*}{1}
\cov\left[\WrkC\WrlC\right] &=\mathbb{E}\left(W_{r,k}^{\mathrm{C}}W_{r,l}^{\mathrm{C}}\right)-\mathbb{E}\left(W_{r,k}^{\mathrm{C}}\right)\mathbb{E}\left(W_{r,l}^{\mathrm{C}}\right) \\
& \leq\frac{2}{r}\rho(\theta)^{k-1}\beta_{0}^{k+l-3}\max(\beta_{0},1)s(\theta)^{2}(2l-2).
\end{alignat*}
\end{lemma}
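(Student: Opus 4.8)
The plan is to reprise, for the gradient cycling variables, the argument that gave Lemma~\ref{lemma:covariance_cycle_bound}, but starting now from the exact expansion of $\mathbb{E}(\WrkC\WrlC)$ supplied by Lemma~\ref{lem:Exp_prod_of_cycle_gradient}, which writes this second moment as the six index-sums \eqref{eq:cycgrad_term1a}--\eqref{eq:cycgrad_term2c}. The first step is to cancel the product of marginals $\mathbb{E}(\WrkC)\mathbb{E}(\WrlC)=(m(\theta)/x_0-1)^{k+l-2}(\nabla_\theta m(\theta))^2$ against the ``diagonal'' group \eqref{eq:cycgrad_term1c}: that group is $\frac1r$ times $(m(\theta)/x_0-1)^{k+l-2}(\nabla_\theta m(\theta))^2$ times the number of its admissible indices $i$, namely $\max(r-l-k+2,0)$; hence the difference between \eqref{eq:cycgrad_term1c} and the marginal product (coefficient $r/r$) has coefficient $\frac1r\bigl(\max(r-l-k+2,0)-r\bigr)$, of absolute value at most $\frac{l+k-2}{r}$. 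Using $|\nabla_\theta m(\theta)|\le s(\theta)$ and $|m(\theta)/x_0-1|=\beta_0$, this block therefore contributes at most $\frac{l+k-2}{r}\beta_0^{k+l-2}s(\theta)^2$.

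The second step bounds the five remaining groups \eqref{eq:cycgrad_term1a}, \eqref{eq:cycgrad_term1b}, \eqref{eq:cycgrad_term2a}, \eqref{eq:cycgrad_term2b}, \eqref{eq:cycgrad_term2c} term by term. Each of their terms is $\frac1r\rho(\theta)^{j}(m(\theta)/x_0-1)^{k+l-3}$ times either $t(\theta)\nabla_\theta m(\theta)$ or $s(\theta)^2(m(\theta)/x_0-1)$, with $0\le j\le k-1$. I would then (i) use $\rho(\theta)>1$ to replace every $\rho(\theta)^{j}$ by $\rho(\theta)^{k-1}$; (ii) invoke Cauchy--Schwarz, $|\nabla_\theta m(\theta)|\le s(\theta)$ and $|t(\theta)|\le\beta s(\theta)$ with $\beta^2=\sigma^2(\theta)/x_0^2+(m(\theta)/x_0-1)^2$; (iii) pull out $\beta_0^{k+l-3}$ and absorb the leftover $\beta$ (on the $t$-terms) or $\beta_0$ (on the $s^2$-terms) into $\max(\beta_0,1)$, which is the step at which the regime $\beta^2<1$ under which the lemma is applied (see Proposition~\ref{prop:exp_var_gradient_cycle}) is actually used; and (iv) count indices: the index sets of \eqref{eq:cycgrad_term1a}, \eqref{eq:cycgrad_term1b}, \eqref{eq:cycgrad_term2a}, \eqref{eq:cycgrad_term2b}, \eqref{eq:cycgrad_term2c} have cardinalities $1$, $\le k-2$, $\le k-1$, $\le 1$, $\le l-1-k$, hence at most $l+k-2$ in total. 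Adding the diagonal contribution from the first step and using $l+k-2\le 2l-2$ (because $k\le l$) and $\rho(\theta)^{k-1}\ge1$ then collapses everything into $\frac{2}{r}\rho(\theta)^{k-1}\beta_0^{k+l-3}\max(\beta_0,1)s(\theta)^2(2l-2)$.

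I expect the main obstacle to be the bookkeeping of step~(iv) together with the index-range trichotomy: one must check which of the sums $\sum_{i=2,\,i<k}^{r-l+1}$, $\sum_{i=1}^{k-1}$, $\sum_{i=k+1}^{l-1}$ are nonempty as $r$ sits above, between, or at $l+k$ and $l$ --- exactly the three cases treated in Lemma~\ref{lemma:covariance_cycle_bound} --- and, at step~(iii), to verify that the spurious factor $\sqrt{\rho(\theta)}$ produced by bounding $|t(\theta)|$ through $\beta=\beta_0\sqrt{\rho(\theta)}$ is genuinely swallowed by $\rho(\theta)^{k-1}\max(\beta_0,1)$ rather than left dangling. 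Conceptually nothing new is required beyond the proof of Lemma~\ref{lemma:covariance_cycle_bound}: $\WrkC$ is $\UrkC$ with one factor $X_i/x_0-1$ in each cyclic block replaced by $G_i$, so $s(\theta)^2$ and $t(\theta)$ take the places of $\mathbb{E}[(X_i/x_0-1)^2]$ and of the corresponding cross-moment, and the remaining work is routine if tedious.
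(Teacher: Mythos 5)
Your proposal is correct and follows essentially the same route as the paper: cancel block~\eqref{eq:cycgrad_term1c} against the matching share of the marginal product, bound the residual marginal piece and the five remaining blocks of Lemma~\ref{lem:Exp_prod_of_cycle_gradient} term by term via Cauchy--Schwarz and Jensen, replace every $\rho(\theta)^{j}$ by $\rho(\theta)^{k-1}$, count indices, and finish with $l+k-2\le 2l-2$. Your worry about the $\sqrt{\rho(\theta)}$ in $|t(\theta)|\le\beta s(\theta)=\beta_0\sqrt{\rho(\theta)}\,s(\theta)$ is in fact a sharper reading than the paper's proof, which writes $|t(\theta)|\le s(\theta)\beta_0$; as you anticipate, the extra factor is absorbed because the lemma is invoked in Proposition~\ref{prop:exp_var_gradient_cycle} under $\beta^2<1$ (so $\max(\beta_0,1)=1$ and $\beta s(\theta)^2<s(\theta)^2$), exactly as the proof of that proposition notes.
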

\begin{proof}
We begin by bounding all the terms in the expression for $\mathbb{E}\left(W_{r,k}^{\mathrm{C}}W_{r,l}^{\mathrm{C}}\right)$
given in Lemma \ref{lem:Exp_prod_of_cycle_gradient}. Using the Cauchy-Schwarz
inequality, $\left|t(\theta)\right|\leq s(\theta)\beta_{0}<s(\theta)$.
By Jensen's inequality, $\nabla m(\theta)^{2}\leq s(\theta)^{2}$
and thus $\left|t(\theta)\nabla m(\theta)\right|<s(\theta)^{2}$.
\begin{alignat*}{1}
\mathrm{term}\;\eqref{eq:cycgrad_term1a} & \leq\frac{1}{r}\rho(\theta)^{k-1}\beta_{0}^{k+l-3}s(\theta)^{2}\max(\beta_{0},1),\\
\mathrm{term}\;\eqref{eq:cycgrad_term1b} & \leq\frac{1}{r}\rho(\theta)^{k-2}\beta_{0}^{k+l-3}s(\theta)^{2}(k-2),\\
\mathrm{term}\;\eqref{eq:cycgrad_term2a} & \leq\frac{1}{r}\rho(\theta)^{k-2}\beta_{0}^{k+l-3}s(\theta)^{2}(k-1),\\
\mathrm{term}\;\eqref{eq:cycgrad_term2b} & \leq\frac{1}{r}\rho(\theta)^{k-1}\beta_{0}^{k+l-3}s(\theta)^{2}\max(\beta_{0},1),\\
\mathrm{term}\;\eqref{eq:cycgrad_term2c} & \leq\frac{1}{r}\rho(\theta)^{k-1}\beta_{0}^{k+l-3}s(\theta)^{2}(l-k-1).
\end{alignat*}
Note that 
\begin{alignat*}{1}
-\mathbb{E}\left(W_{r,k}^{\mathrm{C}}\right)\mathbb{E}\left(W_{r,l}^{\mathrm{C}}\right) & =-\left(\frac{m(\theta)}{x_{0}}-1\right)^{k+l-2}\left(\nabla m(\theta)\right)^{2}\\
 & \leq\beta_{0}^{k+l-3}\max(\beta_{0},1)s(\theta)^{2}.
\end{alignat*}
Now subtract $\mathbb{E}\left(W_{r,k}^{\mathrm{C}}\right)\mathbb{E}\left(W_{r,l}^{\mathrm{C}}\right)$
from $\mathbb{E}\left(W_{r,k}^{\mathrm{C}}W_{r,l}^{\mathrm{C}}\right)$.
\[
\mathrm{term\;}(\ref{eq:cycgrad_term1c})-\frac{1}{r}\sum_{i=1,i\geq k}^{r-l+1}\mathbb{E}\left(W_{r,k}^{\mathrm{C}}\right)\mathbb{E}\left(W_{r,l}^{\mathrm{C}}\right)=0.
\]
All the remaining (non-zero) terms of the subtraction of $\mathbb{E}\left(W_{r,k}^{\mathrm{C}}\right)\mathbb{E}\left(W_{r,l}^{\mathrm{C}}\right)$
from $\mathbb{E}\left(W_{r,k}^{\mathrm{C}}W_{r,l}^{\mathrm{C}}\right)$
may be bounded by summing the bounds above for (\ref{eq:cycgrad_term1a}),
(\ref{eq:cycgrad_term1b}), (\ref{eq:cycgrad_term2a}), (\ref{eq:cycgrad_term2b}),
(\ref{eq:cycgrad_term2c}) and $-\mathbb{E}\left(W_{r,k}^{\mathrm{C}}\right)\mathbb{E}\left(W_{r,l}^{\mathrm{C}}\right)(l+k-2)/r$ to obtain
\begin{align*}
\mathbb{E}\left(W_{r,k}^{\mathrm{C}}W_{r,l}^{\mathrm{C}}\right)-\mathbb{E}\left(W_{r,k}^{\mathrm{C}}\right)\mathbb{E}\left(W_{r,l}^{\mathrm{C}}\right) & \leq\frac{2}{r}\rho(\theta)^{k-1}\beta_{0}^{k+l-3}\max(\beta_{0},1)s(\theta)^{2}(l+k-2).
\end{align*}
Recalling that $l\geq k$, we have the result.
\end{proof}

%\begin{lemma}
%\label{lemma:covariance_cycle_bound_gradient}
%Let  $\rho(\theta)\eqdef 1 + \sigma(\theta)^2/(m(\theta) - x_0)^2$. For all $l\geq k$,
%\begin{align*}
%\var(\WrkC) &< \frac{2k}{r}\left(\frac{m(\theta)}{x_0}-1\right)^{2k-2}\nabla_\theta m(\theta)^2\rho(\theta)^{k-1}D_1(\theta),\\
%\cov\left[\WrkC\WrlC\right] &<\frac{2l}{r}\left\lvert\frac{m(\theta)}{x_0}-1\right\rvert^{l+k-2}\nabla_\theta m(\theta)^2\rho(\theta)^{k-1}D_2(\theta),
%\end{align*}
%where $D_1(\theta), D_2(\theta)$ are constants that depend on $\nabla_\theta m(\theta), s^2(\theta), \sigma^2(\theta), t(\theta)$ and $x_0$ but not on $k, l, r$.
%\end{lemma}

%%% Local Variables: 
%%% mode: latex
%%% TeX-master: "paper.tex"
%%% End: 

\section{Adaptive tuning of $p$}
\label{app:wnv}

\begin{figure}
\centering
\begin{tikzpicture}[every node/.append style={font=\normalsize}]
\node (img1) {\includegraphics[width = 0.3\textwidth]{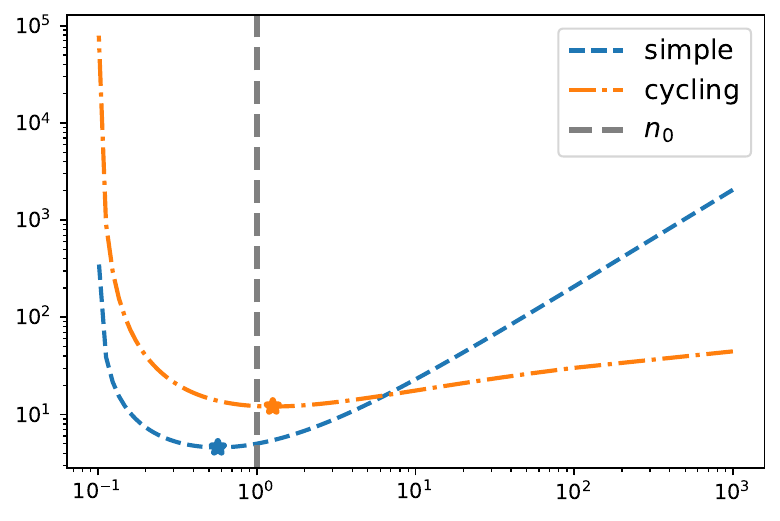}};
\node[above=of img1, node distance = 0, yshift = -1.2cm] {low};
\node[right=of img1, node distance = 0, xshift = -0.7cm] (img2) {\includegraphics[width = 0.3\textwidth]{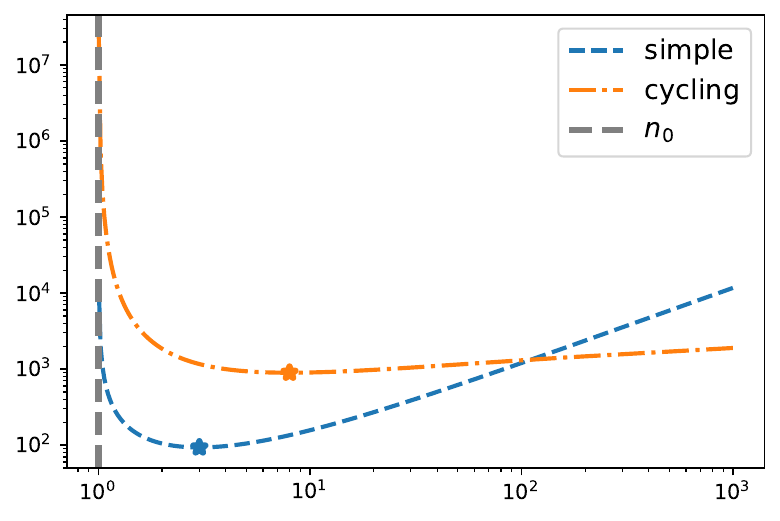}};
\node[right=of img2, node distance = 0, xshift = -0.7cm] (img3) {\includegraphics[width = 0.3\textwidth]{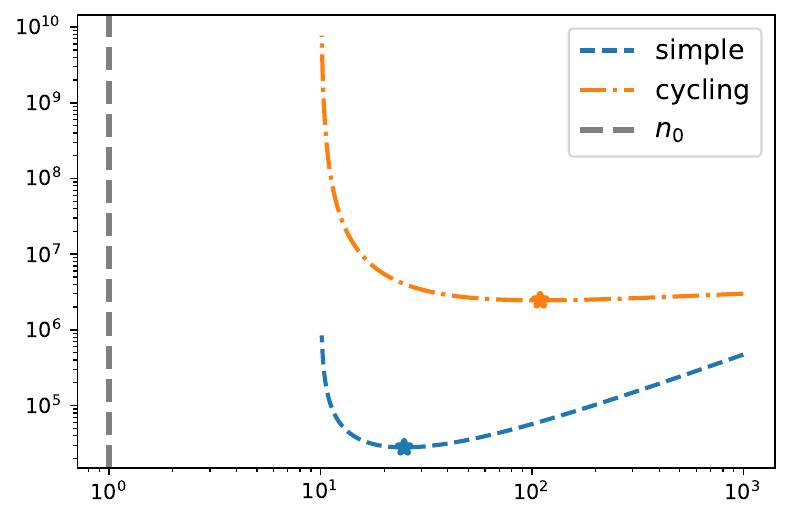}};
\node[above=of img2, node distance = 0, yshift = -1.2cm] {moderate};
\node[above=of img3, node distance = 0, yshift = -1.2cm] {high};
\node[left=of img1, node distance = 0, rotate = 90, anchor = center, yshift = -0.8cm] {$n_0=1$};
\node[below=of img1, node distance = 0, yshift = 1cm] (img4) {\includegraphics[width = 0.3\textwidth]{wnv_min_beta_point1.pdf}};
\node[right=of img4, node distance = 0, xshift = -0.7cm] (img5) {\includegraphics[width = 0.3\textwidth]{wnv_min_beta_one.pdf}};
\node[right=of img5, node distance = 0, xshift = -0.7cm] (img6) {\includegraphics[width = 0.3\textwidth]{wnv_min_beta_ten.pdf}};
\node[left=of img4, node distance = 0, rotate = 90, anchor = center, yshift = -0.8cm] {$n_0=10$};
\node[below=of img4, node distance = 0, yshift = 1cm] (img7) {\includegraphics[width = 0.3\textwidth]{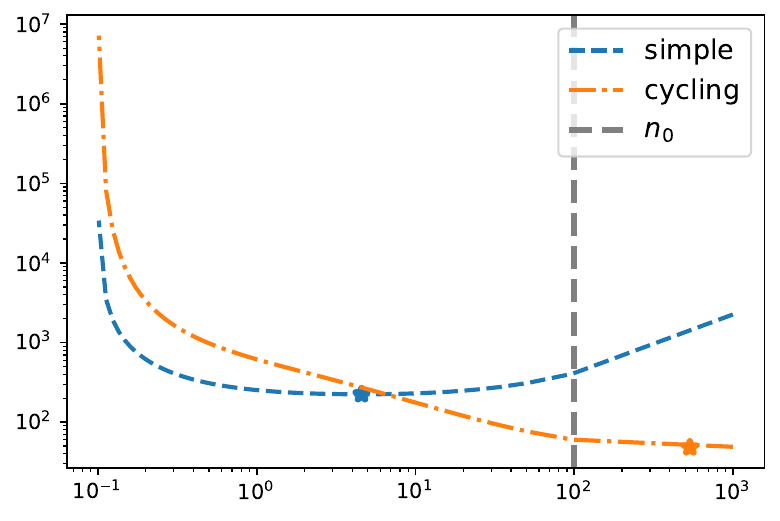}};
\node[right=of img7, node distance = 0, xshift = -0.7cm] (img8) {\includegraphics[width = 0.3\textwidth]{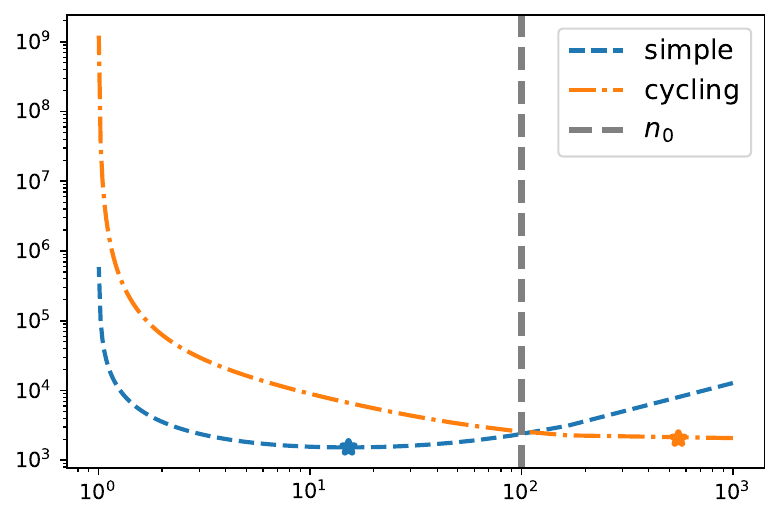}};
\node[right=of img8, node distance = 0, xshift = -0.7cm] (img9) {\includegraphics[width = 0.3\textwidth]{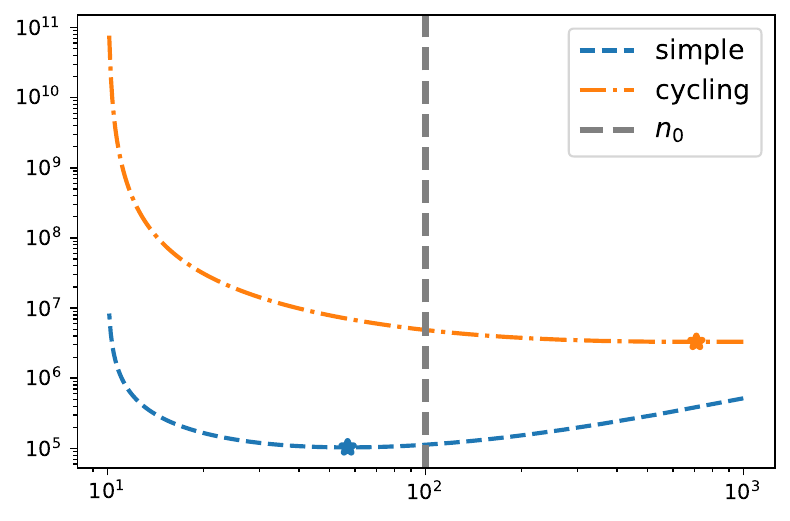}};
\node[left=of img7, node distance = 0, rotate = 90, anchor = center, yshift = -0.8cm] {$n_0=100$};
\node[below=of img7, node distance = 0, yshift = 1.2cm] {$\E[R]$};
\node[below=of img8, node distance = 0, yshift = 1.2cm] {$\E[R]$};
\node[below=of img9, node distance = 0, yshift = 1.2cm] {$\E[R]$};
\end{tikzpicture}
\caption{Comparison of work-normalised variance in a low, moderate and high variance scenario for several values of $n_0$. The stars denote the value of $\E[R]$ minimising the upper-bound of the WNV and the vertical line $\E[R]=n_0$.}
\label{fig:wnv_app}
\end{figure}

As discussed in Section~\ref{subsec:budget}, the choice of $p$ is crucial in obtaining estimators which do not exhibit high variance. However, the results in Section~\ref{sec:general_prop} only give the range in which $p$ should be to guarantee finite variance, $(0, 1-\beta^2)$.

Ideally, one would like to select $p$ that minimises the work-normalised variance $\wnv = (n_0 + \E[R]) \times \var[\sumest]$. However, in general, $\var[\sumest]$ is unknown. One could replace $\var[\sumest]$ with the upper-bounds established in Section~\ref{sec:general_prop} and minimise instead an upper-bound on $\wnv$ \citep{blanchet2015unbiased}.
The upper bounds for the simple and cycling estimators are
\begin{align*}
&(n_0 + \E[R]) \times \var[\sumestS] \\
&\qquad\leq c^2 (n_0 + (1-p)/p)\left[\frac{\beta_0^2}{(1-\beta_0)^2} \times\frac{p}{1-p-\beta_0^2}+\frac{(1+\beta)}{1-\beta}\times\frac{(1-p)}{1-p-\beta^2}\right]\\
&(n_0 + \E[R]) \times \var[\sumestC] \\
&\qquad\leq c^2 (n_0 + (1-p)/p)\left[\frac{\beta_0^2}{(1-\beta_0)^2} \times\frac{p}{1-p-\beta_0^2}+\frac{4 p \log(1/p)}{(1-p-\beta^2)^2}\left[\beta^2+\frac{2\beta_0}{1-\beta_0^2}\right]\right]
\end{align*}
and are obtained by combining Proposition~\ref{prop:var_exp} and~\ref{prop:exp_var_convergence_simple} and Proposition~\ref{prop:var_exp} and~\ref{prop:exp_var_convergence_cycle}, respectively.
In the interval $(0, 1-\beta^2)$, both bounds are convex functions of $p$ whose unique minimum can be found using standard optimisation routines.

We compare the optimal value of $p$ obtained by minimising the bounds above in three scenarios: we set the mean of the $X_i$'s to be $m=1$ and consider $\sigma^2=0.1, 1, 10$ which correspond to a low, moderate and high variance scenario.
We consider three values of $n_0$ for the pilot run.
Figure~\ref{fig:wnv_app} shows the results for the simple and the cycling estimators. As expected, higher input variance requires higher $\E[R]$ to control the variance of the estimators. Generally, the optimal $p$ is larger for the cycling estimator, this is a consequence of the slower rate of divergence shown in Proposition~\ref{prop:exp_var_convergence_cycle}.

We can see that, in the low and moderate variance case,  our strategy of setting $\E[R]=n_0$ corresponds to selecting $p$ close to the optimal value for the cycling estimator.

\section{MVUE}
\label{app:mvue}

Let $Y_i\eqdef X_i/x_0 -1$, then~\eqref{eq:mvue} may be rewritten as $U_{r,k}^{\mathrm{MVUE}} =
S_{r,k} / \binom{r}{k}$, with
\[
	S_{r, k} \eqdef
	\sum_{1\leq i_1<\dots<i_k\leq r} Y_{i_1}\cdot\dots\cdot Y_{i_k},
\]
for $r\geq k \geq 1$, which may be computed recursively as follows:
\[
S_{r, k} = S_{r-1, k} + Y_r S_{r-1, k-1}
\]
with initialisations $S_{r, k} \eqdef 0$ for $r<k$, and $S_{r,0} \eqdef 1$ for $r \geq 0$.
For a fixed $r$, it is thus possible to compute $U_{r,k}^{\mathrm{MVUE}}$ for
$k=1,\ldots, r$ at a $\OO(r^2)$ cost. This is the same computational complexity as
the cycling method.

Figure~\ref{fig:boxplots_toy_mvue} compares the variability of the simple,
cycling and MVUE estimators in the toy model example of
Section~\ref{subsec:toy_lvm}, when $d=2$, $n=1$ and $p=10^{-2}$. One sees that
cycling and MVUE seems to lead to the same level of performance.

\begin{figure}
  \centering
  \includegraphics[scale=0.5]{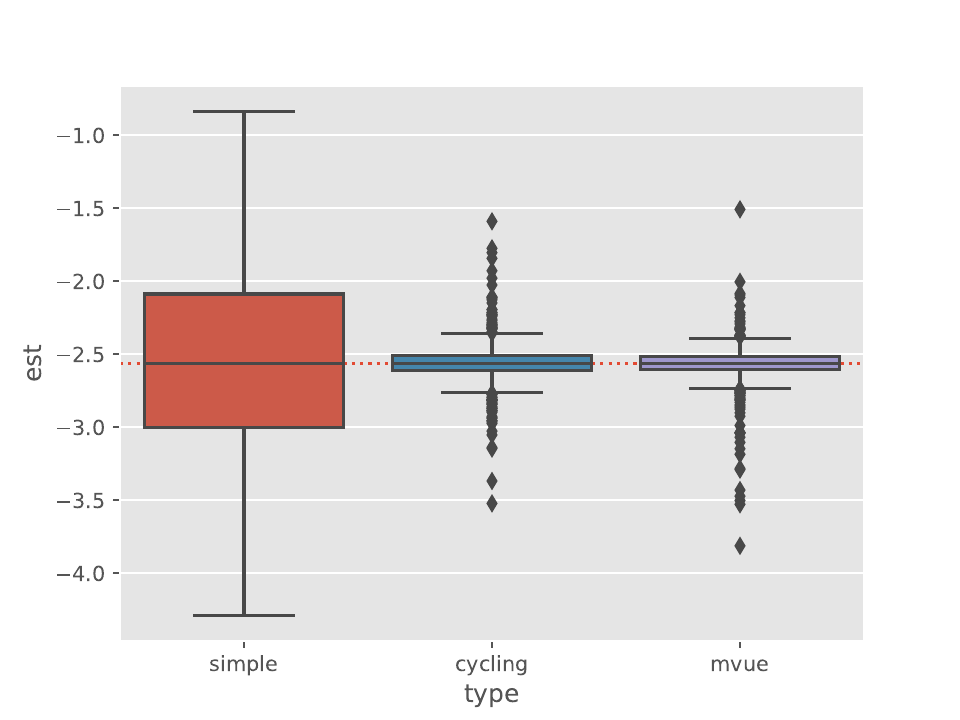}
  \caption{Box-plots of $10^3$ independent realisations of the simple, cycling,
    and MVUE estimators for the toy LVM model when $d=2$, $n=1$ and $p=10^{-2}$.
    The true value is indicated by the red dashed horizontal line.
  \label{fig:boxplots_toy_mvue}}
\end{figure}

%%% Local Variables:
%%% mode: latex
%%% TeX-master: "paper.tex"
%%% End:
\section{Additional details for the numerical experiments}

\subsection{Toy LVM: MLMC estimators}
\label{app:toy_lvm}

For a given $y\in\{y_1,\ldots,y_n\}$ and a given $\theta$, the exact expressions
for the moments of $X_j=p(y|Z_j,\theta)$ and $\beta^2$ are
\begin{align*}
  m & = p(y |\theta)= \N(y;\theta \textsf{1}_d, 2\textsf{Id}_d),\\
  \sigma^2 & = (4\pi)^{-d/2}\N(y;\theta \textsf{1}_d, \frac{3}{2} \textsf{Id}_d)
             - \N(y;\theta \textsf{1}_d, 2 \textsf{Id}_d)^2\\
\beta^2 & = 1-\left(\frac{4}{3}\right)^{-d/2}\exp\left(-\frac{\Vert
          y-\theta\textsf{1}_d\Vert^2}{6}\right).
          %\label{eq:toylvm-beta}
\end{align*}
The maximum likelihood estimator is given by $\theta^\star = (dn)^{-1}\sum_{i=1}^n \textsf{1}_d^T y_i$.

Under the assumption that $\widetilde{R}\sim \Geomp$, the MLMC estimators have finite variance and finite expected cost whenever $0.5<p<3/4$ \citep[Theorem 2 and 3]{shi2021multilevel}. However the second moment of the cost is infinite. To see this consider the second moment of the cost
\begin{align*}
\sum_{k=0}^\infty  2^{2(j+1+k)}\pr(\widetilde{R}=k) = \sum_{k=0}^\infty  2^{2(j+1+k)}p(1-p)^{k} = \frac{4^{j+1}p}{1-4(1-p)},\quad\textrm{provided } p>3/4.
\end{align*}
Hence, it is not possible to guarantee that MLMC estimators have finite expectation and cost with finite variance at the same time.

\subsection{Toy LVM: High dimensions}

\begin{figure}
\centering
\begin{tikzpicture}[every node/.append style={font=\normalsize}]
\node (img1) {\includegraphics[width = 0.75\textwidth]{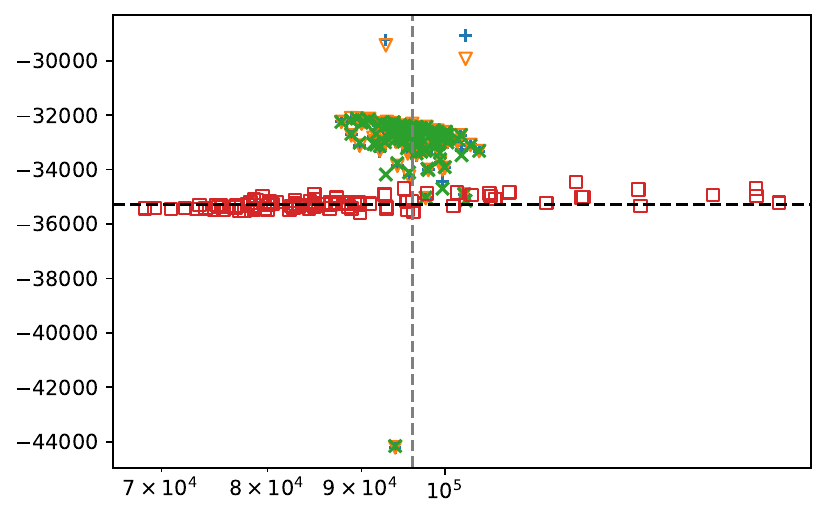}};
\node[below=of img1, node distance = 0, yshift = 1.2cm] {Cost};
\node[left=of img1, node distance = 0, rotate = 90, anchor = center, yshift = -0.8cm] {$\sum_{i=1}^n \log p(y_i|\theta)$};
\node[below=of img1, node distance = 0, yshift = 0.5cm, xshift = 0cm] (img9) {\includegraphics[width = 0.8\textwidth]{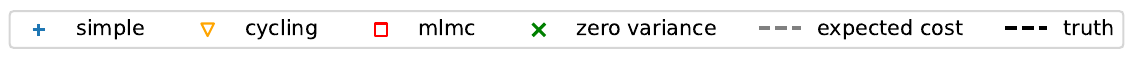}};
\end{tikzpicture}
\caption{Estimates of the marginal log-likelihood for a full data set of size $n=1000$ plotted against computational cost (measured in number of samples $Z_i$ drawn) for an expected cost $E[R] = 96$ samples per data point and dimension $d=20$. The vertical dashed line denotes the expected cost while the horizontal one denotes the true value of $\sum_{i=1}^n \log p(y_i|\theta)$.}
\label{fig:lvm_hd}
\end{figure}

Figure~\ref{fig:lvm_hd} shows the results obtained with simple and cycling estimators, MLMC in the case $d=20$ with $p=1/97$ which gives $\Exp[R]=96$. The cost of the simple and cycling estimators is stable but the estimate itself is not.

To further probe the cause of the large variance of the Taylor-based sum estimators, we separate the contribution to the variance due to the random truncation of the Taylor series, given by~\eqref{eq:varE}, from that due to the variance of the inputs $X_1, \dots, X_r$ given by~\eqref{eq:Evar}.
To find this contribution to the variance, we replace the inputs $X_j = p(y|Z_j, \theta)$ with inputs with zero variance, i.e $X_j \equiv p(y|\theta)$. In this case, simple and cycling estimators coincide and~\eqref{eq:Evar} is zero.

As can be seen from Figure~\ref{fig:lvm_hd}, the estimates obtained when $X_j = p(y|\theta)$ overlap with the estimates with $X_j=p(y|Z_j, \theta)$, which implies that the poor performance of the Taylor-based sum estimators is attributable as much to the random truncation of the Taylor series as it is to the large variance of the $X$'s.
In particular, as $d$ increases, the expansion point $x_0$ is no-longer close to $m$, and so $\beta^2$ approaches 1.
With our choice of $p$, ~\eqref{eq:varE} is not finite.
In order to guarantee finite variance for all $i=1,\dots, n$ we would need to set $\Exp[R]$ of the order $10^8$.

Similar considerations apply to MLMC, in fact, for this example we have $\textrm{KL}(p(\cdot|y, \theta), q(\cdot; y)) = \frac{d}{2}\left(\log \frac{4}{3}-\frac{1}{4}\right) \to \infty$ as $d\to \infty$, hence the importance weights get more and more degenerate as $d$ increases.

\subsection{Independent component analysis}
\label{app:ica}
Recall that $\theta=(A, \sigma^2)$,
% The model is a censored logistic model under which $z_j = b_j\zeta_j$ with
% $b_j\sim \textrm{Bernoulli}(\alpha)$ and $\zeta_j \sim \textrm{Logistic}(1/2)$,
% we assume that $\alpha$ is known and fixed. In particular,
\begin{align*}
p(z|\theta) & =p(z) = \prod_{j=1}^{d_z} \left\{\alpha\textrm{Logistic}(z_j; 1/2) + (1-\alpha) \delta_0(z_j)\right\},\\
p(y|z,\theta)&= \N(y; Az, \sigma^2\textsf{Id}_{d_y}).
\end{align*}

Given that the prior does not depend on $\theta$, we can write
\begin{align*}
	\nabla_{\theta} p(y|\theta) & = \int \nabla_{\theta} \log p(y|z,\theta)p(y|z,\theta) p(z) \rmd z,
\end{align*}
where $\nabla_{\theta}\log p(y|z, \theta)$ may be decomposed into
\begin{equation*}
\nabla_\sigma\log p(y|z, \theta) = \frac{1}{\sigma^3}\Vert y -Az\Vert^2-\frac{d_y}{\sigma}, \qquad
\nabla_A\log p(y|z, \theta) = \frac{1}{\sigma^2}( y-Az)z^T.
\end{equation*}

We select the following proposal distribution, as it is proportional to the
likelihood $p(y|z,\theta)$:
\begin{align}
	\label{eq:proposal_ica}
	q(z; y, \theta) = \N(z; (A^TA)^{-1}A^Ty, \sigma^2(A^TA)^{-1}),
\end{align}
where again $\theta=(A, \sigma^2)$. The corresponding importance weights equal
\begin{align*}
	w(z; y, \theta):=\frac{(2\pi\sigma^2)^{(d_z-d_y)/2}}{\textrm{det}(A^TA)^{1/2}}\exp\left(-\frac{\Vert y\Vert^2 - y^TA(A^TA)^{-1}A^Ty}{2\sigma^2}\right)p(z).
\end{align*}

For the Taylor-based estimators, the $X_i$'s are obtained by sampling $Z_i$ from $q$ in~\eqref{eq:proposal_ica} and setting $X_i = w(Z_i; y, \theta)$. To obtain the unbiased estimators of $\nabla_{\theta}p(y|\theta)$ we set $G_i = w(Z_i; y, \theta)\nabla_{\theta}\log p(y|Z_i,\theta)$, with $Z_i$ sampled from $q$.

An alternative and simpler proposal is given by $q(z) = p(z)$.
However, this proposal leads to $(X_i, G_i)$'s with high variance which require large $p$ to ensure finite variance of the Taylor-based gradient estimators. On the contrary, with~\eqref{eq:proposal_ica} it is sufficient to take $p\in(0,0.9)$ to guarantee finite variance, and thus our choice of $p=1/n_0=0.1$ is valid.
Similar considerations apply to MLMC with proposal $q(z) = p(z)$, the estimates of the gradient are unstable due to the high variability of the weights.

For SAEM, we draw the initial value of $\sigma$ from a Uniform distribution over the interval $[0, 1]$ and the initial distribution of $A=(a_1, a_2)$ from a Uniform distribution over $[0, 1]^{512}$.
We select the learning rate for SAEM to be $\Delta_t = 1/t$, with $t$ denoting the iteration number, which satisfies the conditions in \cite{delyon1999convergence} to guarantee convergence.
The MCMC algorithm targeting the posterior $p(z|y, \theta)$ is the Metropolis-within-Gibbs algorithm described in \citet[Section 3.3]{allassonniere2012stochastic}.

Given the high dimension of the parameter space, gradient-based methods struggle if the starting point of the iteration is far from the optimum. Hence, we initialise all SGD algorithms at the value of $\theta$ obtained after one iteration of SAEM \citep{polyak1992acceleration}.
We use two different learning rates for $\sigma$ and $A$, since the gradients of the two parameters are of different order of magnitude. We set $\Delta_\sigma = 10^{-7}$ and $\Delta_A = 10^{-4}$. This is in line with the suggestion of \citet[Section 5.1]{beatson2019efficient}.
We stop all algorithms after 5000 iterations to guarantee convergence of all algorithms.

At each iteration of SGD, we set $x_0$ using $n_0=10$ pilot runs and obtain the corresponding $\beta^2$ as described in Section~\ref{subsec:budget}.
If $1/n_0 \in(0, 1-\beta^2)$, we set $p=1/n_0$, i.e. $\E[R]=n_0$; otherwise we set $p=(1-\beta^2)/2$.
For MLMC we adjust $j, p$ so that $\Exp\left[2^{j+1+\widetilde{R}}\right]\approx n_0$.

For SGD we use mini-batches of size $2$, while for SAEM we scan the whole data
set and, as suggested in \citet[Section 3.3]{allassonniere2012stochastic},
randomly update one component of the latent vector $z$ for each data point $y$.
This can be seen as a form of mini-batch sampling as noticed in~\cite{kuhn2020properties}.

The $\mse$ for the parameters $(\sigma, A)$ is defined as $\mse(\sigma) := \Exp\left[(\hat{\sigma}-\sigma)^2\right]$,
\begin{align*}
\mse(A) := \Exp\left[\frac{1}{d_y}\sum_{i, j=1}^{d_y^{1/2}}((\hat{a}_1-a_1)_{ij})^2 + \frac{1}{d_y}\sum_{i, j=1}^{d_y^{1/2}}((\hat{a}_2-a_2)_{ij})^2\right],
\end{align*}
where $\hat{\sigma}, \hat{a}_1, \hat{a}_2$ are the estimates obtained with each method.

To assess convergence of SAEM and SGD we check the value of $\sigma$ and of the trace of $a_1$ and $a_2$ over iterations (Figure~\ref{fig:ica_convergence}).

\begin{figure}
\centering
\begin{tikzpicture}[every node/.append style={font=\normalsize}]
\node (img1) {\includegraphics[width = 0.3\textwidth]{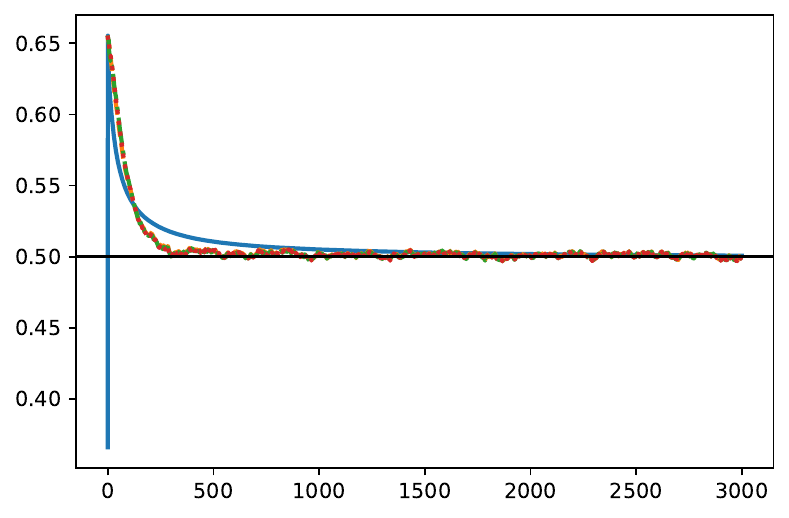}};
\node[right=of img1, node distance = 0, xshift = -0.9cm] (img2) {\includegraphics[width = 0.3\textwidth]{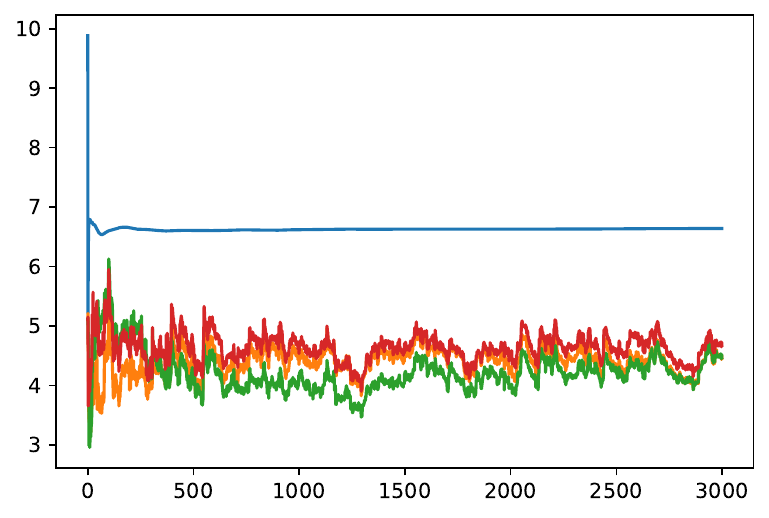}};
\node[right=of img2, node distance = 0, xshift = -0.9cm] (img3) {\includegraphics[width = 0.3\textwidth]{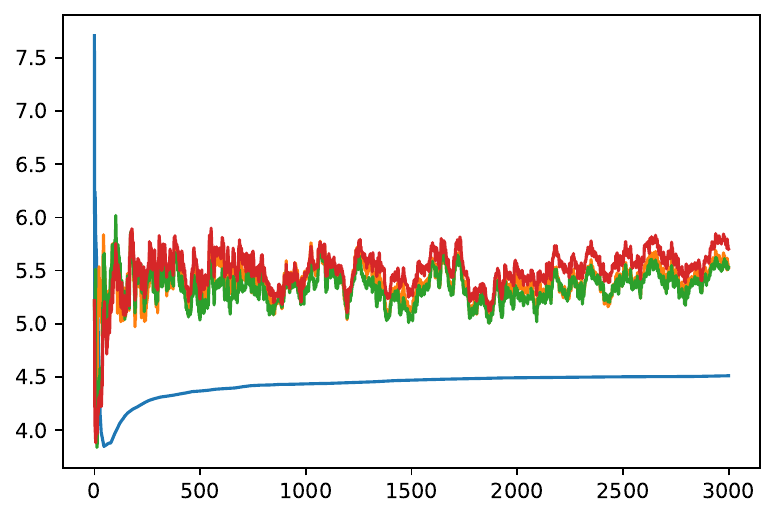}};
\node[above=of img1, node distance = 0, yshift = -1.2cm] {$\sigma$};
\node[above=of img3, node distance = 0, yshift = -1.2cm] {$\textrm{trace}(a_2)$};
\node[above=of img2, node distance = 0, yshift = -1.2cm] {$\textrm{trace}(a_1)$};
\node[below=of img1, node distance = 0, yshift = 1.2cm] {Iteration};
\node[below=of img2, node distance = 0, yshift = 1.2cm] {Iteration};
\node[below=of img3, node distance = 0, yshift = 1.2cm] {Iteration};
\node[below=of img1, node distance = 0, yshift = 0.5cm, xshift = 4cm] (img9) {\includegraphics[width = 0.6\textwidth]{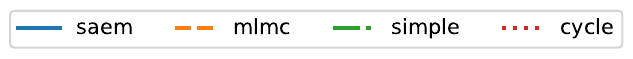}};
\end{tikzpicture}
\caption{Converge of SAEM and SGD for the ICA model.}
\label{fig:ica_convergence}
\end{figure}

\subsection{Exponential random graphs}
\label{app:erg}

For a given $\theta$, we obtain an unbiased estimate of $\mathcal{Z}(\theta)$ by
running a waste-free SMC sampler \citep{dau2022waste} based on adaptive
tempering, i.e. the intermediate distributions are $\pi_t(y) \propto
\exp\left\{\lambda_t s(y)\right\}$, with $0=\lambda_0<\ldots<\lambda_T=\theta$,
and at beginning of iteration $t$, $\lambda_t$ is set so that ESS (effective
sample size) equals $N/2$, where $N$ is the number of particles. Since we use a
waste-free sampler, $N=M\times P$, where $M$ is the number of resampled
particles, here $M=100$, and $P$ is the length of the simulated chains, here
$P=500$. For the low variance setting, we use the same approach except
$M=10^3$. Note that the initial distribution $\pi_0$ is easy to sample from, as
it is the uniform distribution over the $\binom{n}{2}$ possible networks. The
Markov kernel used at time $t$ to move the particles is a Metropolis kernel
based on the following proposal: one edge $(i, j)$ is chosen at random, and
$y_{ij}$ is flipped. We use the Python package \texttt{particles}
(\url{https://github.com/nchopin/particles}) to implement this SMC sampler.

The proposal $q$ is Gaussian with mean $\mu = (-3.27, 0.6)$ and covariance matrix
\begin{align*}
\Sigma = \begin{pmatrix}
 0.2016 & -0.0432\\
-0.0432 &  0.0108
\end{pmatrix}.
\end{align*}
This proposal was obtained through a small number of pilot runs.

The $\theta_i$'s are generated using a scrambled Sobol' sequence, 
see~\cite{OwenScrambledSobol}.

\subsection{Exponential random graphs: Comparison with the exchange algorithm}

We compare our approach with one of the state of the art for Bayesian inference on ERG models, i.e. the exchange algorithm of~\cite{caimo2011bayesian} implemented in the \texttt{bergm} package in \texttt{R}. We use the same setup described in Section 4.5 therein. The shape of the posterior distribution seems to be in good agreement with the obtained in our experiments (see Figure~\ref{fig:ergm_posterior_appendix}).

\begin{figure}
	\centering
	\begin{tikzpicture}[every node/.append style={font=\normalsize}]
		\node (img1) {\includegraphics[width = 0.3\textwidth]{ergm_posterior_simple.pdf}};
		\node[right=of img1, node distance = 0, xshift = -0.7cm] (img2) {\includegraphics[width = 0.3\textwidth]{ergm_posterior_cycle.pdf}};
		\node[right=of img2, node distance = 0, xshift = -0.7cm] (img3) {\includegraphics[width = 0.3\textwidth]{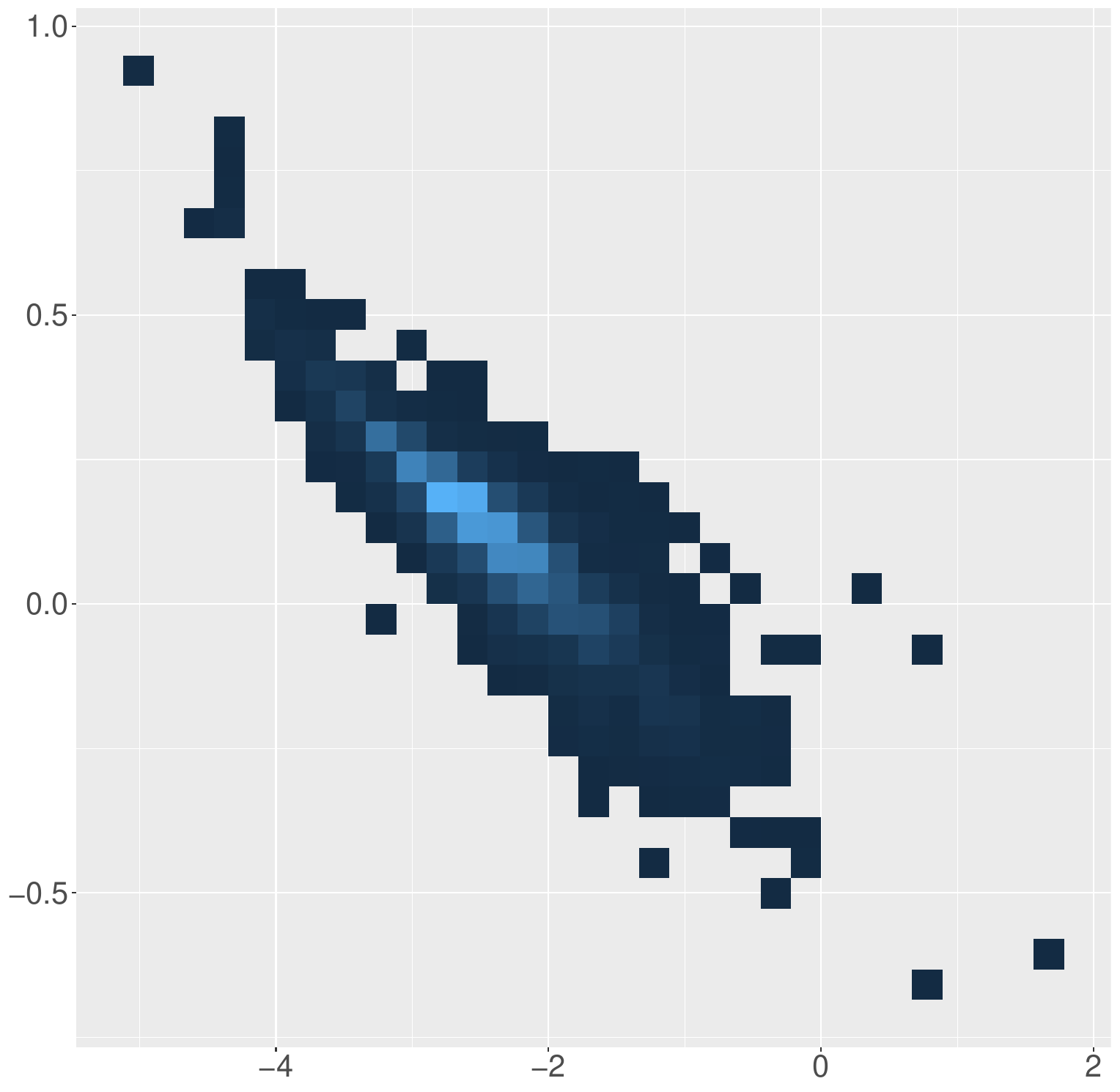}};
		\node[above=of img1, node distance = 0, yshift = -1.2cm] {Simple};
		\node[above=of img2, node distance = 0, yshift = -1.2cm] {Cycling};
		\node[above=of img3, node distance = 0, yshift = -1.2cm] {Exchange algorithm};
	\end{tikzpicture}
	\caption{Bivariate weighted histograms approximating the posterior distributions obtained with the simple estimator, the cycling estimator and the \texttt{bergm} package implementing the exchange algorithm of \cite{caimo2011bayesian}.
	}
	\label{fig:ergm_posterior_appendix}
\end{figure}

As we now discuss, there are good reasons in favour of our method. First, the exchange algorithm of \cite{caimo2011bayesian} requires running nested MCMC chains where the inner one samples from the likelihood $p(y|\theta)$; in addition, in the version of the algorithm implemented in \texttt{bergm}, population Monte Carlo is employed to improve mixing. As a consequence, the resulting method is rather expensive and, due to the serial nature of MCMC, harder to parallelise than our proposed approach, in which each $\mathcal{Z}(\theta)$ is obtained from different independent runs of SMC which can run in parallel.

Secondly, a fundamental task in the context of network model is the ability to select the model that best fits the data. This type of inference requires estimation of the model evidence $p(y)$.
To the best of our knowledge, given the output of the exchange algorithm one can compute the model evidence \[p(y) = p(\theta)\frac{\exp\{\theta^Ts(y)\}}{\mathcal{Z}(\theta)}\frac{1}{p(\theta|y)}\] in two ways:
\begin{enumerate}
\item An approach similar to ours in which $\log \mathcal{Z}(\theta)$ is estimated via path sampling, and $p(\theta|y)$ via the exchange algorithm and a kernel density estimate (KDE) \cite[Section 4]{caimo2013bayesian}.
\item Pseudo-likelihood approaches in which the intractable likelihood is replaced by a tractable approximation \citep{bouranis2018bayesian}.
\end{enumerate}

The first approach requires running the exchange algorithm to sample from $p(\theta|y)$, from which a KDE is formed. Path sampling with auxiliary MCMC chains is used to sample from $p(y|\theta)$. 
This approach is clearly only suitable for a small number of parameters due to the instability of the KDE in high dimensions. Also, as mentioned by the authors, this strategy incurs two sources of error when estimating $\log \mathcal{Z}(\theta)$: the Monte Carlo error to approximate the expectation w.r.t. $p(y|\theta)$, and the error due to discretising the path integral. The resulting approximation of the model evidence is naturally biased.
Our approach allows computation of the model evidence directly from the samples used to estimate the posterior, without requiring additional samples from $p(y|\theta)$ and without the use of KDE. It also provides unbiased estimators which can be used within pseudo-marginal methods for model selection.

The second approach also cannot provide unbiased estimators of the model evidence and the quality of the approximation largely depends on the pseudo-likelihood used. To obtain a likelihood approximation close to the true one, \cite{bouranis2018bayesian} propose several improvements. We tested the evidence estimation in the \texttt{bergm} package on our toy example and found that the Monte Carlo MLE estimator needed to implement their proposed adjustments to the pseudo-likelihood could not converge since the underlying MCMC was not mixing.

\subsection{Comparison with biased estimators}
To further motivate the need for unbiased estimators, we compare the MLMC estimator of \cite{shi2021multilevel} and our simple and cycling estimators with their biased counterparts.

\subsubsection{Toy LVM}
We first consider the toy LVM and two biased estimators:
	\begin{itemize}
		\item the naive estimator $f(\bar{X})$ where $\bar{X}$ is the mean of the $X_i$'s;
		\item the IWAE estimator $I_j$ given below~\eqref{eq:mlmc_mll};
	\end{itemize}
and their unbiased counterparts: simple/cycling and MLMC, respectively (see Figure~\ref{fig:convergence_iwae} and Figure~\ref{fig:bias_boxplot}).

Figure~\ref{fig:convergence_iwae} shows the behaviour of the relative mean absolute deviation $\E [| f(m)-\hat{f}|] /|f(m)|$ and bias as the expected computational cost increases for the toy LVM model with $d=2$. 

\begin{figure}
	\centering
	\begin{tikzpicture}[every node/.append style={font=\normalsize}]
		\node (img1) {\includegraphics[width = 0.45\textwidth]{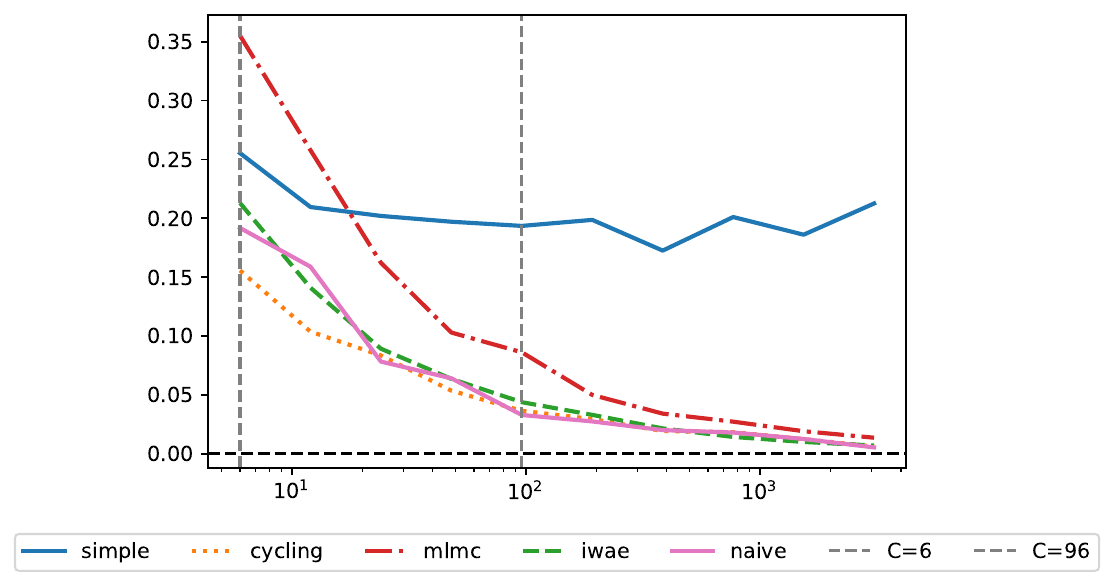}};
		\node[below=of img1, node distance = 0, yshift = 1.2cm] {cost};
		\node[left=of img1, node distance = 0, rotate = 90, anchor = center, yshift = -0.8cm] {$\E[|f(m)-\hat{f}|]/|f(m)|$};
		\node[right=of img1, node distance = 0, xshift = -0.5cm] (img2) {\includegraphics[width = 0.45\textwidth]{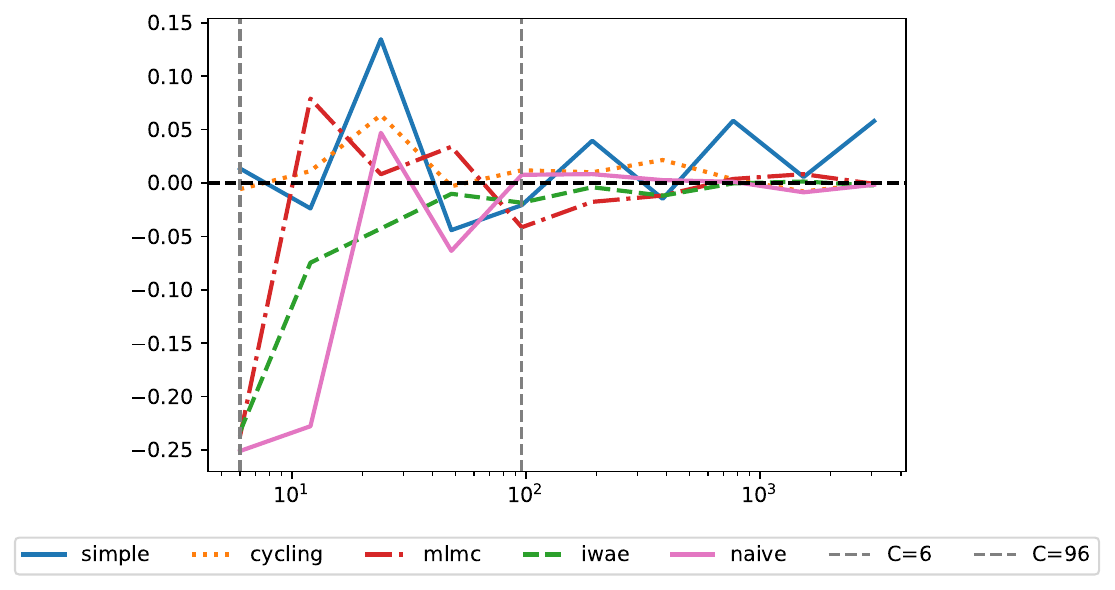}};
		\node[below=of img2, node distance = 0, yshift = 1.2cm] {cost};
		\node[left=of img2, node distance = 0, rotate = 90, anchor = center, yshift = -0.8cm] {$f(m)-\E[\hat{f}]$};
	\end{tikzpicture}
	\caption{Mean absolute deviation and bias over 100 repetitions against cost for 2 biased estimators (IWAE, naive) and 3 unbiased estimators (MLMC, simple, cycling). The vertical dashed lines are the expected costs used in Figure 2 of the revised/original manuscript, namely $E[R]=6$ and $E[R]=96$.}
	\label{fig:convergence_iwae}
\end{figure}

For large enough cost all estimators except the simple one have very small relative mean absolute deviation. This is because the variance of the simple estimator does not converge to zero as the cost increases, as shown in Example~\ref{ex:simple_var}. However, for small cost the average (out of 100 replicates) bias of both IWAE and the naive estimator is considerably larger than that of the cycling estimator (see Figure~\ref{fig:convergence_iwae}).

\begin{figure}
	\centering
	\begin{tikzpicture}[every node/.append style={font=\normalsize}]
		\node (img1) {\includegraphics[width = 0.45\textwidth]{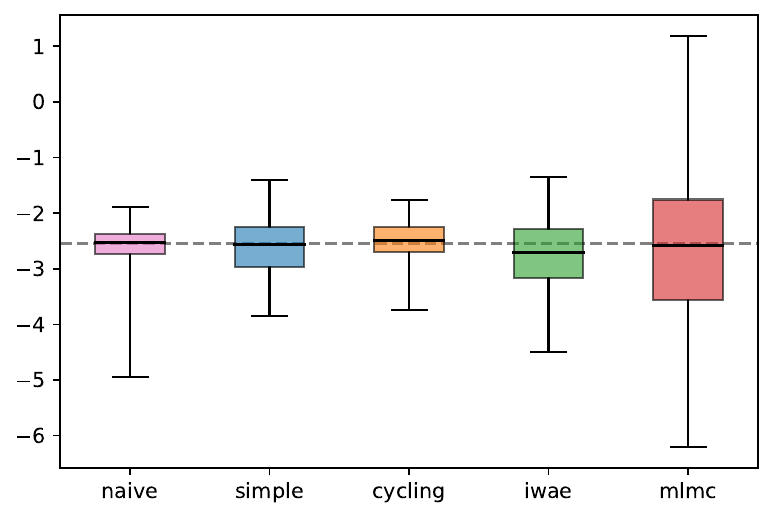}};
		\node[left=of img1, node distance = 0, rotate = 90, anchor = center, yshift = -0.8cm] {$\hat{f}$};
		\node[right=of img1, node distance = 0, xshift = -0.5cm] (img2) {\includegraphics[width = 0.45\textwidth]{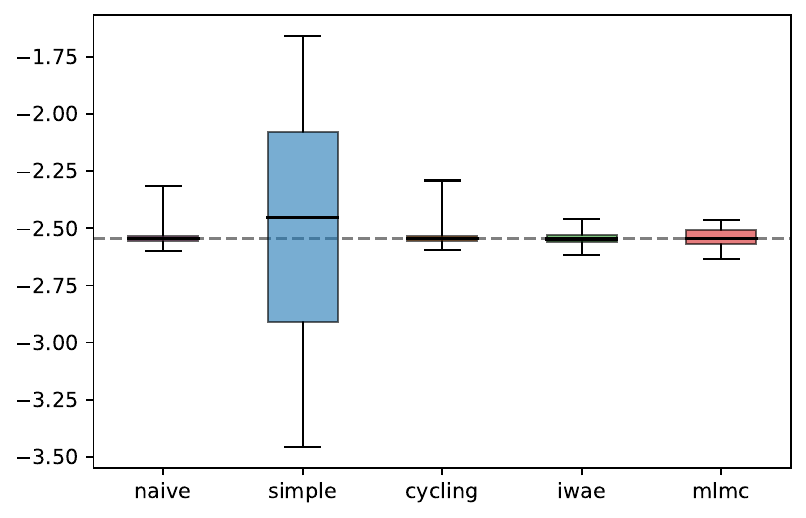}};
		\node[left=of img2, node distance = 0, rotate = 90, anchor = center, yshift = -0.8cm] {$\hat{f}$};
	\end{tikzpicture}
	\caption{Distribution of the estimates of $f(m)$ for small ($E[R]=6$) and high ($E[R]=96$) cost. The dashed horizontal line is the true value.}
	\label{fig:bias_boxplot}
\end{figure}

Figure~\ref{fig:bias_boxplot} shows the distribution of the 100 replicates for the smallest cost (left-most vertical dashed line) and the highest one. We can see that cycling provides a provably unbiased estimator without increasing the variability in the original biased naive estimator. When the cost is large it is also clear that cycling outperforms simple since the variability of the latter does not decrease to zero as cost increases.

\subsubsection{Independent component analysis}

We also compare the plug in estimator $f(\bar{X})$ and the IWAE one for the ICA model using the same setup of Table~\ref{tab:ica_appendix} where we set the number of samples for IWAE and the plug in estimator to $10$.

\begin{table}
\centering
\begin{tabular}{l|ccc}
Method & $\mse(\sigma)$ & $\mse(A)$ & runtime (s) \\
\hline\noalign{\smallskip}
SAEM & $7.00\cdot 10^{-7}$ & 0.22 & 33 \\
MLMC & $2.95\cdot 10^{-6}$ & 0.29 & 11\\
simple & $2.97\cdot 10^{-6}$ & 0.27 & 10 \\
cycling & $6.80\cdot 10^{-5}$ & 0.24 & 14 \\
IWAE & $3.55\cdot 10^{-6}$ & 0.30 & 7\\
Plug in & $3.62\cdot 10^{-6}$ & 0.27 & 7\\
\end{tabular}
\caption{Comparison of reconstruction accuracy and cost for the censored logistic ICA model. The results are averaged over 100 repetitions.}
\label{tab:ica_appendix}
\end{table} 

\subsubsection{Exponential random graphs}

To further showcase the strengths of the cycling estimator we also consider the ERG model. In particular, the setup of Table 2 for $\E[R]=10$. We compare the work-normalised-variance and the relative mean squared error 
($\mse$) for simple, cycling and the plug in estimator with same number of replicates $R$.
The relative $\mse$ is obtained by considering as true value of $1/\mathcal{Z}(\theta)$ that obtained by taking the reciprocal of 30,000 unbiased values of $\mathcal{Z}(\theta)$ obtained in the low variance regime (see Table~\ref{tab:ergm_appendix}). It is evident that in both a low variance regime and in a medium one cycling is preferable to the simple plug in estimator. In the low variance regime even the simple estimator is preferable to the plug in one.

\begin{table}
	\centering
	\begin{tabular}{l|cc|cc|cc}
&\multicolumn{2}{c}{Simple}   & \multicolumn{2}{c}{Cycling}                                                                                                    & \multicolumn{2}{c}{Plug in}                                                                                                                \\
		\hline\noalign{\smallskip}
&		              $\wnv$                       & $\mse$             & $\wnv$                       & $\mse$         & $\wnv$ & $\mse$ \\
		\hline\noalign{\smallskip}
		Moderate & 61 & 0.32 & 24 & 0.15 & 26 & 0.29\\
		Low & 83 &0.04  & 26 & 0.01& 105 &0.09  \\
	\end{tabular}

	\caption{Work-normalised variance and mean squared error
		for $10^3$ replicates of $\widehat{\mathcal{Z}(\theta)^{-1}}$ for
		the simple, the cycling estimator and the plug in estimator.}
	\label{tab:ergm_appendix}
\end{table}
%%% Local Variables:
%%% mode: latex
%%% TeX-master: "paper.tex"
%%% End:

\end{document}